\newtheorem{theorem}{Theorem}%[section]
\newtheorem{lemma}[theorem]{Lemma}
\newtheorem{definition}{Definition}
\newtheorem{remark}[theorem]{Remark}
\newtheorem{fact}{Fact}%[section]
\newtheorem{assumption}{Assumption}
\newcommand{\eps}{\varepsilon}
\newcommand{\MCS}{\textsc{MaxCut}}
\newcommand{\MAS}{\textsc{Mas}}
\newcommand{\BTW}{\textsc{Btw}}
\newcommand{\UGC}{\textsc{Ugc}}
\newcommand{\ML}{\textsc{Must-Link}}
\newcommand{\CL}{\textsc{Cannot-Link}}
\newcommand{\NSEP}{\textsc{4-Non-Separatedness}}
\newcommand{\SEP}{\textsc{4-Separatedness}}
\newcommand{\nBTW}{\textsc{non-Btw}}
\newcommand{\bS}{\bar{S}}
\newcommand{\EE}{\mathbb{E}}
\newcommand{\QF}{\mathcal{Q_F}}
\newcommand{\QD}{\mathcal{Q_D}}
\newcommand{\TF}{\mathcal{T_F}}
\newcommand{\TD}{\mathcal{T_D}}
\newcommand{\QA}{\mathcal{Q_A}}
\newcommand{\TA}{\mathcal{T_A}}
\newcommand{\MC}{\textsc{MaxCut}}
\newcommand{\FAS}{\textsc{Fas}}
\newcommand{\ALG}{\texttt{ALG}}
\newcommand{\SDP}{\texttt{SDP}}
\newcommand{\OPT}{\texttt{OPT}}
\newcommand{\sgn}{\texttt{sgn}}
\title{Maximizing Agreements for Ranking, Clustering\\ and Hierarchical Clustering via MAX-CUT}
\author{
  Vaggos Chatziafratis\thanks{\texttt{vaggos@cs.stanford.edu} and \texttt{vaggos@google.com}}\\
  Google Research NY
  \and
  Mohammad Mahdian\thanks{\texttt{mahdian@google.com}}\\
  Google Research NY
  \and
  Sara Ahmadian\thanks{\texttt{sahmadian@google.com}}\\
  Google Research NY
}
\begin{document}

\maketitle

\begin{abstract}
    In this paper, we study a number of well-known combinatorial optimization problems that fit in the following paradigm: the input is a collection of (potentially inconsistent) local relationships between the elements of a ground set (e.g., pairwise comparisons, similar/dissimilar pairs, or ancestry structure of triples of points), and the goal is to aggregate this information into a global structure (e.g., a ranking, a clustering, or a hierarchical clustering) in a way that maximizes agreement with the input. Well-studied problems such as rank aggregation, correlation clustering, and hierarchical clustering with triplet constraints fall in this class of problems. 
We study these problems on stochastic instances with a hidden embedded ground truth solution. Our main algorithmic contribution is a unified technique that uses the maximum cut problem in graphs to approximately solve these problems. Using this technique, we can often get approximation guarantees in the stochastic setting that are better than the known worst case inapproximability bounds for the corresponding problem. On the negative side, we improve the worst case inapproximability bound on several hierarchical clustering formulations through a reduction to related ranking problems.
\end{abstract}
\section{Introduction}

In many learning/optimization problems, the input data is in the form of a number of {\em ordinal} judgements about the local relationships among a set of $n$ items. A prominent example is the problem of ranking $n$ alternatives, where the input is often pairwise comparisons between these items. For example, sports teams are often ranked by aggregating the results of matches played between pairs of teams, and election outcomes are decided by aggregating individual votes. 

Learning from comparisons has been prevalent across different domains, as humans are typically good at quickly answering {\em ordinal} questions (``which movie/restaurant/candidate do you prefer''), but often respond slowly and inaccurately to {\em cardinal} questions (``how much do you like this option''). In the psychology literature, the method of {\em paired comparisons} that has been in use since the 1920's is based on this principle (see~\cite[Chapter 7]{pairedcomp}). Moreover, modern online platforms can organically extract such ordinal preferences by observing the users (e.g., ``which movie did they first watch'', or ``did they skip a search result and click on the next one'') and later use them for improving search or recommendation rankings (see, for example,~\cite{joachims}). The same principle applies to settings other than ranking. For example, when trying to learn a clustering of $n$ items, it is easier for a human judge to answer questions of the form ``should $x$ and $y$ be in the same cluster'' than to measure the similarity of $x$ and $y$. Or, to reconstruct the evolutionary tree (also known as the phylogenetic tree) between $n$ species, biologists often start by answering questions of the form ``between three species $x, y$, and $z$, which two are evolutionarily closer''. 

At the heart of each of these examples is the non-trivial algorithmic task of reconciling potentially inconsistent judgements into a global solution. This defines a number of algorithmic problems that we study in this paper. Though seemingly unrelated, all of these problems seek to find a global structure that has the maximum number of agreements with the given collection of local ordinal relationships. As we shall see later in the paper, the problems are also linked in that we can apply a common technique (based on graph max cut) to them all. The problems, shown in  Figure~\ref{fig:examples}, fall under the three categories of ranking, clustering, and hierarchical clustering:

\begin{itemize}
\item {\bf Ranking:} The goal is to find an ordering of $n$ items. In the {\em Maximum Acyclic Subgraph (\MAS)}, the input is a number of pairwise comparisons of the form $a<b$. In {\em Betweenness}, the input is a number of triples $a|b|c$ meaning that $b$ is between $a$ and $c$ in the ordering. In {\em Non-Betweenness}, the input is a number of triples $b|ac$ meaning that $b$ is not between $a$ and $c$.
\item {\bf Clustering:} In the {\em Correlation Clustering} problem, the goal is to find a partitioning of $n$ items, and the input is a number of pairs of the form $ab$, meaning that $a$ and $b$ should be in the same cluster, and a number of pairs of the form $a|b$, meaning that $a$ and $b$ should be in different clusters.
\item {\bf Hierarchical clustering:} The goal is to find a (rooted or unrooted) tree with the set of $n$ items as its leaves. In the {\em Desired Triplets} problem, the input is a number of triplets $ab|c$, meaning that the least common ancestor of $a$ and $b$ is a descendant of the least common ancestor of $a, b$, and $c$.  In the {\em Desired Quartets} problem, the input is a number of quartets $ab|cd$, meaning that the unique path connecting $a$ and $b$ in the tree does not intersect with the unique path connecting $c$ and $d$. The {\em Forbidden Triplets} and {\em Forbidden Quartets} problems are defined similarly with the opposite requirements.
\end{itemize}

These problems come from a variety of applications: \MAS\ is a formulation of the rank aggregation problem and has many applications, e.g., in search ranking. Correlation Clustering is a central problem in unsupervised learning and data analysis~\citep{bansal2004correlation}. Hierarchical clustering problems are motivated by applications in reconstructing phylogenetic trees~\citep{felsenstein2004inferring}, and are also related to the objective-driven formulations of~\cite{dasguptaHC}, \cite{moseley2017approximation} and~\cite{cohen2019hierarchical} for hierarchical clustering. In fact, the Desired Triplets formulation described above is tightly connected with objective-based approaches for Hierarchical Clustering as can be seen in~\cite{charikar2019hierarchical,charikar2019euclidean}.
Betweenness and Non-Betweenness are motivated by applications in genome sequencing in bioinformatics~\citep{slonim1997building}. We are interested in algorithms that can provide an approximation guarantee, i.e., a provable bound on the multiplicative factor between the solution found by the algorithm and the optimal solution. 
We will consider this problem both in the worst case and under a stochastic model with an embedded ground-truth solution.

\paragraph{Main Results:} Our contribution is two-fold (see Table~\ref{tab:pos} for a summary): On the positive side, in Section~\ref{sec:pos}, under a simple stochastic model akin to the well-known {\em stochastic block model}, we are able to improve upon worst-case approximations for all problems and in some cases (e.g., for problems on rankings and hierarchies) even overcome impossibility results. Interestingly, our algorithms are all based on variants of \MC\ on graphs that can have both positive and negative weights and may also be directed. Some approaches for tree reconstruction based on \MC\ had been used in previous experimental works~\citep{snir2006using,snir2008quartets,snir2012quartet}, and in this way our work provides concrete proof for why these heuristics are reported to perform well on ``real-world'' instances.  Our natural stochastic model captures ``real-world'' instances via an embedded ground-truth from which we generate ``noisy''  constraints, similar to the  Stochastic Block Model~\citep{mossel2012stochastic} in community detection. 

%\item 
On the negative side, we obtain new hardness of approximation results for four problems on hierarchical clustering: Forbidden Triplets, Desired Triplets, Forbidden Quartets, Desired Quartets. Briefly, we may refer to them as triplets/quartets consistency problems. These are instances of Constraint Satisfaction Problems (CSP) on trees~\citep{bodirsky2010complexity,bodirsky2016complexity}, analogous to SAT formulas in complexity. Even though such problems on hierarchies have been studied for decades, the current best approximations are achieved by trivial baseline algorithms. Our hardness results give some explanation why previous approaches  were not able to obtain anything better. Our result on the Forbidden Triplets problem is tight and is the first tight hardness for CSPs on trees, extending analogous hardness results by~\cite{guruswami2011beating} from linear orderings (i.e., rankings) to trees. This is carried out in Section~\ref{sec:neg}.
    
%\end{itemize}

Our stochastic model for collecting information is the simplest form of embedded model on $n$ items, and is motivated by crowdsourcing and biological applications~\citep{vaughan2017making, kleindessner2017kernel,ghoshdastidar2019foundations,snir2012reconstructing}. We simply choose items at random and include a pairwise/triplet/quartet constraint depending on the task. For example, to generate constraints for the \MAS\ problem on rankings, let $\pi^*$ denote a ground-truth ranking (e.g., of chess players or ads to show a user). We select uniformly at random $m$ pairs of items $a_i,b_i$ and then we generate $m$ pairs $a_i<b_i$; if $a_i$ precedes $b_i$ in $\pi^*$ the constraint is included with probability $(1-\eps)$, otherwise the opposite constraint is generated. Thus, some fraction of the constraints can be erroneous. After generating $m$ (noisy) constraints in this way, our goal is to find a global solution (ranking, partition, or tree) that satisfies as many as possible. 
\begin{figure}[ht]
  \centering
  \includegraphics[width=\linewidth]{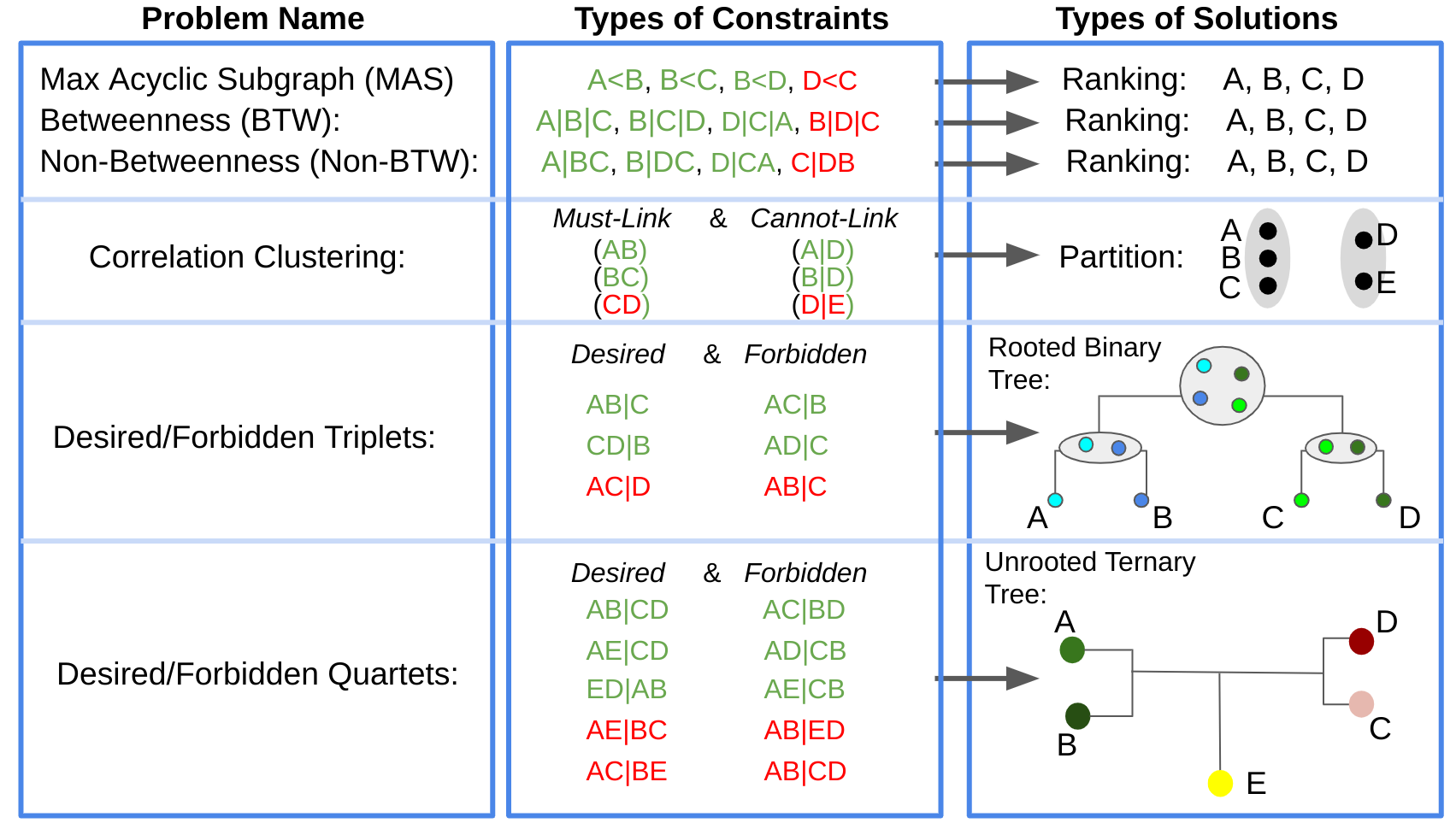}
     \caption{A schematic representation of all problems considered in the paper. The left column has the problem names, the middle the types of constraints and the right column has a candidate solution. With {\color[rgb]{0,0.6,0} green} are constraints that are correctly resolved in the given candidate solution, whereas with {\color{red} red} are those that are incorrect. For more examples, see Section~\ref{sec:problems}.}
     \label{fig:examples}
\end{figure}

\paragraph{Techniques:} Our hardness reductions for Maximum Forbidden Triplets consistency are based on mapping trees to permutations on their leaves and back, and showing that any constant factor improvement over trivial baselines would refute the Unique Games Conjecture\footnote{Khot's \UGC\ is a major open question in complexity. We will not define it here as we only use some of its consequences on ordering problems~\citep{guruswami2011beating}.} (\UGC)~\citep{khot2002power}. Regarding our \MC\ algorithm~(see Algorithm~ \ref{alg:decoding}), it is based on \MC\ variations on directed and undirected graphs with negative weights and is conceptually simple. Briefly, given an instance for any of the problems we consider, we map it to a graph where edges encode the underlying constraints; perhaps the most intuitive such construction is for Correlation Clustering where a ``must-link'' or ``cannot-link'' constraint between items $i,j$ is captured by a negative or positive edge $(i,j)$ respectively. Then, we show how large (positive) cuts in this graph yield partitions that satisfy many of the constraints. The existence of a large cut can be guaranteed by analyzing our stochastic model and so an approximate \MC\ algorithm can yield improvements over previous results.  An interesting ingredient that we need for the case of \MAS, is how to approximate the \MC\ problem on \emph{directed} graphs with both positive and negative weights which, to the best of our knowledge, hadn’t been analyzed before.

More broadly,  we justify theoretically why prior experimental heuristics work and we extend them to work for new problems with provable approximation guarantees. Our work also presents the first case of a CSP on trees that is approximation resistant; recall that many important CSPs, including Max3SAT, are approximation resistant, i.e., it is NP-hard to approximate them better than a random assignment. This echoes the striking result by~\cite{haastad2001some} on approximation resistance of boolean CSPs to CSPs on trees and shows why no algorithmic improvement had been made in the worst-case, despite significant efforts~\citep{byrka2010new,jiang2001polynomial, bryant1997building, he2006inferring, steel1992complexity}. 

%In fact,  our results provide strong evidence leading us to believe the following conjecture:

%\begin{conjecture*}
%For Maximum Triplets Consistency and Maximum Quartet Consistency problems, one cannot beat, in polynomial time, the trivial approximation factor obtained by a random tree.
%\end{conjecture*}

%We resolve the conjecture 

%  Our conjecture 

%Finally, byproducts of our work include a PTAS for \emph{dense} instances of recent maximization formulations of Hierarchical Clustering objectives~\cite{joshNIPS,cohen2019hierarchical} and a $O(\log n\log \log n)$-approximation for Minimum Triplets Inconsistency (\RTI) on caterpillar trees (see Section~\ref{sec:rti}). We mention here that for the general case of the \RTI\ problem, known hardness results~\cite{chester2013resolving} prevent us unfortunately from getting anything better than $n$-approximation achieved by a variant of Minimum Cut~\cite{jansson2003consensus}. 

\begin{table}[!ht]  
  \centering
  \caption{\label{tab:result} Shown in bold are our improved hardness (column ``Hardness'') and approximations under our stochastic model (column ``Stochastic''). Column ``Approx.'' has prior approximation ratios. Also see Section~\ref{sec:pos} and Appendix~\ref{app:beating} for the dependence on error parameter $\eps$.}\label{tab:pos}
  \vspace{0.1cm}
  \begin{tabular}{|c||c|c|c|}
    \cline{2-4}
    \multicolumn{1}{c|}{} & Approx. & Hardness & Stochastic \\ \hline 
\MAS & 1/2 & 1/2 & {\bf0.642}  \\ \hline
\BTW & 1/3 & 1/3 & {\bf 0.402} \\ \hline
\nBTW & 2/3 & 2/3 & {\bf 0.84} \\ \hline
Correl. Cl. & 0.76 & APX-hard & {\bf0.82}(*) \\ \hline
Forb. Triplet & 2/3 & \textbf{2/3 (tight)} & {\bf 0.78}(*) \\
\hline
Des. Triplet & 1/3 & \textbf{2/3} & {\bf0.64}(*)  \\ \hline
Forb. Quartet & 2/3 & \textbf{8/9 }& {\bf0.672} \\ \hline

Des. Quartet & 1/3 & \textbf{2/3} & 0.425 \\ \hline

  \end{tabular}
\end{table}

%\begin{table}[!ht] 
%  \centering
%  \caption{\label{tab:result}Improved worst-case hardness, and approximations under our sampling model for hierarchical clustering problems, shown in bold (for exact dependency on $\eps$, see Section~\ref{sec:pos} and Appendix~\ref{app:beating}).} \label{tab:neg}
%  \vspace{0.1cm}
%  \begin{tabular}{|c||c|c|c|}
%    \cline{2-4}
%    \multicolumn{1}{c|}{} & Upper & Lower & \textsc{Max-Cut} \\ \hline
%Desired Quartets & 1/3 & \textbf{2/3} & 0.425 \\ \hline
%Forbidden Quartets & 2/3 & \textbf{8/9 }& {\bf0.672} \\ \hline
%Desired Triplets & 1/3 & \textbf{2/3} & {\bf0.64}(*)  \\ \hline
%Forbidden Triplets & 2/3 & \textbf{2/3} & {\bf 0.78}(*) \\ \hline

%  \end{tabular}
  
%\end{table}

\begin{remark}
We want to point out that all our approximation results here hold with high probability as a standard concentration argument about the stochastic process guarantees that the weight of the cuts is well-concentrated around its mean (as long as the number of generated constraints $m\ge \Omega(\log n)$).
\end{remark}

\begin{remark} 
Our results for ranking and quartets hold with no assumption on the optimal solution. For the positive results (denoted with (*) in Table~\ref{tab:pos}) via \MCS\ for correlation clustering and triplets however, we need a mild balancedness assumption, roughly stating that the optimal solution contains a relatively balanced $\left(\tfrac13:\tfrac23\right)$ partition, to ensure the existence of a good cut in the ground-truth (see Appendix, Assumption~\ref{ass:triplets}). Usually, such assumptions are common in generative graph models for clustering, e.g., the Stochastic Block Model~\citep{mossel2012stochastic,abbe2015exact} and for hierarchical clustering, e.g., the Hierarchical Stochastic Block Model~\citep{lyzinski2016community,cohen2019hierarchical,ghoshdastidar2019foundations}, where we expect to see at least two large communities emerge.
\end{remark}

\section{Background and Related Work}\label{sec:problems}

As the paper discusses multiple problems on rankings, partitions and hierarchies, we devote this section in describing the multitude of problems. A familiar reader can skip this section and proceed to Section~\ref{sec:pos}.

There are 3 categories of problems we study here, depending on the type of the output:  ranking (also called a \emph{permutation} or a \emph{leaf ordering} in biology~\citep{bar2001fast}), clustering (partitioning of the data points) and hierarchical clustering (also called \textit{phylogenetic tree}). There has been significant amounts of work on each of these tasks, that we only partially cover here as we go over our problems and results.

\subsection{Optimization Problems and Types of Constraints}
 In all problems, we are given $m$ constraints and we want to maximize the number of constraints satisfied by our output, whether it be a ranking, a partition or a hierarchy. We describe below the types of different constraints (see also Figure~\ref{fig:examples}):

\paragraph{Ranking (i.e., a permutation or leaf ordering):}  Given $n$ labels $\{1,2,\ldots, n\}$, we want to find a permutation that maximizes the number of satisfied constraints  of the following form:
    \begin{itemize}

    \item \textbf{Pairwise comparisons}: A constraint here is of the form ``$a<b$'', indicating that in the output permutation, item $a$ should precede $b$. If this information is encoded as a directed graph $G$ with arcs $a\to b$, this gives rise to the Maximum Acyclic Subgraph (\MAS) or Feedback Arc Set (\FAS), two fundamental problems in computer science~\citep{karp1972reducibility}.
    
%\vspace{-2mm}
    \item \textbf{Betweenness (BTW) and Non-Betweenness (Non-BTW) constraints}: In the \BTW\ problem~\citep{opatrny1979total, chor1998geometric,makarychev2012simple}, we are given relative ordering constraints of the form $a|b|c$ indicating ``$b$ should be between $a$ and $c$''. This allows for $abc$ or $cba$ out of the 6 possible orderings for the 3 labels. As the name suggests, \nBTW\ is the complement of \BTW, where a constraint $bc|a$ (equivalently $a|bc$) indicates that in the output permutation ``$a$ should \emph{not} lie between $b$ and $c$''. This allows for 4 valid relative orderings $abc,acb,bca,cba$. Generally, these are the two most common examples of ordering Constraint Satisfaction Problems (ordering CSPs) of arity 3 and are mainly motivated by applications in bioinformatics~\citep{slonim1997building}. They have also played a major role in complexity~\citep{guruswami2011beating, austrin2013np}. 
\end{itemize}
%\vspace{-2mm}

Just to give a sense of the approximability of these problems in the worst-case, the current best constant factor is a $\tfrac12$-approximation for \MAS, a $\tfrac13$-approximation for \BTW, and a $\tfrac23$-approximation for \nBTW, all achieved by a \emph{random} permutation. We also know that under the Unique Games Conjecture (\UGC) of~\cite{khot2002power}, the first two results are tight, whereas the third is tight under P $\neq$ NP. Such problems, where a random output is provably the best, are called \textit{approximation resistant} and have been studied extensively by theoreticians~\citep{charikar2009every,guruswami2008beating,haastad2001some,austrin2009approximation}. Our work gives strong evidence pointing to the fact that important CSPs on trees (triplets/quartets) may be approximation resistant.

\paragraph{Clustering:} Here we want to maximize agreements with \textbf{Must-Link/Cannot-Link} constraints: The input is a graph with ``$+$'' or ``$-$'' edges indicating if the two endpoints should belong to the same cluster or not. Such constraints give rise to Correlation Clustering, an important paradigm for data analysis both in practice~\citep{davidson2007survey,wagstaff2000clustering,wagstaff2001constrained} and theory~\citep{bansal2004correlation,ailon2008aggregating,charikar2005clustering,swamy2004correlation}. The current best for maximizing agreements is a $0.7666$   multiplicative approximation via semidefinite programs~\citep{swamy2004correlation} and an APX-hardness is known~\citep{charikar2005clustering}. Here we will improve upon $0.7666$, under our stochastic model for generating constraints.
   %mention it captures MAX CUT, coloring etc.

%\footnote{(APX-hard means it is NP-hard to approximate within some constant factor strictly smaller than 1.}
\paragraph{Hierarchical Clustering (i.e., phylogenetic trees):} There are two common types of trees: rooted and unrooted. Given $n$ data points, a rooted binary tree on $n$ leaves, where each leaf corresponds to a data point, is usually called a \textit{hierarchical clustering} and is a standard tool for data analysis across different disciplines~\citep{steinbach2000,leskovec2014mining,tumminello2010correlation,sorlie2001gene}. Unrooted ternary trees (all nodes have degree 3, except the leaves that have degree 1) are usually called  \textit{phylogenetic trees} and are prevalent in computational biology as they describe speciation events throughout the evolution of species~\citep{bryant1997building,felsenstein2004inferring}. Here we will use the two terms interchangeably to describe hierarchies on $n$ leaves.
%To draw a distinction between the two types of trees, in this paper we will follow the convention and refer to binary rooted trees as hierarchical clusterings, whereas to ternary unrooted trees as phylogenies or phylogenetic trees. 
Since in a hierarchy all data are eventually separated at the leaves, pairwise constraints no longer make sense and the analogue of ``must-link/cannot-link'' are so-called ``must-link-before/cannot-link-before'' constraints:

%\vspace{-2mm}

\begin{itemize}
\item \textbf{Desired/Forbidden Triplets:} The output here is a rooted binary tree $T$ on $n$ leaves. We say a triplet relation ``$t=ab|c$'' is \emph{obeyed} by $T$ (or $T$ \textit{obeys} $t$), if the lowest common ancestor (LCA) of $a,b$ is a descendant of the LCA of $a,c$ in $T$. Otherwise $T$ \textit{disobeys} $ab|c$. A triplet can be \textit{desired} (we write $t\in\TD$) and we want the output $T$ to obey it\footnote{For example, ``penguin, dolphin$|$~tiger'' could be a desired triplet as the tiger is the least relevant item.} or \textit{forbidden} (we write $t\in \TF$) and we want $T$ to disobey/avoid it, giving rise to important optimization problems studied in computational biology and graph theory under the name of rooted triplets consistency~\citep{steel1992complexity,bryant1997building,byrka2010new,he2006inferring}.  Notice that a forbidden triplet $ab|c$ is less restrictive, since it only specifies that $T$ should either obey $ac|b$ or $bc|a$, but not $ab|c$. This is reflected in the complexity of the problems: given a set of forbidden triplets, it is NP-complete to check consistency (i.e., if there is a tree avoiding all of them), whereas checking consistency of desired triplets in polynomial time was established long ago by~\cite{ahoBUILD}. 

%\vspace{-2mm}

\item  \textbf{Desired/Forbidden Quartets:} The desired output here is a ternary unrooted tree $T$. We say a quartet $q=ab|cd$ is \textit{obeyed} by $T$ (or $T$ \textit{obeys} $q$) if the (unique) path from $a$ to $b$ in $T$ does not share any vertices with the (unique) path from $c$ to $d$ in $T$. Otherwise $T$ \textit{disobeys} $q$. Similarly to triplets, a quartet can be \textit{desired} ($q\in \QD$) or \emph{forbidden} ($q\in \QF$), giving rise to important quartets consistency problems in biology and graph theory~\citep{felsenstein2004inferring,bryant1997building,jiang2001polynomial,snir2006using}. For both problems, even if the input is consistent, checking consistency is NP-complete.
\end{itemize}

Once again, just to give a sense of the approximability, for desired triplets or quartets, the current best is a $\tfrac13$-approximation and for forbidden triplets or quartets, the current best is a $\tfrac23$-approximation. Embarrassingly, in all four cases these are achieved by a random (rooted or unrooted) tree or a simple greedy construction~\citep{he2006inferring}.

\subsection{Further Motivation and Related Work}
\label{sec:related}
Here, we further make a comparison to other relevant works. For ranking, many different types of probabilistic models have been considered~\citep{braverman2009sorting,shah2016stochastically,shah2017simple,negahban2012iterative, falahatgar2017maximum} giving statistical guarantees for reconstructing the desired permutation. Instead of pairwise comparisons, the problem has also been studied in the case where partial rankings or complete information (``tournaments'') is provided ~\citep{fagin2006comparing,ailon2010aggregation,kenyon2007rank}. Clustering with constraints and qualitative information (both max and min versions) were studied in~\cite{bansal2004correlation,charikar2005clustering} where approximations via linear programs were derived or practical improvements were made possible~\citep{wagstaff2001constrained,wagstaff2000clustering}. In crowdsourcing and biological applications, both triplet and quartets queries have been deployed~\citep{vinayak2016crowdsourced,vaughan2017making, kleindessner2017kernel,ghoshdastidar2019foundations,snir2006using,bryant1997building} as they can be more intuitive for non-expert users compared to pairwise comparisons. Semi-supervised models, where triplet queries depend on answers to previous queries have been studied in~\cite{kempeSODA, vikram2016interactive}.%, and is known that $O(n\log n)$ adaptive queries suffice to rebuild the optimum tree exactly. 

To further motivate our stochastic model and results, we include a slightly more detailed comparison with 3 important prior works \cite{braverman2009sorting,kempeSODA,snir2012reconstructing} that study ``ground-truth'' stochastic models similar to ours. The authors in~\cite{braverman2009sorting} study the ranking problem and assume that there exists a ground-truth ranking $\pi^*$, as we do. However, their stochastic model assumes either that we have access to \textit{all}  pairwise comparisons, or that we have access to \textit{complete} rankings $\sigma$ on the $n$ items, where each complete ranking $\sigma$ is generated with probability inverse exponential in the Kemeny distance between $\pi^*$ and $\sigma$ (Kemeny distance is the number of inversions, i.e., the number of pairs ordered in $\pi^*$ differently from $\sigma$). 
 
 As it will become obvious, their assumptions are much stricter than our simple stochastic model that generates $m$ pairwise comparisons uniformly at random. Moreover, notice that our approximation guarantees hold for \textit{any} number $m$ of given constraints without requiring it to be $\Omega(n^2)$. Given their more refined model, they are of course in a position to analyze the maximum likelihood estimator and prove approximate recovery results, e.g., that no element is misplaced by more than $\log n$ positions with high probability; however no guarantees are given for the number of violated pairwise constraints, which is the focus of our paper.
 
 For triplets hierarchical clustering, the authors in~\cite{kempeSODA} assume there exists a ground-truth binary tree $T$, as we do. However, they are allowed \textit{adaptive} triplet queries and show that $\approx n\log n$ such queries suffice to recover $T$ using a clever partition algorithm similar to Quickselect and Quicksort. Once again, our model is not adaptive, and we do not pose any constraints on the number $m$ of given constraints. For quartets hierarchical clustering, our model is similar to~\cite{snir2012reconstructing}, but we generalize their results to hold both for forbidden and desired quartets. 
 
 Finally, our constrained version of Hierarchical Clustering based on triplet constraints was studied in~\cite{chatziafratis2018hierarchical} under the assumption that the input contains pairwise similarities as well as triplet constraints.

%\vspace{-2mm}
\section{Using \MCS\ on instances with embedded ground-truth}
%\vspace{-2mm}
\label{sec:pos}
We present our main strategy \MCS\ behind our positive results. As we will see, by modifying the graphs, our method is flexible to allow for combinations of constraints, e.g., both \BTW\ and \nBTW\ constraints for rankings, or both desired and forbidden triplets (or quartets) for trees.

%and also the proof that Forbidden Triplets Consistency is approximation resistant. Then, in later sections we see how to adapt the Max Cut strategy for other problems, and how to get hardness for quartets consistency as well. For the latter, one challenge is how to get a permutation on the labels of an unrooted ternary tree, satisfying constraints on $4$ items that the tree obeys. 

\paragraph{Stochastic Model for Generating Constraints:} Since our goal is to beat the worst-case approximation and hardness results, we use a simple stochastic model with an embedded ground-truth solution on $n$ items. The form of the ground-truth changes depending on which problem we consider; it can be a ranking (for \MAS, \BTW, \nBTW), a partition (for Correlation Clustering) or a hierarchical tree (rooted for Triplets and unrooted for Quartets). For generating the $m$ input constraints, we simply choose items at random and with probability $(1-\eps)$ we add a pairwise/triplet/quartet constraint that is consistent with the ground-truth, otherwise with probability $\eps$ we add an erroneous constraint on the selected items. For example, in the introduction, we saw the \MAS\ constraints. Similarly, for \BTW, we would uniformly at random pick $m$ triples of items $a,b,c$ and then add w.p. $(1-\eps)$ the constraint $a|b|c$ if $b$ appears in between $a$ and $c$ in the ground-truth ordering. Also, for the Triplets Consistency problem, we would again uniformly at random pick $m$ triples of items $a,b,c$ and then add w.p. $(1-\eps)$ the constraint $ab|c$ if $c$ is separated first from $a,b$ in the ground-truth (rooted binary) tree.  For all problems, after getting $m$ (noisy) constraints in the analogous manner, our goal is to find a global solution that satisfies as many constraints as possible.

\paragraph{Positive Results:} Using our stochastic model we can escape worst-case impossibility results and for all 3 categories of problems, we present improved approximation algorithms.  At a high-level, we first construct a graph by encoding each of the local constraints on the items as a set of positive or negative edges between them. The graph captures the desired relationships and then, we find a good first split maximizing the ratio of satisfied over violated constraints by the cut. Naturally, our algorithm \MCS\ (see Algorithm~\ref{alg:decoding}) is based on variants of \MC\ on graphs with negative weights. An interesting building block in our analysis when solving for better Maximum Acyclic Subgraphs, is the directed \MC\  problem on graphs with negative weights which, to the best of our knowledge, hadn't been analyzed before. We note that for the triplets problem on trees, analogous  \MC\ heuristics had been successfully used before in experimental work for computational biology, however with no theoretical guarantees~\citep{snir2006using,snir2012quartet,snir2008quartets}. An exception is the work of~\cite{snir2012reconstructing}, where they focus only on the desired quartets problem, however their analysis is a special case of ours for when $\QF=\emptyset$ (i.e., the input contains no forbidden quartets). Our final approximations circumvent known hardness results for the case of rankings~\citep{guruswami2011beating} and our new hardness results for trees described in detail later in Section~\ref{sec:neg}.

\subsection{Better Approximations for \textsc{MAS}}

We start with \MAS\ as it is perhaps the easiest to describe (see also Algorithm~\ref{alg:decoding}):
\begin{theorem}
Given $m$ constraints generated according to our stochastic model on $n$ items, \MCS\ satisfies at least $(0.642-0.4285\eps)m$ on average, where $\eps$ is the fraction of erroneous comparisons. If moreover $m\ge \Omega(\log n)$, the result holds w.h.p.
\end{theorem}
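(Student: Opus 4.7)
My plan is to reduce the stochastic \MAS\ instance to a recursive directed \MC\ problem with signed edge weights, use the ground-truth ranking $\pi^*$ to lower-bound the \MC\ optima, and then invoke the approximation subroutine that Algorithm~\ref{alg:decoding} is built around to translate those lower bounds into a guarantee on the number of satisfied input constraints.

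First I would construct a directed weighted graph $G$ from the $m$ comparisons: for each unordered pair $\{u,v\}$ let $c(u,v)$ be the number of input constraints $u<v$ and set the signed arc weight $w(u,v)=c(u,v)-c(v,u)$. For any permutation $\pi$ the number of satisfied constraints equals the fixed baseline $\sum_{\{u,v\}}\min\{c(u,v),c(v,u)\}$ plus the total weight of arcs oriented in agreement with $\pi$. This decomposes additively over a recursive bisection of the items: choose $(L,R)$, output $L$ before $R$ (collecting the forward signed arcs), and recurse inside $L$ and inside $R$. \MCS\ implements exactly this strategy by calling, at each recursion level, an approximation algorithm for directed \MC\ on signed weights.

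Second, to lower-bound the \MC\ optimum at each level, I would exhibit $\pi^*$ itself as a feasible solution. Split $\pi^*$ at its median into $(L^*,R^*)$; since the $m$ comparisons are sampled uniformly over unordered pairs, a random pair crosses $(L^*,R^*)$ with probability $1/2$, and conditional on crossing, the signed arc contribution to this cut is $+1$ with probability $1-\eps$ (the constraint matches $\pi^*$) and $-1$ with probability $\eps$. Hence the expected signed cut value of $(L^*,R^*)$ is $(1-2\eps)\cdot m/2$ at the root, and by exchangeability (conditioning on the induced sub-instance at any node of the bisection tree) the analogous bound applies at every recursive level. Summing these bounds over the whole bisection tree of $\pi^*$ shows that the optimum directed \MC\ at each level is at least $(1-2\eps)$ times the expected number of constraints crossing that level.

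Third, I would add up the approximate cut values $C_k \ge \alpha\cdot C_k^{*}$ delivered by the subroutine over all recursion levels and combine with the baseline term; the number of satisfied constraints can be written as the baseline plus the sum of forward cut weights, which is in turn $(m+\sum_k\text{signed}_k)/2$ plus baseline/$2$ after cancellation. Plugging in the concrete approximation ratio $\alpha$ for signed directed \MC\ and simplifying produces the promised $(0.642-0.4285\eps)m$ bound in expectation. The high-probability version for $m\ge\Omega(\log n)$ follows from the Hoeffding-type concentration noted in the remark preceding the theorem applied to the cut weights, which are sums of $m$ independent bounded contributions.

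\textbf{The main obstacle} is the very first ingredient, which the introduction flags as new: designing and analyzing an approximation algorithm for directed \MC\ in the presence of \emph{both} positive and negative edge weights. Classical SDP rounding (Goemans--Williamson, Lewin--Livnat--Zwick) relies crucially on non-negativity, and the extension to the signed setting — and the precise analytic constant it yields — is exactly what determines the leading numbers $0.642$ and $0.4285$; routing the $(1-2\eps)$ ground-truth lower bound through this extension, and verifying that it composes cleanly across the recursion levels, is where the bulk of the technical work lies.
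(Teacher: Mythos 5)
Your graph construction and your lower bound for the root cut match the paper: each comparison contributes a $+1$ forward arc and a $-1$ backward arc, the signed weight of a directed cut equals satisfied minus violated crossing constraints, and the median split of $\pi^*$ certifies $w(\OPT)\ge \tfrac12(1-\eps)m-\tfrac12\eps m=\tfrac12(1-2\eps)m$ in expectation. But your overall scheme — recursive bisection with a MaxCut call at every level, summing per-level guarantees — has a genuine gap exactly where you invoke ``exchangeability.'' The split $(S,\bS)$ returned by the SDP rounding is a function of the realized constraints, so conditioning on it destroys the stochastic model inside $S$ and $\bS$: the sub-instances are not fresh uniform samples with error rate $\eps$, and there is no ground-truth median-cut argument available at the deeper levels. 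The paper explicitly flags this (Remark following the MAS analysis: once the first approximate MaxCut is performed, there is no randomness left to exploit), and for precisely this reason its algorithm is \emph{not} recursive: it performs a single cut and then outputs a uniformly random permutation inside each side, which yields the clean identity $\ALG=m_s+\tfrac12 m_u=\tfrac12 m+\tfrac12\,w(S,\bS)$ and, combined with the subroutine guarantee $\EE\,w(S,\bS)\ge 0.857\,w(\OPT)-0.143\,W^-$ with $W^-=m$, gives $(0.642-0.4285\eps)m$. Your level-by-level summation cannot be justified as written, and it would in any case produce different constants from the ones claimed.

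A second, smaller issue: you correctly identify directed MaxCut with mixed-sign weights as the crux, but you leave it as an obstacle rather than supplying the guarantee. The theorem's constants are exactly the pair $(0.857,0.143)$ produced by the paper's SDP relaxation with the Feige--Goemans rotation-based rounding (the vanilla Goemans--Williamson hyperplane rounding only gives $0.796$), so without that ingredient — or at least its precise form $\EE\,w(S,\bS)\ge \alpha\, w(\OPT)-(1-\alpha)W^-$ with $\alpha=0.857$ — the bound $(0.642-0.4285\eps)m$ cannot be reached. Fixing your argument therefore means (i) replacing the recursion by the single-cut-plus-random-completion analysis, and (ii) actually establishing the signed directed MaxCut guarantee rather than deferring it.
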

\begin{remark}
For example, if the error parameter $\eps=0.1$, hence $10\%$ of the $m$ generated constraints are erroneous, we still satisfy $\approx 60\%$ of them, and we still beat the previous best $\tfrac12$-approximation together with the known hardness~\citep{guruswami2008beating}.
\end{remark}
Our general proof template has $5$ steps:  

%\vspace{-3mm}
\begin{itemize}
    \item Building a graph: For a sampled constraint $a<b$ indicating that $a$ should precede $b$ in the ranking, we add two directed edges:
\[
+1 \text{ directed from }a\to b, -1 \text{ directed from }b\to a
\]
    Since the problem has orientation, we define the weight of a directed cut $(S,\bS)$ as the sum of all (positively or negatively) weighted arcs going from $S$ to $\bS$ (and we ignore the arcs going from $\bS$ to $S$).
\vspace{-3mm}
    
    \item Cuts and constraints:  The goal of constructing the graph is to use information about its cuts and relate them to the pairwise constraints. Notice that a cut $(S,\bS)$ can either obey, disobey or leave unaffected the status of a $a<b$ constraint, depending on if $a$ or $b$ belongs to $S$ or $\bS$. Let $m_s, m_v$ denote the satisfied, violated constraints by the cut, respectively. The weight of any directed $(S,\bS)$ cut is thus:
    \begin{equation}\label{eq:comp}
      w(S,\bS)= m_s(S,\bS) - m_v(S,\bS)  
    \end{equation}

    as satisfied pairs $m_s$ (with $a\in S,b\in \bS$) contribute $+1$ and violated pairs $m_v$ (with $a\in \bS,b\in S$) contribute $-1$.
    
%\vspace{-3mm}

    \item Lower Bounding \MC: The constructed graph from the first step, is directed and has both positive and negative weights. Based on eq.~(\ref{eq:comp}), we should find a large cut in this graph as this translates to many satisfied constraints. In order to find the cut, we use a \MC\ variant that finds a cut comparable to the optimal max cut in graphs that are directed and contain both positive and negative weights. However, we cannot use the standard Goemans-Williamson algorithm and guarantees~\cite{goemans1995improved}, as the graph is directed with positive and negative weights. A new ingredient in our proof is a semidefinite programming relaxation and analysis for this variant that achieves:
    \begin{equation}\label{eq:guar}
    \EE(w(S,\bS)) \ge 0.857w(\OPT)-0.143\cdot W^-
    \end{equation}
    where $w(\OPT)$ is the weight of the optimum cut and $W^-$ is the total negative weight in the graph in absolute value. Based on the graph construction in this case, $W^-=m$ as every constraint contributed a $-1$ edge. We just note that the numerical values $0.143$ and $0.857$ sum to 1, and they just arise from the rounding scheme used to obtain an integral solution from the relaxation.

    \item Now that we have a lower bound for $w(S,\bS)$ based on the optimum cut, in order to conclude the algorithm's cut is large (and hence satisfies many constraints), we need to lower bound the optimum's cut weight $w(\OPT)$. To do this we consider the weight of a \textit{median} directed cut: the median cut is defined to be the one that assigns the first $n/2$ labels in the optimum ordering for \MAS, on one side of the cut, and the rest $n/2$ labels to the other side of the cut. Since the labels for the constraints according to our stochastic model were chosen at random, a counting argument implies that with high probability $\approx \tfrac12m$ of the generated constraints are satisfied by the median cut and hence also by \OPT. To see this, observe that for nearly half of the $a<b$ constraints, $a$ belongs to the first $n/2$ labels of the median cut, whereas $b$ belongs to the remaining $n/2$ labels. Since \OPT\ is by definition even better than the median cut, we get that it has a large cut value. If we wanted to be slightly more precise, we should say that due to errors in an $\eps$ fraction of the generated constraints, we actually lose a small $\eps$ fraction of the constraints (we defer details to Appendix~\ref{app:beating}) but this discounts the optimum cut only by a small amount. 
    
    \item Output of \MCS: Finally, we need to find a good permutation overall, not just a good top split. Our algorithm starts by finding an approximate \MC\ $(S,\bS)$ in $G$ and then proceeds by outputting a \textit{random} permutation on the items in $S$ and in $\bS$ and concatenating them. Finally, we can compute the overall value of \ALG\  (dropping the notation with $(S,\bS)$):
    \[\ALG = m_s+\tfrac12 m_u =\]
    \begin{equation}\label{eq:MAS12}
        = m_s+\tfrac12(m-m_s-m_v)=\tfrac12m+\tfrac12(w(S,\bS))
    \end{equation}
 where $m_u$ are the constraints that were unaffected by the  $(S,\bS)$ cut. By eq.~(\ref{eq:MAS12}), we already see that we get some advantage over the $\tfrac12m$ baseline which is optimal in the worst-case (and is achieved by a random permutation on all $n$ items).
\end{itemize}

\begin{remark}
A natural question is to attempt to use MaxCut repeatedly on each of the two generated parts of the first split. However analyzing the repeated MaxCut approach is not that simple, as once the first approximate MaxCut is performed, there is no randomness in the two generated subgraphs that we can exploit.  Analogous difficulties arise in dissimilarity-based and quartets-based hierarchical clustering~\cite{charikar2019hierarchical,snir2012reconstructing,ahmadian2019bisect}. Finally, we want to point out that such analyses are also known to be challenging from the literature on Random Forests for decision trees (e.g., \cite{scornet2015consistency}) where a similar (data-dependent) two-step analysis has been elusive.
\end{remark}

\subsection{Extensions to Other Problems} 

The same proof template as presented here can be modified to deal with the remaining problems: \BTW, \nBTW, forbidden and desired triplets, forbidden and desired quartets. As each of these constraints, involve $3$ or $4$ points, the construction and analyses become more involved. We present briefly the main modifications  for the graph construction (see Appendix~\ref{app:beating} for details).

For a \BTW\ constraint $\{a|b|c\}$, we add undirected edges: $+2$ for $(a,c)$ and $-1$ for $(b,a),(b,c)$. The edges capture that a cut violates the constraint if it separates $b$ from $a,c$. For a \nBTW\ constraint $\{ab|c\}$ indicating that $c$ should not be between $a,b$ in the final ordering, we add the following 3 undirected edges:+1 \text{for pairs }(c,a),(c,b)\text{ and } -2\text{ for the pair }(a,b). Recall, that for \BTW\ and \nBTW, the ultimate goal is to beat the factors $\tfrac13$ and $\tfrac23$ which are currently optimal in the worst-case:
\begin{theorem}
Given $m = \Omega(\log n)$ noisy constraints on $n$ items, variations of \MCS\ satisfy at least $(0.402-0.329\eps)m$  and $(0.845-0.329\eps)m$ constraints w.h.p. for \BTW\ and \nBTW, respectively, where $\eps$ is the fraction of erroneous constraints.
\end{theorem}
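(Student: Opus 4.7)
The plan is to follow the five-step template used above for \MAS, adapted to the ternary constraints of \BTW\ and \nBTW. The graph constructions are the ones specified in the text: a signed, undirected graph $G$ on the $n$ items, and the algorithm \MCS\ returns an approximate max cut $(S,\bS)$ of $G$ followed by concatenating uniformly random orderings of $S$ and $\bS$.

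The first substantive step is to translate a cut into expected satisfaction. I enumerate the four configurations of a triple under $(S,\bS)$ (all on one side; the ``middle'' $b$ alone; each ``end'' alone, or for \nBTW\ the designated $c$ alone), read off the cut weight contributed by each (for \BTW: $0,-2,+1,+1$; for \nBTW: $0,-1,-1,+2$), and compute the conditional probability that the random concatenation satisfies the constraint in each (for \BTW: $1/3,\,0,\,1/2,\,1/2$; for \nBTW: $2/3,\,1/2,\,1/2,\,1$). Solving a small linear system in $(\alpha,\beta)$ so that each per-configuration satisfaction probability equals $\alpha\cdot w_{\text{config}}+\beta$ yields the identities
\begin{equation}\label{eq:btwsat}
\EE[\ALG] \;=\; \tfrac16\,w(S,\bS)\;+\;\tfrac13\,m \ \text{ for }\BTW, \qquad \EE[\ALG]\;=\;\tfrac16\,w(S,\bS)\;+\;\tfrac23\,m \ \text{ for }\nBTW,
\end{equation}
so that any positive cut weight strictly improves on the random baselines $\tfrac13$ and $\tfrac23$ that match the known worst-case bounds.

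For the MaxCut lower bound, I apply the undirected signed-MaxCut SDP guarantee analogous to~(\ref{eq:guar}), i.e.\ $\EE[w(S,\bS)]\ge \alpha\,w(\OPT)-(1-\alpha)W^-$, with $W^-=2m$ in both cases (two $-1$ edges per \BTW\ constraint, one $-2$ edge per \nBTW\ constraint). To lower bound $w(\OPT)$, I use the median cut of the ground-truth $\pi^*$ as a witness: for a uniformly sampled triple with ordered positions $q_1<q_2<q_3$, the four first-half/second-half patterns occur with limiting probabilities $\tfrac18,\tfrac38,\tfrac38,\tfrac18$. Combining these with the per-configuration weights gives the expected contribution of a correct constraint and of each allowed erroneous-constraint type under the stochastic model, producing a bound $w(\OPT)\ge (c_0-c_1\eps)m$ in expectation. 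Chernoff/Azuma concentration on the $m$ independent, bounded per-constraint contributions upgrades this to a w.h.p.\ statement once $m=\Omega(\log n)$. Substituting the median-cut lower bound into the SDP guarantee and then back into~(\ref{eq:btwsat}) and simplifying yields the two claimed bounds.

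The main obstacle is the $w(\OPT)$ analysis for ternary constraints. Unlike \MAS, where the median cut always contributes $+1$ to a correct pairwise constraint when it separates the pair and $0$ otherwise, an erroneous ternary constraint here can actively \emph{hurt} the cut weight by designating the ``wrong'' element as its middle/non-middle; the median cut then isolates precisely that element with nontrivial probability, incurring the maximally negative per-constraint weight. One therefore has to enumerate the error types the stochastic model permits, compute the expected contribution in each, and check that the $\eps$-fraction losses do not erase the gain on correct constraints. A secondary but more routine technical point is verifying that the signed-MaxCut SDP of the \MAS\ section (stated there for directed $\pm 1$-weighted graphs) extends to undirected graphs with weights in $\{-2,-1,+1,+2\}$; the same Goemans--Williamson rounding works and the edge weights enter linearly into both sides of~(\ref{eq:guar}).
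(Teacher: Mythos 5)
Your template is the same as the paper's: identical graph constructions, the identities $\EE[\ALG]=\tfrac13 m+\tfrac16 w(S,\bS)$ for \BTW\ and $\EE[\ALG]=\tfrac23 m+\tfrac16 w(S,\bS)$ for \nBTW\ (your linear-system derivation is just a repackaging of the paper's direct substitution), the Goemans--Williamson guarantee for signed undirected graphs with $W^-=2m$, a median-cut witness for $w(\OPT)$, and Chernoff concentration for $m=\Omega(\log n)$. For \BTW\ your plan goes through and reproduces the constant: a correct constraint designates the true middle element, which the median cut can never isolate, so each correct constraint contributes $+1$ exactly when the triple is split (probability $\tfrac34$), giving $w(\OPT)\ge\bigl(\tfrac34-O(\eps)\bigr)m$ and hence $(0.402-0.329\eps)m$ after substitution.

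The gap is in the \nBTW\ endgame. Under your own (correct) accounting --- the designated $c$ of a correct constraint is an endpoint of the triple in the ground truth, and conditioned on the median cut splitting the triple it isolates the earliest or the latest of the three positions with probability $\tfrac38$ each --- a correct constraint contributes $+2$ with probability $\tfrac38$ and $-1$ with probability $\tfrac38$, i.e.\ only $\tfrac38$ in expectation; moreover no other single cut along the ground-truth ordering certifies more, since a cut at fraction $t$ gives expected contribution $\tfrac32 t(1-t)\le\tfrac38$. Plugging $w(\OPT)\gtrsim\tfrac38 m$ into the GW bound and then into $\tfrac23 m+\tfrac16 w(S,\bS)$ yields roughly $(0.68-0.17\eps)m$, which beats $\tfrac23$ but falls well short of the claimed $(0.845-0.329\eps)m$: that constant requires $w(\OPT)\ge\bigl(\tfrac32-O(\eps)\bigr)m$, i.e.\ crediting \emph{every} split correct constraint with the full $+2$. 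This is what the paper's proof does (it counts all $\tfrac34 m$ split correct constraints as ``satisfied'' by the median cut), a step your more careful case enumeration will not reproduce. So, as written, your concluding sentence that the substitution ``yields the two claimed bounds'' fails for \nBTW; you would need either a genuinely stronger lower bound on the maximum cut of the constructed graph (not witnessed by any single cut of the ground-truth ordering) or to settle for the weaker constant.
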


For Correlation Clustering, for each \CL\ constraint $ab$, we add a $+1$ for $(a,b)$, and for each \ML\ constraint $ab$, we add $-3.2735$ for edge $(a,b)$. The chosen numerical value $-3.2735$ depends on the current best 0.766-approximation for Correlation Clustering~\citep{swamy2004correlation} (see Appendix~\ref{app:beating}).

\begin{theorem}
Given $m = \Omega(\log n)$ noisy ``must-link/cannot-link'' constraints on $n$ items, \MCS\ (modified appropriately) satisfies at least $(0.8226-0.775\eps)m$ constraints w.h.p., where $\eps$ is the fraction of erroneous constraints.
\end{theorem}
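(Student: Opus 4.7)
The plan mirrors the five-step template used for \MAS. First, I would construct an undirected signed graph $G$ as specified: weight $+1$ on each CL-constraint edge and $-3.2735$ on each ML-constraint edge (summing parallel contributions). Next, I would relate cuts to agreements: for a cut $(S,\bar{S})$, letting $m_s^{CL}$ denote the number of CL constraints separated, $m_v^{ML}$ the number of ML constraints separated, and $m^{ML}$ the total ML-constraint count,
\[
w(S,\bar{S}) = m_s^{CL} - 3.2735\,m_v^{ML}, \qquad A(S,\bar{S}) = m_s^{CL} + m^{ML} - m_v^{ML},
\]
where $A$ is the agreement count. Subtracting these identities yields $A = w + m^{ML} + 2.2735\,m_v^{ML} \geq w + m^{ML}$, so any lower bound on cut weight transfers to a lower bound on agreements.

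I would then invoke a Goemans-Williamson-style SDP for undirected MaxCut with mixed-sign weights---the undirected analog of the signed oracle used for \MAS---yielding
\[
\EE[w(\ALG)] \geq \alpha\cdot w(\OPT) - (1-\alpha)\cdot W^-,
\]
with $W^- = 3.2735\,m^{ML}$ and $\alpha$ a GW-like constant. To lower-bound $w(\OPT)$, I would exploit the ground truth. By Assumption~\ref{ass:triplets}, there exists a grouping $(S^*,\bar{S^*})$ of the ground-truth clusters with sides in $[n/3,\,2n/3]$, so a uniformly random pair straddles this split with probability at least $4/9$. A case analysis over (correct/erroneous constraint) $\times$ (cross/same-cluster pair) gives
\[
\EE\bigl[w(S^*,\bar{S^*})\bigr] \geq \tfrac{4}{9}\bigl[(1-\eps) - 3.2735\eps\bigr]\,m,
\]
and a standard Hoeffding argument (available because $m = \Omega(\log n)$) converts the expectation into a w.h.p.\ statement that also transfers to \ALG.

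Finally, I would assemble the three ingredients. Combining the cut-to-agreements identity with the signed-MaxCut guarantee and the ground-truth lower bound yields $\EE[A(\ALG)] \geq (0.8226 - 0.775\eps)\,m$. The ML weight $-3.2735$ is tuned precisely for this assembly: smaller ML weights underuse the ML signal, larger ML weights inflate the $(1-\alpha)W^-$ slack, and this particular value is the one at which the final ratio strictly exceeds the $0.766$-approximation of \cite{swamy2004correlation} across the relevant regime of $\eps$.

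The main obstacle is the lower bound on $w(\OPT)$. Without balancedness, the ground truth could comprise many tiny clusters so that virtually every pair sits on the same side of every 2-partition; in that regime the signed MaxCut is starved of positive weight and the approach becomes vacuous. The $(\tfrac13:\tfrac23)$ assumption is exactly what ensures the $4/9$-separation probability above. A secondary difficulty is the joint numerical optimization of the ML weight together with the SDP constant $\alpha$ and the $(1-\eps)$ ground-truth factor, which pins down the precise constants $0.8226$ and $0.775$; these calculations are deferred to Appendix~\ref{app:beating}.
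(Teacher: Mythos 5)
Your graph construction and the mixed-sign Goemans--Williamson oracle match the paper, but your step relating cuts to agreements is where the argument genuinely diverges and fails. You score the cut $(S,\bS)$ as if the two sides were output as the final clustering: $A = m_s^{CL} + m^{ML} - m_v^{ML}$, so \CL\ constraints left unaffected by the cut get zero credit and unaffected \ML\ constraints get full credit. The paper instead uses the cut only as a \emph{first split} and runs the known $0.766$-approximation of Swamy on the constraints unaffected by the cut, which gives $\ALG = m_s + 0.766\,m_u = 0.766\,m + 0.234\,(m_s - 3.2735\,m_v) = 0.766\,m + 0.234\,w(S,\bS)$. This is also where $-3.2735$ actually comes from: it is exactly $0.766/0.234$, the ratio forced by folding the $0.766$ baseline into the cut weight; in your assembly there is no principled reason for that value. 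Without the $0.766\,m$ baseline term your bound cannot reach the claimed constant, and in fact the algorithm implicit in your accounting cannot either: take $\eps=0$, ground truth consisting of $n$ singleton clusters (this satisfies the balancedness assumption by grouping singletons into two sides of size $n/2$, and the largest cluster is trivially at most $n/2$), and $m=\Theta(n^2)$ constraints, all of which are \CL. Any $2$-clustering separates only about half of the sampled pairs, so it satisfies at most roughly $m/2$ constraints, far below $0.8226\,m$; correspondingly your chain yields at best about $0.878\cdot w(\OPT)\approx 0.44\,m$ here (and at most about $0.74\,m$ even in the most favorable regime $m^{ML}\approx m/2$). The missing idea is precisely that the residual, uncut constraints must be handed to the $0.766$-approximation rather than frozen into a two-cluster output.

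A secondary discrepancy concerns the lower bound on $w(\OPT)$. You use the $\tfrac49$ straddling probability coming from a $(\tfrac13:\tfrac23)$ grouping, whereas the paper's correlation-clustering constants come from assuming the largest optimal cluster has size at most $n/2$, so that (up to the $\eps$ corrections) at least about $\tfrac m2$ of the constraints are \CL, $w(\OPT)\ge \tfrac m2$, and $W^-\le 3.2735\cdot\tfrac m2$. Plugging $\tfrac49$ in place of $\tfrac12$, even into the correct assembly $\ALG = 0.766\,m + 0.234\,w(S,\bS)$, gives a constant around $0.80$ rather than $0.8226$. So to recover the theorem as stated you need both the residual-$0.766$ accounting (which dictates the edge weight $3.2735$) and the one-half separation bound for the ground-truth cut.
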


Analogous theorems hold for the Triplets/Quartets consistency problems. Due to space constraints, we omit the statements but we refer the reader to Table~\ref{tab:result} for the final ratios and to Appendix~\ref{app:beating} for the proofs.

\begin{algorithm}[t]
\caption{Our \MCS\ template as instantiated for \MAS.}
\label{alg:decoding}
\begin{algorithmic}
        \State \textbf{Input:} $m$ pairwise constraints for \MAS.
        \State \textbf{1.} For each $a<b$ constraint, insert a $+1$ arc directed from $a\to b$ and another  arc with negative weight $-1$ directed from $b\to a$. Call the resulting graph $G$.
        \State \textbf{2.} Run our approximate \MC\  algorithm suitable for directed graphs with negative weights to get a first split $(S,\bS)$, satisfying~eq.~(\ref{eq:guar}).
        \State \textbf{3.} Construct a random permutation $\pi_1$ on the nodes in $S$ and a random permutation $\pi_2$ on the nodes in $\bS$. Let $\pi$ be the ranking obtained by concatenating $\pi_1$ and then $\pi_2$.
        \State \textbf{4.} Return $\pi$.
\end{algorithmic}
\end{algorithm}

\section{Hardness for CSPs on Trees}
\label{sec:neg}
\paragraph{Negative Results:}
  As mentioned, previous work~\citep{byrka2010new,jiang2001polynomial, bryant1997building, he2006inferring, steel1992complexity} tried to get better approximations for triplets/quartets consistency compared to trivial baselines. Recall, that the trivial baseline is to simply output a random tree (either rooted or unrooted depending on the problem).  In our paper, near optimal hardness of approximation results for the maximum desired/forbidden triplets/quartets consistency problems ($4$ problems in total) are presented shedding light to why, despite significant efforts from different communities, no improvement had been made for nearly thirty years.  As a consequence, we get the first tight hardness for an ordering problem on trees, thus extending the work of~\cite{guruswami2011beating} from orderings on the line to hierarchical clustering.
  
  Specifically, for maximizing forbidden triplets, we show that no polynomial time algorithm can achieve a constant better than $\tfrac23$-approximation. Similar to~\cite{guruswami2008beating,guruswami2011beating} this is assuming the Unique Games Conjecture, however for maximizing desired triplets, we show a threshold of $\tfrac23$, assuming P $\neq$ NP. The above also implies that forbidden triplets is approximation resistant as a random tree also achieves a $\tfrac23$ factor. In fact our hardness results for all 4 problems are stronger, as we show it's not possible to distinguish almost perfectly consistent inputs from inputs where the optimum solution achieves almost the same as a random solution. 
  
  Technically, in order to get the hardness results, we give algorithms to obtain permutations on the leaves of a tree, such that if the tree obeyed many triplet/quartet constraints, then the permutation would also obey a large fraction of them when viewed as appropriate ordering constraints. Specifically, we prove that under the \UGC, it is hard to approximate the  Forbidden Triplets Consistency problem better than a factor of $\tfrac23$, even in the unweighted case. %Our result is slightly stronger: it is hard to distinguish between two instances one of which is almost perfect (e.g., $99\%$ of constraints are consistent) and the other is far from perfect (e.g., $67\%$ of constraints are consistent). 
  
\begin{fact}
Let $K$ be the total number of triplet constraints in an instance of \BTW. For any $\epsilon>0$, it is UGC-hard to distinguish between \BTW\ instances of the following two cases:\\
\textbf{YES}: $val(\pi^*)\ge (1-\epsilon)K$, i.e. the optimal permutation satisfies almost all constraints. \\
\textbf{NO}: $val(\pi^*) \le (\tfrac13 +\epsilon)K$, i.e. the optimal permutation does not satisfy more than 1/3 fraction.
\end{fact}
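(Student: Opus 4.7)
The plan is to reduce from Unique Games using the standard dictatorship-test framework for ordering CSPs, as in Charikar-Guruswami-Manokaran and Guruswami-Manokaran-Raghavendra. Starting from a gap Unique Games instance on variables $V$ with alphabet $[R]$ and bijection constraints $\pi_e$, I would construct a \BTW\ instance whose ground set is $V \times \{-1,+1\}^R$, giving each UG variable a ``long-code block'' of $2^R$ items. A global ranking on this ground set restricts to a ranking of each block, which I would identify with a real-valued function $f_v : \{-1,+1\}^R \to \mathbb{R}$ via its induced ranks.

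For each UG edge $e = (u,v)$, I would add betweenness triplet constraints of the form $(u,x) \mid (v, y\circ \pi_e) \mid (u,z)$ where $(x,y,z)$ is drawn from a carefully designed 3-wise correlated distribution on $\{-1,+1\}^R$. The distribution must satisfy two properties: (a) a ``dictator'' ranking that sorts block $v$ by its $i$-th coordinate passes the test with probability $\ge 1-o(1)$ when both blocks use the same dictator through $\pi_e$, and (b) the Gaussian analog of the test, obtained by replacing the $\pm 1$ inputs with correlated Gaussians of matching pairwise covariance, has value exactly $1/3$ — in other words, only the pairwise correlations, not the triple structure, are visible to any non-dictator orderings. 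Completeness is then immediate: a UG labeling $\sigma$ satisfying $(1-\delta)$ of UG constraints yields the ranking ``sort block $v$ by coordinate $\sigma(v)$'', which satisfies at least $(1-\varepsilon)$ of \BTW\ constraints.

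The main obstacle is soundness. Suppose a ranking satisfies more than $(1/3+\varepsilon)$ of the \BTW\ constraints. I would express the satisfied fraction in terms of the Fourier expansions of the $f_v$ as a low-degree multilinear polynomial, then apply the Mossel-O'Donnell-Oleszkiewicz invariance principle to replace $\pm 1$ inputs with correlated Gaussians, with replacement error controlled by the maximum low-degree influence of any coordinate in any $f_v$. Since the Gaussian test has value exactly $1/3$, achieving $1/3+\varepsilon$ forces some $f_v$ to have a coordinate of non-negligible low-degree influence. Decoding each $v$ to a uniformly random such coordinate produces a labeling satisfying a constant fraction of UG edges, contradicting the UG gap and proving the NO direction.

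The hardest technical ingredients will be (i) designing the 3-wise correlated distribution on $\{-1,+1\}^R$ whose Gaussian image has exactly the random-ordering value $1/3$ — this is Charikar-Guruswami-Manokaran's moment-matching construction where triples have only pairwise correlation structure — and (ii) translating between integer rankings and real-valued block functions without losing more than $o(1)$ in test value, which is handled by passing through monotone rearrangements and applying a small noise operator before the invariance step to regularize the $f_v$. Combining these with UGC yields the claimed $(1-\varepsilon)$ vs.\ $(1/3+\varepsilon)$ gap.
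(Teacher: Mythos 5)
You should first note that the paper does not prove this statement at all: it is stated as a \emph{Fact} and imported wholesale from the cited work of Guruswami et al.\ (2011) (the paper's own contribution, Theorem~\ref{th:forbidden}, only \emph{uses} this gap as a black box to get hardness for Forbidden Triplets). So the relevant comparison is not with anything in this paper but with the proof in that reference. Your sketch does follow the framework actually used there (and in Charikar--Guruswami--Manokaran for arity-3 ordering CSPs): a reduction from Unique Games with long-code blocks, a dictatorship test built from a correlated distribution, an invariance-principle soundness argument, and influence-based decoding. At the level of a roadmap it is the right route.

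Viewed as a proof, however, there are two genuine gaps. First, your step of restricting the global ranking to each block, identifying it with a real-valued function $f_v$ ``via its induced ranks,'' and then applying the Mossel--O'Donnell--Oleszkiewicz invariance principle does not go through as stated: an ordering CSP is not a CSP over a bounded domain, the acceptance predicate of Betweenness is not a bounded low-degree multilinear polynomial in the $f_v$'s, and rank-valued functions have no a priori bounded range, so the invariance error is not controlled by low-degree influences alone. The published argument circumvents exactly this by coarsening the ordering into a constant number $q$ of buckets, arguing that either the bucketed ($q$-ary) CSP already captures almost all of the advantage or the ordering is essentially random inside buckets, and only then running the dictatorship/invariance machinery on the bounded-alphabet problem; your ``monotone rearrangement plus noise operator'' remark gestures at this but does not supply it. Second, the crucial object --- a 3-wise correlated test distribution for the Betweenness predicate on which dictators pass with probability $1-o(1)$ while any low-influence (equivalently, Gaussian) strategy gets value exactly $\tfrac13$ --- is postulated, not constructed; establishing that only pairwise correlations survive and that the resulting ``Gaussian value'' equals the random-ordering value $\tfrac13$ is precisely the technical heart of the cited paper, so asserting it as property (b) assumes the hardest part. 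As a citation-level summary your proposal is faithful; as a standalone proof it is incomplete at exactly these two points.
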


Given the above fact from~\cite{guruswami2011beating}, we prove our $\tfrac23$-inapproximability result for Forbidden Triplets:
%\footnote{To get the tight 1/3 inapproximability result, one way would be to look at the OCSPs paper and check that in their YES instances, there is also a corresponding tree such that it satisfies almost all of the given constraints.}
\begin{theorem}\label{th:forbidden}
Let $K$ be the total number of the triplet constraints in an instance of Forbidden Triplets Consistency. For any $\delta>0$, it is UGC-hard to distinguish between the following two cases:\\
\textbf{YES}: $val(T^*)\ge (1-\delta)K$, i.e. the optimal tree satisfies almost all the triplet constraints. \\
\textbf{NO}: $val(T^*)\le (\tfrac23 +\delta)K$, i.e. the optimal tree does not satisfy more than $\tfrac23$ fraction of triplets.
\end{theorem}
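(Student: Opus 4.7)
The plan is a direct reduction from \BTW\ to Forbidden Triplets Consistency that transports the approximation-resistance of the preceding Fact. Given a \BTW\ instance on $n$ items with constraints $\{a_i|b_i|c_i\}_{i=1}^{K}$, I build a Forbidden Triplets instance on the same ground set with constraints $\{a_i c_i | b_i\}_{i=1}^{K}$: each request ``$b$ should be in the middle of $a$ and $c$'' is replaced by the forbidden-triplet request ``the tree should not single $b$ out from $\{a,c\}$''. These two conditions are compatible but not equivalent, and the quantitative mismatch is a factor of exactly $\tfrac12$ in one direction, which is what drives the $\tfrac13\to\tfrac23$ gap.

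For the YES direction, given a permutation $\pi^{*}$ I output the caterpillar tree $T_{\pi^{*}}$ whose left-to-right leaf order is $\pi^{*}$. A standard property of caterpillars is that on every triple $\{x,y,z\}$, the leaf coming first in $\pi^{*}$ is precisely the one singled out by the induced rooted triplet. Hence $T_{\pi^{*}}$ avoids the forbidden triplet $a_i c_i|b_i$ whenever $b_i$ is not first in $\pi^{*}$ among $\{a_i,b_i,c_i\}$; in particular it avoids it whenever $\pi^{*}$ places $b_i$ in the middle, i.e.\ whenever $\pi^{*}$ satisfies the corresponding \BTW\ constraint. Thus $\mathrm{val}(T_{\pi^{*}})\ge \mathrm{val}(\pi^{*})\ge (1-\epsilon)K$, which handles the YES case provided $\delta\ge\epsilon$.

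For the NO direction, given any rooted binary tree $T$ I pick a uniformly random planar embedding of $T$ (equivalently, independently flip a fair coin to order the two children at each internal node) and read off the leaves left-to-right; call the resulting permutation $\pi_T$. A short argument using only the independence of coin flips shows that on any fixed triple $\{a,b,c\}$ the induced left-to-right order is uniform over the four planar orderings consistent with the triple's rooted structure in $T$. Enumerating these four orderings in the two symmetric cases where $T$ avoids the forbidden triplet $a_i c_i|b_i$ (structures $a_i b_i|c_i$ and $b_i c_i|a_i$), $b_i$ sits strictly between $a_i$ and $c_i$ in exactly $2$ of the $4$ orderings; when $T$ violates $a_i c_i|b_i$, $b_i$ is at an end in all $4$. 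Summing over all constraints and taking expectations,
\[
\EE\bigl[\mathrm{val}_{\BTW}(\pi_T)\bigr]\;=\;\tfrac12\cdot \mathrm{val}_{\mathrm{FT}}(T).
\]

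To close, given $\delta>0$ I instantiate the Fact with $\epsilon:=\delta/2$. If some tree $T^{*}$ achieved $\mathrm{val}_{\mathrm{FT}}(T^{*})>(\tfrac23+\delta)K$ in the NO case, then $\EE[\mathrm{val}_{\BTW}(\pi_{T^{*}})]>(\tfrac13+\delta/2)K=(\tfrac13+\epsilon)K$, so some realization of $\pi_{T^{*}}$ would contradict the NO case of the \BTW\ Fact. Together with the YES direction, this yields the theorem. The one delicate step I anticipate is verifying the uniform-marginal claim on each triple, after which the case-by-case $\tfrac12$ count in the two avoiding structures is bookkeeping and the UGC-hardness transfer is immediate.
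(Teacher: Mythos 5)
Your proposal is correct and follows essentially the same route as the paper's proof: the same encoding of each \BTW\ constraint $a|b|c$ as the forbidden triplet $ac|b$, the same caterpillar tree built from $\pi^*$ for the YES case, and the same NO-case argument of randomly swapping children at every internal node before projecting to a permutation, yielding the factor $\tfrac12$ that produces the $\tfrac13\to\tfrac23$ gap. Your uniform-marginal observation on each triple is just a slightly more explicit bookkeeping of the paper's probability-$\tfrac12$ claim, so no new ideas are needed.
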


%\vspace{-3mm}
\begin{proof}
Start with a YES instance of the \BTW\ problem with optimal permutation $\pi^*$ and $val(\pi^*)\ge (1-\epsilon)K$. Viewing each \BTW\ constraint $a|b|c$ as a forbidden triplet $ac|b$, we show how to construct a tree $T$ such that $val(T)\ge (1-\delta(\eps))K$. In fact, the construction is straightforward: simply assign the $n$ labels, in the order they appear in $\pi^*$, as the leaves of a caterpillar tree (every internal node has its left child being a leaf). Observe that this caterpillar tree satisfies: $val(T)\ge (1-\epsilon)K$. This is because if a \BTW\ constraint $a|b|c$ was obeyed by $\pi^*$, it will also be avoided (viewed as a forbidden triplet $ac|b$) by the caterpillar tree above: if $a$ appears first in the permutation then the caterpillar will avoid $ac|b$ as $a$ gets separated first, otherwise if $c$ appears first, then again the caterpillar tree will avoid $ac|b$ as $c$ gets separated first. 

The NO instance is more challenging. Start with a NO instance of the \BTW\ problem with optimal $\pi^*$ of value $val(\pi^*)\le (\tfrac13+\eps)K$. Viewing the \BTW\ constraints as forbidden triplets, we show that the optimum tree $T^*$  cannot achieve better than $>(2/3+2\epsilon)K$, because this would imply that $val(\pi^*)> (\tfrac13+\eps)K$, which is a contradiction. For this, assume that some tree $T$ scored a value $val(T)>(2/3+2\epsilon)K$. We will construct a permutation $\pi$ from the tree $T$ with value $val(\pi)>(1/3 +\epsilon)K$, a contradiction. Notice that there are forbidden triplets that may be avoided by the tree, yet obeyed by the permutation: for example for a forbidden triplet  $t=ac|b$, the tree $R$ that first removes $a$ and then splits $b,c$ will successfully avoid $t$, however the  permutation $acb$ can come from $R$ by projection, however $acb$ does not obey the \BTW\ constraint $a|b|c$. Hence directly projecting the leaves of $T$ onto a line may not satisfy $>(1/3+2\epsilon)K$, since every forbidden triplet $ac|b$ avoided by $T$, can be ordered by this projected permutation in a way that would not obey the corresponding \BTW\ constraint $a|b|c$.  However, just by randomly swapping each left and right child for every internal node in the tree before we do the projection to the permutation, would satisfy $1/2\cdot(2/3 +2\epsilon)K=(1/3+\epsilon)K$ number of constraints. To see this, note that with probability $\tfrac12$ a forbidden $ac|b$ avoided by $T$ will be mapped to the desired $abc$ (and not $acb$) or $cba$ (and not $cab$) ordering.

Finally, we get $val(\pi^*)\ge val(\pi)> (1/3+\epsilon)K$, a contradiction that we were given a NO instance. To conclude, $\tfrac23$-inapproximability follows from the gap of these two instances.
\end{proof}
For the Desired Triplets problem, the proof proceeds in a similar fashion. One main difference is that we prove hardness of $\tfrac23$ under P $\neq$ NP, without assuming \UGC. The reason is that we reduce from the \nBTW\ problem that is known to be approximation resistant, subject only to P $\neq$ NP. Of course, one open question is to close the gap between this $\tfrac23$ factor and the current best approximation of $\tfrac13$.

\begin{theorem}
Let $K$ be the total number of the triplet constraints in an instance of Desired Triplets Consistency. For any $\delta>0$, it is NP-hard to distinguish:\\
\textbf{YES}: $val(T^*)\ge (\tfrac12-\delta)K$\\
\textbf{NO}: $val(T^*)\le (\tfrac13 +\delta)K$
\end{theorem}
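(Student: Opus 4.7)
The plan is to mirror the Forbidden Triplets argument of Theorem~\ref{th:forbidden}, but reduce from \nBTW\ in place of \BTW, since both \nBTW\ and Desired Triplets have a random baseline of $\tfrac23$. I invoke the analogous approximation-resistance fact for \nBTW\ (NP-hard, no UGC needed): for every $\epsilon>0$, it is NP-hard to distinguish \nBTW\ instances with $val(\pi^*)\ge(1-\epsilon)K$ from those with $val(\pi^*)\le(\tfrac23+\epsilon)K$. The reduction simply reinterprets each \nBTW\ constraint $bc|a$ (asserting that $a$ should not lie between $b$ and $c$) as a desired triplet $bc|a$ (asserting that $a$ should be separated first from $\{b,c\}$) on the same ground set. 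Pick $\epsilon < \delta/2$.

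For the completeness (YES) direction, given $\pi^*$ satisfying $\ge(1-\epsilon)K$ \nBTW\ constraints, I would build the two caterpillar trees $T_1=\text{caterpillar}(\pi^*)$ and $T_2=\text{caterpillar}(\pi^*_{\text{rev}})$, both defined as in the Forbidden Triplets proof. For any \nBTW\ constraint $bc|a$ satisfied by $\pi^*$, the leaf $a$ sits at an extreme of $\{a,b,c\}$ in $\pi^*$: if $a$ is first, $T_1$ separates $a$ first and obeys $bc|a$; if $a$ is last, $T_2$ does. These two sets of satisfied constraints are disjoint and together cover all $(1-\epsilon)K$ of them, so $\max(val(T_1),val(T_2))\ge(1-\epsilon)K/2\ge(\tfrac12-\delta)K$, establishing the YES claim.

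For the soundness (NO) direction I argue contrapositively: assume some tree $T$ obeys more than $(\tfrac13+\delta)K$ desired triplets, and form a random DFS projection of $T$, independently choosing a uniform left/right orientation at every internal node. For a desired triplet $bc|a$ \emph{obeyed} by $T$, the leaf $a$ alone sits on one side of the LCA of $\{a,b,c\}$, so in every projection $a$ lies at an extreme of $\{a,b,c\}$ and the \nBTW\ constraint $bc|a$ is satisfied with probability $1$. For a triplet \emph{not obeyed}, either $b$ or $c$ is the isolated leaf at that LCA; a short enumeration of the four equally likely relative orderings of $\{a,b,c\}$ over the random DFS shows that exactly two place $a$ at an extreme, so \nBTW\ $bc|a$ holds with probability $\tfrac12$. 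Linearity of expectation gives an expected \nBTW\ value of $val(T)+(K-val(T))/2=(K+val(T))/2>(\tfrac23+\delta/2)K$, and averaging yields some deterministic DFS orientation whose permutation has value strictly greater than $(\tfrac23+\epsilon)K$ whenever $\delta/2>\epsilon$, contradicting the \nBTW\ NO case.

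The main obstacle is the YES step, and it is what forces the asymmetric completeness $(\tfrac12-\delta)K$ rather than the more natural $(1-\delta)K$ seen in the Forbidden Triplets proof: a caterpillar on $\pi^*$ obeys the desired triplet $bc|a$ only when $a$ comes first in $\pi^*$, covering just half of the \nBTW-satisfied triples; the reverse caterpillar recovers the other half, but no single tree can simultaneously separate both the leftmost and rightmost extreme first. Closing the gap between $(\tfrac12-\delta)K$ and $(\tfrac13+\delta)K$ appears to require a fundamentally different reduction and is left open.
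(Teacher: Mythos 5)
Your proposal is correct and follows essentially the same route as the paper's own proof: a reduction from the NP-hard gap version of \nBTW, with the YES direction handled by the forward and reversed caterpillar trees (yielding the $\tfrac12$ completeness, tight for the same reason you note), and the NO direction handled by projecting the tree to a permutation after independently random left/right swaps at internal nodes, so that obeyed triplets convert with probability $1$ and disobeyed ones with probability $\tfrac12$, giving expected value $(K+val(T))/2$ and the same contradiction. The only differences are cosmetic (contrapositive phrasing and the explicit four-ordering enumeration), so there is nothing to add beyond the paper's argument.
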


Switching to quartet problems, our reductions are more challenging. The first challenge is that constraints are on $4$ items so we need to resort to an ordering CSP of arity $4$, that we term \SEP. Next, trees are unrooted and we want to generate an ordering on their leaves. To do this we first root the tree at some internal node and then follow a similar strategy for randomly reordering their children. For desired quartets we show hardness of $\tfrac23$ and for forbidden quartets a hardness of $\tfrac89$ (see App. \ref{app:beating} for statements). Recall that the best approximations are $\tfrac13$ and $\tfrac23$ respectively, achieved by a random (unrooted) tree.

%\begin{theorem}
%Let $K$ be the total number of the quartet constraints in an instance of Desired Quartets Consistency. For any $\delta>0$, it is UGC-hard to distinguish between instances of the following two cases:\\
%\textbf{YES}: $val(T^*)\ge (1-\delta)K$, i.e. the optimal tree satisfies almost  all the quartet constraints. \\
%\textbf{NO}: $val(T^*)\le (\tfrac23 +\delta)K$, i.e. the optimal tree does not satisfy more than a $\tfrac23$ fraction of the quartet constraints.
%\end{theorem}

%\begin{theorem}
%Let $K$ be the total number of the quartet constraints in an instance of Forbidden Quartets Consistency. For any $\delta>0$, it is UGC-hard to distinguish between instances of the following two cases:\\
%\textbf{YES}: $val(T^*)\ge (1-\delta)K$, i.e. the optimal tree satisfies almost  all the quartet constraints. \\
%\textbf{NO}: $val(T^*)\le (\tfrac89 +\delta)K$, i.e. the optimal tree does not satisfy more than a $\tfrac89$ fraction of the quartet constraints.
%\end{theorem}

\begin{remark}
Note that our hardness results give optimal results when restricted to (rooted or unrooted) \textit{caterpillar} trees, an important tree family, where each internal node has at least one leaf as a child. 
\end{remark}

\section{Conclusion and Open Problems}

We studied ranking, correlation clustering and hierarchical clustering under qualitative constraints and we presented a simple algorithm based on \MC\ that is able to overcome known hardness results under a random model. We also provided the first tight hardness of approximation for CSPs on trees shedding light to basic problems in computational biology and extending previous results by~\cite{guruswami2011beating} from ordering CSPs to trees. 

In fact, we believe that a nice open question is to prove that the two most important families of CSPs on trees (triplets and quartets consistency) are approximation resistant. Here we showed this for the case of forbidden triplets. More generally, we conjecture that all non-trivial CSPs on trees are approximation resistant, implying that the inapproximability results of~\cite{guruswami2011beating} can be extended from linear orderings to trees.
%say about beyond worst case analyses of the above problems maybe?

\section*{Acknowledgments}
The authors would like to thank Alessandro Epasto for interesting discussions during early stages of this work.

\bibliographystyle{alpha}
\bibliography{main}

\appendix
\section{Omitted Proofs - Improved Approximations via \MCS}
\label{app:beating}

In this first section of the Appendix, we present the omitted details for our positive results. Specifically, we show how to overcome impossibility results (see also Appendix~\ref{app:hardness}) by going beyond the hardness of approximation thresholds $\rho$ for each of the problems considered in the paper. As noted, to escape the worst-case analysis, we will assume the input is given as a set of $m$ noisy constraints generated according to our stochastic model and the goal is to obtain a solution with strictly more than $\rho m$ satisfied constraints. 

Recall that in Table~\ref{tab:result}, only for the results on Correlation Clustering and on Triplets Consistency marked with an asterisk (*), we required a mild balancedness assumption. The assumption here on the balancedness of the ground truth partition or ground truth hierarchical clustering is used in our reduction, and specifically when analyzing our \MC\ approach. It is needed in order to ensure that based on our stochastic model, our \MC\ approach can find a large cut in the constructed graph which later translates into a large portion of satisfied constraints. 

\begin{assumption}
\label{ass:triplets}
For a tree with $n$ leaves, a split $(L,R)$ at an internal node is called \emph{balanced} if $|L|=cn, |R|=(1-c)n$ with $\tfrac13\le c\le\tfrac23$. We assume that in the optimum tree there exists one split that is balanced. Similarly, for a clustering on $n$ nodes, if there exists a partition of the clusters into two sides $(L,R)$ such that $|L|=cn, |R|=(1-c)n$ with $\tfrac13\le c\le\tfrac23$, we say the clustering is balanced.
\end{assumption}
 This is a reasonable assumption since hierarchical clusterings tend to be balanced and indeed recursive balanced cuts tend to recover good hierarchies~\citep{charikar2017approximate}. In essence, we exclude caterpillar trees or more generally highly skewed trees that are generated by always removing tiny pieces out of a giant component. Moreover, such assumptions are common in generative graph models for clustering, e.g., the Stochastic Block Model~\citep{mossel2012stochastic,abbe2015exact} and for hierarchical clustering, e.g., the Hierarchical Stochastic Block Model~\citep{lyzinski2016community,cohen2019hierarchical,ghoshdastidar2019foundations}, where we expect to see at least two large communities emerge. For example, recent generative models like the Hierarchical Stochastic Block Model in~\cite{ghoshdastidar2019foundations} satisfy the balancedness assumption with $c=\tfrac12$.

\subsection{Quartets Consistency from Noisy constraints}
Let $\mathcal{Q_F}, \mathcal{Q_D}$ be the set of forbidden and desired quartet constraints with sizes $|\QF|=m_1, |\QD|=m_2$  respectively. The total number of generated constraints according to our stochastic model is denoted by $m=m_1+m_2$. Out of those constraints, let $\eps_1,\eps_2$ denote the fraction of the erroneous forbidden and erroneous desired quartet constraints respectively. Our main theorem here is:

\begin{theorem}\label{th:quartets}
Given $m=m_1+m_2$ constraints as above on $n$ items, our algorithm \MCS\ satisfies at least $(0.425-0.261\eps_1)m_1 +  (0.672-0.296\eps_2)m_2 $ on average, where $\eps_1, \eps_2$ are as above. If moreover $m_1,m_2\ge \Omega(\log n)$, the result holds w.h.p.
\end{theorem}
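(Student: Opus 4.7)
The plan is to apply the five-step \MCS\ template from Section~\ref{sec:pos} to the quartets setting, with the main complications being that (i) both types of constraints live on the same graph, and (ii) the output is an unrooted ternary tree rather than a ranking. First, I would build a single undirected graph $G$ on the $n$ leaves, adding six edges for every input quartet. For each desired quartet $ab|cd\in\QD$, I place a positive weight on each of the four ``cross'' pairs $(a,c), (a,d), (b,c), (b,d)$ and a negative weight on the two ``same-side'' pairs $(a,b),(c,d)$; for each forbidden quartet $ab|cd\in\QF$, the signs are flipped. The magnitudes $x,y$ (for desired) and $x',y'$ (for forbidden) are constants to be tuned so that, for a balanced 2-2 split of the four leaves, the cut weight is maximized precisely when the split ``obeys'' the quartet. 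A case analysis over 2-2, 3-1, and 4-0 splits then expresses the cut weight as an explicit linear combination of ``obeyed/disobeyed at the top split'' counters, in analogy with eq.~(\ref{eq:comp}) for \MAS.

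Next, I apply an SDP-based \MC\ algorithm for undirected graphs with negative weights, obtaining a cut $(S,\bS)$ with $\EE[w(S,\bS)]\ge \alpha\, w(\OPT) - \beta\, W^-$, where $W^-$ is the total absolute negative weight in $G$. To lower bound $w(\OPT)$ I use a centroid edge of the ground-truth tree $T^*$: every unrooted ternary tree on $n$ leaves has an edge whose removal yields two leaf sides each of size $\ge n/3$, so no extra balancedness assumption is needed (consistent with the absence of an asterisk for the quartet rows of Table~\ref{tab:result}). Because the four leaves of each quartet are drawn uniformly at random in the stochastic model, a direct counting argument gives the expected cut weight of this centroid split in terms of $m_1, m_2, \eps_1, \eps_2$; since \OPT\ beats this particular cut, $w(\OPT)$ inherits the bound.

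Finally, I build the output by recursively attaching uniformly random ternary trees on $S$ and on $\bS$ and joining them via a bridge edge. Every quartet ``split across'' the cut contributes its cut weight directly to the obeyed/disobeyed tally; every quartet whose four leaves fall entirely on one side is obeyed by the random subtree with the baseline probability $\tfrac13$ (desired) or $\tfrac23$ (forbidden) for a uniformly random unrooted ternary tree; 3-1 splits are handled by a similar case analysis (for instance, a desired $ab|cd$ with $d$ alone in $\bS$ is obeyed iff the random subtree on $S$ does not split $\{a,b\}$ from $c$ before routing them to the bridge). Adding the guaranteed contribution from the cut, the random baseline on the residual constraints, and substituting the lower bound from the previous step, I optimize over $x, y, x', y'$ to obtain the claimed $(0.425 - 0.261\eps_1)m_1 + (0.672 - 0.296\eps_2)m_2$ bound on the expectation. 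The w.h.p.\ strengthening when $m_1, m_2 \ge \Omega(\log n)$ follows from Chernoff--Hoeffding concentration applied to the random constraint generation, exactly as in the opening remark of the paper.

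The main obstacle will be the joint tuning of $x, y, x', y'$: because the penalty $\beta W^-$ in the \MC\ guarantee mixes contributions from $\QF$ and $\QD$, the two ratios cannot be optimized independently, and the separate coefficients $0.425$ and $0.672$ should emerge only as the common numerical optimum of a single four-parameter program. A secondary difficulty is reconciling the 3-1 split accounting, since those splits contribute partially to the cut weight yet are resolved only probabilistically by the random subtree construction, so the two ways of counting their contribution must be matched carefully to avoid double-counting and to keep the $\eps_1, \eps_2$ dependence exactly linear.
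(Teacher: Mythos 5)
Your proposal is correct and follows essentially the same route as the paper: six signed edges per quartet, an SDP-based \MC\ guarantee of the form $\alpha\,w(\OPT)-\beta W^-$ for undirected graphs with negative weights, a lower bound on $w(\OPT)$ via a balanced (centroid-type) edge of the ground-truth trivalent tree with $c\in[\tfrac13,\tfrac23]$ (so indeed no extra assumption for quartets), baseline $\tfrac13$/$\tfrac23$ credit for unaffected and 3-1-split quartets, and Chernoff concentration for the w.h.p.\ claim. The only difference is that the paper fixes the weights at $\pm1,\pm2$ rather than optimizing four parameters, a choice that makes 3-1 splits contribute exactly zero cut weight and thereby removes the double-counting concern you flag.
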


For example, if the constraints are not erroneous (i.e., $\eps_1=\eps_2=0$), we satisfy 42.5\% of the desired quartets, while avoiding 67.2\% of the forbidden quartets, improving upon prior best approximations.

In order to prove Theorem~\ref{th:quartets}, we will require several intermediate lemmas and constructions.

Recall that forbidden quartets should be avoided, whereas desired quartets should be satisfied by the tree our algorithm finds. We use the following notation: Let $\QA(\ALG)$ denote the number of quartets $q\in\QF$ avoided, $\QF(\ALG)$ the number of quartets $q\in\QF$ \emph{not} avoided (of course, $\QF=m_2=\QA(\ALG)+\QF(\ALG)$) and $\QD(\ALG)$ the number of quartets $q\in \QD$ satisfied by the output phylogenetic tree. For the case of no errors $\eps_1=0, \eps_2=0$, the  best approximation under worst-case analysis is:

\[
\QD(\ALG)-\QF(\ALG) \ge \tfrac13(|\QD| - |\QF|) \iff \]\[\QD(\ALG)+\QA(\ALG) \ge \tfrac13 m_1 + \tfrac23 m_2
\]
In fact, the guarantees hold separately $\QD(\ALG)\ge  \tfrac13 |\QD|$ and $\QA(\ALG) \ge \tfrac23 |\QF|$ and are achieved either by a simple greedy algorithm or by a random tree \citep{he2006inferring}. Our goal is to find a tree beating the above guarantees, i.e., satisfying strictly more than $\tfrac13$ fraction of desired quartets and strictly more than $\tfrac23$ fraction of forbidden quartets. Our approach is based on extending a previous analysis from \cite{snir2006using} that only handled the case with $\QF=\emptyset$.

The end result of our algorithm \ALG, which is based on \MC, is a tree with the following guarantees:
\[\QD(\ALG)+\QA(\ALG) \ge\]
\begin{equation}\label{eq:beatq}
     \ge (0.425-0.261\eps_1)m_1 +  (0.672-0.296\eps_2)m_2 
\end{equation}

We start by instantiating our general algorithmic template in Algorithm~\ref{alg:decoding} to the case of the Quartets Consistency problem, and we describe the necessary changes for the appropriate graph construction below:

\paragraph{Graph Construction from constraints:} The goal here is to construct a graph encoding the qualitative information from the generated quartets so that a \MC\ subroutine can yield a reasonable first split of the output phylogenetic tree. Quartets $q \in \QF$ needs to be handled differently from quartets $q\in \QD$. For each forbidden $q=\{ab|cd\} \in \QF$ we add the following six $+$ or $-$ weighted edges:
\[
+2 \text{ for pairs }(a,b),(c,d)\text{ and }\]\[ -1\text{ for pairs }(a,c),(a,d),(b,c),(b,d)
\]

and for a $q=\{ab|cd\} \in \QD$ we add the following $+$ or $-$ edges:
\[
-2 \text{ for pairs }(a,b),(c,d) \text{ and }\]\[ +1\text{ for pairs }(a,c),(a,d),(b,c),(b,d)
\]
Let $G$ be the undirected weighed multigraph constructed from the constraints as above and let $(S,\bar{S})$ denote any graph cut into two parts. We say that a quartet $q=\{ab|cd\}\in \QF\cup\QD$ is \emph{unaffected} by the cut $(S,\bar{S}$ if all four labels $a,b,c,d$ end up in one of the two parts. For quartets whose endpoints are separated by the cut, we distinguish 3 cases: if one of the labels goes to one of the two parts while the remaining 3 labels go to the other part, we say that $q$ is \emph{postponed}. If precisely $a,b$ are contained in some part, while the other part contains precisely $c,d$, we say $q$ is \emph{obeyed}. In any other case, $q$ is $disobeyed$ (e.g., $a,c \in S$ and $b,d\in \bar{S}$ or the symmetric split $a,d\in S$ and $b,c\in \bar{S}$). The perhaps more natural terms $satisfied$ and $violated$ were not used as we deal both with desired and forbidden quartets and would be misleading when accounting for the maximization objective:
\begin{lemma}\label{lemma:cut}
The weight of any cut $(S,\bar{S})$ can be computed based on the status of the quartets as:
\[w(S,\bar{S}) = 2m^{\QF}_d(S,\bar S) - 4 m^{\QF}_o(S,\bar S)+\]
\begin{equation}\label{eq:cut}
    + 4 m^{\QD}_o(S,\bar S) - 2m^{\QD}_d(S,\bar S) 
\end{equation}

where $m^{\QF}_d, m^{\QD}_d$ is the number of disobeyed quartets by the cut that belong to $\QF, \QD$ respectively and similarly $m^{\QF}_o, m^{\QD}_o$ is the number of obeyed quartets from $\QF, \QD$ respectively.
\end{lemma}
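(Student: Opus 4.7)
The plan is to verify the identity by a case analysis on each quartet $q = \{ab|cd\}$, computing its individual contribution to $w(S,\bar S)$ under each of the four possible statuses (unaffected, postponed, obeyed, disobeyed) and summing. Since the graph $G$ is built edge-by-edge from the constraints, the total cut weight decomposes as a sum over quartets, so it suffices to check a contribution table and then aggregate.

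First I would handle a forbidden quartet $q\in\QF$, which contributes edges $+2$ on $(a,b),(c,d)$ and $-1$ on the four ``cross'' pairs. If all four endpoints lie on the same side of $(S,\bar S)$, no edge crosses and $q$ contributes $0$. If $q$ is postponed (one endpoint isolated from the other three), exactly three of the six edges of $q$ cross the cut; when the isolated vertex is, say, $a$, these are $(a,b),(a,c),(a,d)$ with weights $+2,-1,-1$ summing to $0$, and the same cancellation holds in the other three symmetric subcases. If $q$ is obeyed, i.e.\ $\{a,b\}$ is separated from $\{c,d\}$, then the crossing edges are exactly the four cross pairs, contributing $-1\cdot 4 = -4$. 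If $q$ is disobeyed, i.e.\ the cut realizes $\{a,c\}|\{b,d\}$ or $\{a,d\}|\{b,c\}$, then exactly two cross-pair edges (weight $-1$ each) and the two ``same-pair'' edges $(a,b),(c,d)$ (weight $+2$ each) cross, contributing $+2$.

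Next I would repeat for a desired quartet $q\in\QD$, whose signs are flipped: $-2$ on $(a,b),(c,d)$ and $+1$ on each cross pair. The same enumeration shows unaffected and postponed contribute $0$ (the $+1,+1,-2$ cancellation in the postponed case is the key check, identical to the forbidden case up to sign), obeyed contributes $+4$, and disobeyed contributes $-2$. Summing the per-quartet contributions over all quartets and grouping by status gives exactly
\[
w(S,\bar S) \;=\; 2\,m_d^{\QF} \;-\; 4\,m_o^{\QF} \;+\; 4\,m_o^{\QD} \;-\; 2\,m_d^{\QD},
\]
which is the claimed identity.

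The main (and only real) obstacle is to be careful with the postponed case: one must check all four choices of which endpoint is isolated and verify the $+2,-1,-1$ (respectively $-2,+1,+1$) cancellation each time, and likewise verify both orientations $\{a,c\}|\{b,d\}$ and $\{a,d\}|\{b,c\}$ in the disobeyed case. A clean way to organize this is a small table whose rows are the four statuses and whose columns are $\QF$ vs $\QD$; filling in the six edge contributions for one representative subcase per cell and invoking symmetry of the remaining subcases makes the verification essentially a bookkeeping exercise, after which the lemma follows by linearity.
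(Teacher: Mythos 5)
Your proposal is correct and follows essentially the same route as the paper's proof: a per-quartet case analysis over the four statuses (unaffected, postponed, obeyed, disobeyed) showing contributions $0,0,-4,+2$ for forbidden and $0,0,+4,-2$ for desired quartets, then summing by linearity. The only difference is that you spell out the edge-by-edge cancellations (e.g.\ the $+2,-1,-1$ check in the postponed case), which the paper asserts without detail.
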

\begin{proof}
Note that by our choice for the edge weights, if $q=\{ab|cd\}$ is postponed or unaffected by the cut $(S,\bar S$), its contribution to $w(S,\bar S)$ is 0 regardless of $q\in \QF$ or $q\in \QD$. Now, if a forbidden $q\in \QF$ is obeyed, that counts as a mistake and it decreases the weight of the cut by -4, whereas if it is disobeyed, that counts as a correct choice and it increases the weight of the cut by +2. Accordingly we compute the contribution for the desired quartets $q\in \QD$ as +4 if obeyed and -2 if disobeyed. Summing over all constraints gives us the lemma.
\end{proof}

The final step is to compute the overall quartets our algorithm had success on, relative to the sample sizes $m_1,m_2$:

\begin{lemma}\label{lem:quartet_MC} If $(S,\bS)$ is the first split of \ALG, the total number of quartets decomposed correctly is: 
\[
\ALG = \QA(\ALG) + \QD(\ALG)\ge \tfrac23m_1 + \tfrac13m_2+\tfrac16w(S,\bS)
\]
\end{lemma}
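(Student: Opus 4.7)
Proof plan for Lemma~\ref{lem:quartet_MC}.

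The plan is to condition on the fixed first split $(S,\bS)$ and take expectation only over the randomness in how the algorithm completes the tree on each side. I will classify every quartet $q=\{a,b,c,d\}$ in $\QF\cup \QD$ by how its four endpoints are distributed between $S$ and $\bS$: \emph{obeyed} (the cut matches the desired pairing), \emph{disobeyed} (the cut separates the pair $\{a,b\}$ from $\{c,d\}$ in a \emph{different} way than written), \emph{postponed} (a $3{+}1$ split), and \emph{unaffected} (all four on the same side). Obeyed and disobeyed are already counted by Lemma~\ref{lemma:cut}; the work is to analyze what happens for postponed and unaffected quartets under the random ternary tree completion on each side.

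First I would handle the ``easy'' leaves of the case analysis. If $q\in\QF$ is obeyed by the cut then the final tree realizes $ab|cd$ and $q$ contributes $0$ to $\QA(\ALG)$; if $q\in\QF$ is disobeyed then the final tree realizes $ac|bd$ or $ad|bc$ and $q$ contributes $1$. Symmetrically, $q\in\QD$ obeyed contributes $1$ to $\QD(\ALG)$ and $q\in\QD$ disobeyed contributes $0$.

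The slightly more delicate step is the $3{+}1$ (postponed) and $4{+}0$ (unaffected) cases. For a postponed quartet with, say, $a,b,c\in S$ and $d\in \bS$, the path between $d$ and each of $a,b,c$ must traverse the central edge separating $S$ from $\bS$. Hence the induced quartet topology on $\{a,b,c,d\}$ is entirely determined by which pair among $a,b,c$ is split off last in the random binary tree on $S$: if the close pair inside $S$ is $\{x,y\}\subset\{a,b,c\}$ then the quartet is realized as $xy|zd$ where $z$ is the remaining element. In a uniform binary tree on three leaves each of the three possible close pairs occurs with probability $\tfrac13$, so the quartet lands on each of $ab|cd,\ ac|bd,\ ad|bc$ with probability $\tfrac13$ each. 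Consequently a forbidden postponed $q$ is avoided with probability $\tfrac23$ and a desired postponed $q$ is satisfied with probability $\tfrac13$. The same $\tfrac23$ / $\tfrac13$ probabilities hold for unaffected quartets, because a uniform binary tree on four leaves realizes each of the three possible quartet topologies with probability $\tfrac13$.

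Writing $m_o^{\QF},m_d^{\QF}$ (and likewise for $\QD$) for the obeyed/disobeyed counts, summing the contributions above yields
\begin{equation*}
\EE[\QA(\ALG)]=m_d^{\QF}+\tfrac23\bigl(m_1-m_o^{\QF}-m_d^{\QF}\bigr)=\tfrac23 m_1+\tfrac13 m_d^{\QF}-\tfrac23 m_o^{\QF},
\end{equation*}
and symmetrically $\EE[\QD(\ALG)]=\tfrac13 m_2+\tfrac23 m_o^{\QD}-\tfrac13 m_d^{\QD}$. Adding these two identities and plugging in Lemma~\ref{lemma:cut}, which gives $\tfrac16 w(S,\bS)=\tfrac13 m_d^{\QF}-\tfrac23 m_o^{\QF}+\tfrac23 m_o^{\QD}-\tfrac13 m_d^{\QD}$, produces exactly $\tfrac23 m_1+\tfrac13 m_2+\tfrac16 w(S,\bS)$, so the inequality in fact holds with equality in expectation. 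I expect the only real obstacle to be the careful verification of the postponed case: one must check that, under a uniform binary tree on $S$, the three close-pair events inside $S$ really do induce the three quartet topologies with equal probability, which is where conflating a rooted-tree output with its unrooted interpretation could introduce subtle errors.
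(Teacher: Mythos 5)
Your proof is correct and follows essentially the same route as the paper: classify each quartet by the cut as obeyed/disobeyed/postponed/unaffected, credit contributions $1$, $0$, $\tfrac23$, $\tfrac13$ respectively, and combine the resulting identity with Lemma~\ref{lemma:cut}. The only difference is that you explicitly verify the $\tfrac23$ and $\tfrac13$ success rates of the random completion on each side (the paper simply invokes the known greedy/random baseline guarantees there), which is consistent with the lemma's ``on average'' reading.
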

\begin{proof}
Let $m^{\QF}_p, m^{\QF}_u$ denote the number of postponed or unaffected by the cut forbidden quartets, and $m^{\QD}_p, m^{\QD}_u$ denote the number of postponed or unaffected by the cut desired quartets. Our algorithm first uses an approximation to \MC and then proceeds greedily (or randomly) to achieve the baseline guarantees by building a tree on $S$ and on $\bS$:
\[
\ALG \ge m^{\QF}_d(S,\bar S) + \tfrac23 (m^{\QF}_u(S,\bar S) + m^{\QF}_p(S,\bar S)) +\]\[+ m^{\QD}_o(S,\bar S)+\tfrac13(m^{\QD}_u(S,\bar S) + m^{\QD}_p(S,\bar S))
\]
For notation purposes, from now on we drop the parentheses $(S,\bS)$ from the terms since we always refer to the $(S,\bS)$ cut.  Observe that $m_1=m^{\QF}_d+m^{\QF}_o+m^{\QF}_u+m^{\QF}_p$ and similarly $m_2=m^{\QD}_d+m^{\QD}_o+m^{\QD}_u+m^{\QD}_p$. By substituting the terms for unaffected and postponed quartets we get:
\[\ALG\ge \]
\[
m^{\QF}_d  + \tfrac23 (m_1-m^{\QF}_d - m^{\QF}_o) + m^{\QD}_o+\tfrac13(m_2 -m^{\QD}_d - m^{\QD}_o) 
\]
\[
=\tfrac23m_1+\tfrac13m^{\QF}_d-\tfrac23m^{\QF}_o+\tfrac13m_2+\tfrac23m^{\QD}_o-\tfrac13m^{\QD}_d
\]
\[
=\tfrac23m_1+\tfrac13m_2+\tfrac16(2m^{\QF}_d- 4 m^{\QF}_o + 4 m^{\QD}_o - 2m^{\QD}_d)
\]
From equation~(\ref{eq:cut}), the last term is equal to the weight of the $(S,\bS)$ cut and this finishes the proof.
\end{proof}

Now we need to show that there is a good cut with high weight in the graph. Recall that the graph has positive and negative edges. For such graphs, the guarantee of the rounding algorithm of~\cite{goemans1995improved} is as follows:
\begin{fact}\label{fact}
For graphs with both positive and negative weights, one can efficiently find a cut $(S,\bS)$ with weight:
\[
w(S,\bS)\ge 0.878w(S^*,\bS^*)-0.122 W^-
\]
where $(S^*,\bS^*)$ is the optimum solution for \MC\ and $W^-$ is the absolute sum of all negative edge weights.
\end{fact}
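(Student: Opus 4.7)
The plan is to prove this via the standard Goemans--Williamson SDP relaxation, augmented with a decomposition of the rounding probability that isolates the additive loss attributable to negative-weight edges. First, I would write the SDP relaxation $\max \sum_{(i,j)} w_{ij}(1 - x_i\cdot x_j)/2$ subject to $\|x_i\|^2 = 1$. The crucial observation is that this remains a valid \emph{relaxation} even when weights are allowed to be negative: any integer cut $(S,\bS)$ with assignment $y_i\in\{\pm 1\}$ yields a feasible SDP solution (take $x_i = y_i\, e_1$) on which the SDP objective equals the cut value, so if $Z$ denotes the SDP optimum then $Z \geq w(S^*,\bS^*)$.

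Next I would apply random hyperplane rounding: sample a spherically symmetric Gaussian $r$ and set $S = \{i : x_i \cdot r \geq 0\}$. Writing $\rho_{ij} := x_i\cdot x_j$, the standard calculation gives $\Pr[(i,j)\text{ is cut}] = \arccos(\rho_{ij})/\pi$, hence $\EE[w(S,\bS)] = \sum_{(i,j)} w_{ij}\arccos(\rho_{ij})/\pi$. The heart of the argument is the pointwise decomposition
\[
\frac{\arccos(\rho)}{\pi} \;=\; \alpha_{GW}\cdot \frac{1-\rho}{2} \;+\; \delta(\rho), \qquad \delta(\rho) \;:=\; \frac{\arccos(\rho)}{\pi} - \alpha_{GW}\cdot\frac{1-\rho}{2}.
\]
The classical GW analysis already supplies the one-sided bound $\delta(\rho) \geq 0$ for all $\rho \in [-1,1]$ (this is precisely the statement that $\alpha_{GW}\approx 0.878$ is the GW constant). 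The new ingredient I need is the \emph{complementary} bound $\delta(\rho) \leq 1 - \alpha_{GW}$, which I would establish by an elementary calculus check: $\delta$ is smooth on $(-1,1)$, $\delta(1)=0$, $\delta(-1)=1-\alpha_{GW}$, and evaluating $\delta$ at the interior critical points (where $\sqrt{1-\rho^2} = 2/(\pi\alpha_{GW})$) shows $\delta$ never exceeds $1-\alpha_{GW}$ on $[-1,1]$.

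With this two-sided control on $\delta$, the remainder is bookkeeping. For every positive-weight edge $w_{ij}\delta(\rho_{ij}) \geq 0$, and for every negative-weight edge $w_{ij}\delta(\rho_{ij}) \geq -|w_{ij}|(1-\alpha_{GW})$. Summing,
\[
\EE[w(S,\bS)] \;=\; \alpha_{GW}\cdot Z \;+\; \sum_{(i,j)} w_{ij}\,\delta(\rho_{ij}) \;\geq\; \alpha_{GW}\cdot Z \;-\; (1-\alpha_{GW})\cdot W^{-} \;\geq\; \alpha_{GW}\cdot w(S^*,\bS^*) \;-\; (1-\alpha_{GW})\cdot W^{-},
\]
which is the claimed guarantee with $\alpha_{GW}\approx 0.878$ and $1-\alpha_{GW}\approx 0.122$. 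Standard derandomization (conditional expectations over a discretized net of hyperplanes) then turns the expectation into a deterministic polynomial-time algorithm. The only conceptual subtlety beyond the classical positive-weight argument is the upper bound on $\delta$: without it, a single negatively weighted edge at $\rho\approx -1$ (where rounding cuts with probability near $1$, against our interests on a negative edge) could in principle spoil the approximation, and charging at most $(1-\alpha_{GW})|w_e|$ per negative edge is exactly what keeps the loss controlled by $W^{-}$.
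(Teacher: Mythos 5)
Your proposal is correct and follows essentially the same route the paper relies on: the Goemans--Williamson SDP (which remains a valid relaxation with negative weights) plus random hyperplane rounding, charging each negative edge a loss of at most $(1-\alpha_{GW})|w_{ij}|$ — precisely the shifted analysis $\EE[w(S,\bS)]+W^-\ge \alpha_{GW}\,(\SDP+W^-)$ that the paper invokes from \cite{goemans1995improved} and reproduces in detail for its directed variant. One small simplification: your complementary bound $\delta(\rho)\le 1-\alpha_{GW}$ needs no calculus, since $\delta(\rho)+\delta(-\rho)=1-\alpha_{GW}$ and $\delta\ge 0$, i.e., it is just the standard GW inequality applied at $-\rho$.
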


The cut $(S,\bS)$ is produced in the same manner as in the standard Goemans-Williamson algorithm via random hyperplane rounding on their semidefinite relaxation for \MC.
We will use this fact to prove the following:
\begin{lemma}\label{lem:quartet_LB} The weight of the top split relative to the sizes of the quartet constraints is:
\[
w(S,\bS)\ge (0.03229-1.56\eps_1)m_1+(0.5525-1.56\eps_2)m_2 
\]
\end{lemma}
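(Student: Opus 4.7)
The plan is to apply Fact~\ref{fact}, which yields $w(S,\bS) \ge 0.878\, w(S^*,\bS^*) - 0.122\, W^-$, and so the task splits into (i) computing $W^-$ from the graph construction and (ii) exhibiting a concrete cut whose weight lower-bounds the optimum $w(S^*,\bS^*)$. The first is immediate from the construction preceding Lemma~\ref{lemma:cut}: each forbidden quartet contributes four edges of weight $-1$ and each desired quartet contributes two edges of weight $-2$, so $W^- = 4m_1 + 4m_2$.

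For the second, I would use the \emph{centroid edge} of the ground-truth phylogenetic tree $T^*$. Since $T^*$ is an unrooted ternary tree, a standard centroid argument produces an edge whose removal partitions the leaves into $(L,R)$ with $|L|=\alpha n$ and $|R|=(1-\alpha)n$ for some $\alpha \in [\tfrac13,\tfrac23]$; no assumption on $T^*$ is needed, matching the remark that the quartets result is assumption-free. The structural property I rely on is: \emph{whenever} a 4-tuple $\{a,b,c,d\}$ has exactly two leaves on each side of $(L,R)$, the quartet induced by $T^*$ on these leaves coincides with the split dictated by $(L,R)$, since the $T^*$-path between two same-side leaves stays within a single subtree.

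Combined with the stochastic model, a random 4-tuple falls in a 2-2 pattern with probability at least $6\alpha^2(1-\alpha)^2 \ge \tfrac{8}{27}$ (minimum at $\alpha=\tfrac13$ or $\tfrac23$). Conditioning on this event, Lemma~\ref{lemma:cut} tells us that a correctly generated forbidden quartet (probability $1-\eps_1$) is disobeyed by the centroid cut and contributes $+2$, while an erroneous one is obeyed and contributes $-4$; analogously a correct desired quartet contributes $+4$ and an erroneous one $-2$. Taking expectations over the random generation of constraints,
\[
\EE\bigl[w(L,R)\bigr] \;\ge\; \tfrac{8}{27}\bigl[(2-6\eps_1)\, m_1 + (4-6\eps_2)\, m_2\bigr],
\]
which lower-bounds $w(S^*,\bS^*)$. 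Plugging this along with $W^-=4(m_1+m_2)$ into Fact~\ref{fact} and simplifying the numerics $0.878\cdot\tfrac{16}{27} - 0.488 \approx 0.03229$, $0.878\cdot\tfrac{32}{27} - 0.488 \approx 0.5525$, and $0.878\cdot\tfrac{48}{27} \approx 1.56$ yields the stated bound.

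I expect the main obstacle to be the probabilistic bookkeeping: one must verify that conditioning on the 2-2 pattern does not bias the $(1-\eps)$ correctness probability (which holds because correctness is sampled independently of item identities), that the case analysis correctly aligns all three possible 2-2 splits on any four items with the contributions prescribed by Lemma~\ref{lemma:cut}, and finally that the expectation can be promoted to a high-probability statement via a standard Hoeffding bound on bounded independent per-constraint summands once $m_1,m_2 \ge \Omega(\log n)$. A subtle point worth flagging is that the centroid cut is merely a witness for the MaxCut optimum in $G$, never claimed to be optimal, so the $0.122\, W^-$ deduction in Fact~\ref{fact} is necessary and is precisely what the stated coefficients absorb.
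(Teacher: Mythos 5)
Your proposal is correct and follows essentially the same route as the paper: compute $W^-=4(m_1+m_2)$, lower-bound the optimum cut by a balanced edge-induced split $(L,R)$ of the ground-truth trivalent tree (your centroid-edge argument is exactly the paper's ``there is always an edge inducing a $(\tfrac13,\tfrac23)$-balanced cut''), use the $6c^2(1-c)^2\ge\tfrac{8}{27}$ bound on the 2--2 split probability together with the contributions from Lemma~\ref{lemma:cut}, and plug into Fact~\ref{fact}. Your expected-weight bound $\tfrac{8}{27}\bigl[(2-6\eps_1)m_1+(4-6\eps_2)m_2\bigr]$ is identical to the paper's equation~(\ref{eq:optQ}), and the numerical simplifications match.
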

%Perhaps with exact 3-cut and negative cancelation we can get an improvement for when m=n^2 logn and further if m is dense then we can get a PTAS for forbidden quartets. There are extensions from the Snir paper that we don't focus here.
\begin{proof}
Observe that in the constructed graph, the total negative weight is $W^-= 4m_1+4m_2$ as each quartet adds a total negative weight of -4. In order to use Fact~\ref{fact}, we require a lower bound on the optimum value $w(S^*,\bS^*)$.

Notice that for any phylogenetic tree, since all 
internal vertices have three neighbors each (a trivalent
tree), we can always find an edge that induces a \emph{balanced} cut. For $n$ leaves, a cut $(L,R)$ is called balanced if $|L|=cn, |R|=(1-c)n$ with $\tfrac13\le c\le\tfrac23$. From our uniform generating model, recall that the number of quartet constraints the cut $(L,R)$ succeeds at is:
\[
\mathbb{E}(m^{\QF}_{d})=6c^2(1-c)^2(1-\eps_1)m_1\ \]\[ \mathbb{E}(m^{\QD}_{o})=6c^2(1-c)^2(1-\eps_2)m_2
\] and the number of constraints the cut $(L,R)$ fails at, due to the erroneous constraints is:
\[
\mathbb{E}(m^{\QF}_{o})=6c^2(1-c)^2\eps_1m_1,\ \ \  \mathbb{E}(m^{\QD}_{d})=6c^2(1-c)^2\eps_2m_2
\]
The quantity $c^2(1-c)^2$ with $\tfrac13\le c\le\tfrac23$ attains a minimum value of $\tfrac{4}{81}$ when $c=\tfrac13$; hence, from Lemma~\ref{lemma:cut}, the weight of the cut $(L,R)$ on the constructed graph is:
\[
w(L,R)\ge 2m^{\QF}_d(L,R) - 4 m^{\QF}_o(L,R) +\]\[+ 4 m^{\QD}_o(L,R) - 2m^{\QD}_d(L,R) 
\]
\begin{equation}\label{eq:optQ}
  \ge \tfrac{16}{27}(1-\eps_1)m_1 - \tfrac{32}{27}\eps_1m_1+\tfrac{32}{27}(1-\eps_2)m_2 - \tfrac{16}{27}\eps_2m_2  
\end{equation}

Of course the optimum cut has even larger weight than the specific balanced $(L,R)$ cut so: $w(S^*,\bS^*)\ge w(L,R)$. Substituting equation~(\ref{eq:optQ}) in Fact~\ref{fact} yields the lemma.
%Using the fact that there is a cut $(L,R)$ that is balanced, we show why our uniform sampling technique yields many quartets obeyed by the cut
\end{proof}

\begin{proof}[Proof of Theorem~\ref{th:quartets}]
From Lemma~\ref{lem:quartet_MC}, we have a lower bound on our algorithm's performance via the approximate max cut. Substituting the quantity $w(S,\bar{S})$ based on Lemma~\ref{lem:quartet_LB}, yields the theorem.
\end{proof}

From the above, notice that we can still beat the prior best baselines as long as the error rates are not too big ($\eps_1 \le 3.4\%$ and $\eps_2\le 35.4\%$).

\subsection{Triplets Consistency from Noisy constraints}

Here we show a similar approximation result but for Triplets. Let $\mathcal{T_F}, \mathcal{T_D}$ be the set of forbidden and desired triplet constraints with sizes $|\TF|=m_1, |\TD|=m_2$  respectively. The total number of generated constraints is denoted by $m=m_1+m_2$. Out of those constraints, let $\eps_1,\eps_2$ denote the fraction of the erroneous forbidden and erroneous desired triplet constraints respectively. 

\begin{theorem}\label{th:triplets}
Given $m=m_1+m_2$ constraints as above on $n$ items, our algorithm \MCS\ satisfies at least $(\tfrac23 +0.11378-0.5853\eps_1)m_1+(\tfrac13+0.30886-0.5853\eps_2)m_2 $ on average, where $\eps_1, \eps_2$ are as above. If moreover $m_1,m_2\ge \Omega(\log n)$, the result holds w.h.p.
\end{theorem}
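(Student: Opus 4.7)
The plan is to instantiate the same five-step template that worked for quartets (Theorem~\ref{th:quartets}), swapping the two-pair edge gadget on four leaves for an analogous gadget on three leaves. For each desired triplet $ab|c \in \TD$ I would add the undirected weighted edges $-2$ on $(a,b)$ and $+1$ on each of $(a,c)$ and $(b,c)$; for each forbidden triplet $ab|c \in \TF$, the opposite signs. A case check over the four ways the cut $(S,\bar S)$ can restrict to $\{a,b,c\}$---all on one side, or one of three ``singleton vs.\ pair'' splits---then gives the triplet analogue of Lemma~\ref{lemma:cut}, namely
\begin{equation*}
  w(S,\bar S) \;=\; 2\, m^{\TD}_o - m^{\TD}_d + m^{\TF}_d - 2\, m^{\TF}_o,
\end{equation*}
where the superscripts/subscripts denote desired/forbidden triplets obeyed/disobeyed by the cut (``obeyed'' is good for desired, bad for forbidden).

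Once $(S,\bar S)$ is fixed, the unaffected triplets are contained in one of $S$ or $\bar S$ and are handled recursively by the baseline random-tree construction, achieving $\tfrac{1}{3}$ on desired and $\tfrac{2}{3}$ on forbidden triplets. Substituting $m^{\TD}_u = m_2 - m^{\TD}_o - m^{\TD}_d$ and $m^{\TF}_u = m_1 - m^{\TF}_o - m^{\TF}_d$, the same bookkeeping as in Lemma~\ref{lem:quartet_MC} collapses to
\begin{equation*}
  \ALG \;\ge\; \tfrac{2}{3} m_1 + \tfrac{1}{3} m_2 + \tfrac{1}{3}\, w(S,\bar S).
\end{equation*}

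The main obstacle will be the max-cut lower bound, the triplet analogue of Lemma~\ref{lem:quartet_LB}. By Assumption~\ref{ass:triplets} the ground-truth tree admits an internal edge whose removal yields a balanced cut $(L^*, R^*)$ with $|L^*|/n = c_0 \in [\tfrac{1}{3}, \tfrac{2}{3}]$. The crucial structural observation is that for a uniformly sampled triple of leaves, with probability $3c_0(1-c_0)$ it is split by $(L^*, R^*)$ into a singleton and a pair, and in that event the singleton leaf of the triplet is precisely the element that the ground-truth tree separates first from the other two. Combining with the noise model, a correctly generated desired (resp.\ forbidden) triplet that is split by $(L^*,R^*)$ is therefore obeyed (resp.\ disobeyed) by the cut, contributing $+2$ (resp.\ $+1$) to $w(L^*,R^*)$, while the $\eps_2 m_2$ (resp.\ $\eps_1 m_1$) erroneous constraints contribute $-1$ (resp.\ $-2$). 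Taking expectations and using $3c_0(1-c_0) \ge \tfrac{2}{3}$ on $[\tfrac{1}{3},\tfrac{2}{3}]$ yields
\begin{equation*}
  w(S^*,\bar S^*) \;\ge\; \mathbb{E}[w(L^*,R^*)] \;\ge\; \tfrac{2}{3}\bigl[(1-3\eps_1) m_1 + (2-3\eps_2) m_2\bigr].
\end{equation*}

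To close the argument I apply Fact~\ref{fact} to the signed-weight multigraph, noting $W^- = 2(m_1+m_2)$ since each triplet contributes total negative weight $2$, and substitute into the algorithmic bound from step~2. The numerical constants in the statement then emerge mechanically: $\tfrac{1}{3}(0.878\cdot\tfrac{2}{3} - 0.244) \approx 0.11378$, $\tfrac{1}{3}(0.878\cdot\tfrac{4}{3} - 0.244) \approx 0.30886$, and $\tfrac{1}{3}\cdot 0.878\cdot 2 \approx 0.5853$. The w.h.p.\ version follows from the standard concentration remark for $m_1, m_2 \ge \Omega(\log n)$. The only delicate piece is really the third paragraph: identifying the in-triplet singleton with the ``separated-first'' element of the ground truth uses only the top structure of the balanced edge, and the $\eps_1,\eps_2$ noise has to be tracked carefully because it flips which configuration occurs while keeping the singleton on the same side of the cut.
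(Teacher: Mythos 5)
Your proposal is correct and follows essentially the same route as the paper's own proof: the identical $\pm$-edge gadget on three leaves, the same cut-weight identity and the bound $\ALG \ge \tfrac23 m_1 + \tfrac13 m_2 + \tfrac13 w(S,\bar S)$, the same lower bound on the optimum cut via the balanced ground-truth split with $3c(1-c)\ge \tfrac23$, and the same application of Fact~\ref{fact} with $W^- = 2(m_1+m_2)$, yielding the stated constants. Your explicit observation that the cut-singleton coincides with the first-separated leaf is just the justification the paper leaves implicit (and, like the paper, it tacitly uses that the balanced split sits at the top of the ground-truth tree so that both sides are clades), so there is nothing substantively different to flag.
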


For example, if the constraints are not erroneous (i.e., $\eps_1=\eps_2=0$), we satisfy $64\%$ of the desired triplets, while avoiding $78\%$ of the forbidden triplets. This latter ratio beats our worst-case inapproximability results for triplets (see also Appendix~\ref{app:hardness}).

The reason we stated the numerical values in this form is that the trivial baselines achieve ratios of $\tfrac23$ and $\tfrac13$ for $m_1$ and $m_2$ respectively.

Recall that forbidden triplets should be avoided, whereas desired triplets should be satisfied by the tree our algorithm finds. We use the following notation: Let $\TA(\ALG)$ denote the number of triplets $t\in\TF$ avoided, $\TF(\ALG)$ the number of triplets $t\in\QF$ \emph{not} avoided (of course, $\TF=m_2=\TA(\ALG)+\TF(\ALG)$) and $\TD(\ALG)$ the number of triplets $t\in \TD$ satisfied by the output rooted binary hierarchical tree. For the case of no errors $\eps_1=0, \eps_2=0$, the  best approximation under worst-case analysis is:
\[
\TD(\ALG)-\TF(\ALG) \ge \tfrac13(|\TD| - |\TF|) \iff\]\[ \TD(\ALG)+\TA(\ALG) \ge \tfrac13 m_1 + \tfrac23 m_2
\]
In fact, the guarantees hold separately $\TD(\ALG)\ge  \tfrac13 |\TD|$ and $\TA(\ALG) \ge \tfrac23 |\TF|$ and are achieved either by a simple greedy algorithm or by a random tree \citep{he2006inferring}. Our goal is to find a tree beating the above guarantees, i.e., satisfying strictly more than $\tfrac13$ fraction of desired triplets and strictly more than $\tfrac23$ fraction of forbidden triplets.

The end result of our algorithm \ALG, which is based on \MC, is a tree with the following guarantees:
\[\TD(\ALG)+\TA(\ALG) \ge\]
\begin{equation}%\label{eq:beatq}
     (\tfrac23 +0.11378-0.5853\eps_1)m_1+(\tfrac13+0.30886-0.5853\eps_2)m_2 
\end{equation}

We proceed by describing the necessary changes to be made in our algorithmic template in Algorithm~\ref{alg:decoding}, in order to handle the triplet constraints.

\paragraph{Graph Construction from constraints:} The goal here is to construct a graph encoding the qualitative information from the generated triplets so that a \MC\ subroutine can yield a reasonable first split of the output binary hierarchical tree. Triplets $t \in \TF$ need to be handled differently from triplets $t\in \TD$. For each forbidden $t=\{ab|c\} \in \TF$ we add the following 3 $+$ or $-$ undirected weighted edges:
\[
+2 \text{ for the pair }(a,b)\text{ and } -1\text{ for pairs }(c,a),(c,b)
\]

and for a $t=\{ab|c\} \in \TD$ we add the following $+$ or $-$ edges:
\[
-2 \text{ for the pair }(a,b)\text{ and } +1\text{ for pairs }(c,a),(c,b)
\]
Let $G$ be the undirected weighed multigraph constructed from the constraints as above and let $(S,\bar{S})$ denote any graph cut into two parts. We say that a triplet $t=\{ab|c\}\in \TF\cup\TD$ is \emph{unaffected} by the cut $(S,\bar{S}$ if all three labels $a,b,c$ end up in one of the two parts. For triplets whose endpoints are separated by the cut, we distinguish 2 cases: if precisely $a,b$ are contained in some part, while the other part contains precisely $c$, we say $t$ is \emph{obeyed}. In any other case, $t$ is $disobeyed$ (e.g., $a,c \in S$ and $b\in \bar{S}$ or the symmetric split $b,c\in S$ and $a\in \bar{S}$). The perhaps more natural terms $satisfied$ and $violated$ were not used as we deal both with desired and forbidden quartets and would be misleading when accounting for the maximization objective:
\begin{lemma}\label{lemma:Tcut}
The weight of any cut $(S,\bar{S})$ can be computed based on the status of the triplets as:
\[ w(S,\bar{S}) =\]
\begin{equation}\label{eq:Tcut}
   m^{\TF}_d(S,\bar S) - 2 m^{\TF}_o(S,\bar S) + 2 m^{\TD}_o(S,\bar S) - m^{\TD}_d(S,\bar S) 
\end{equation}

where $m^{\TF}_d, m^{\TD}_d$ is the number of disobeyed triplets by the cut that belong to $\TF, \TD$ respectively and similarly $m^{\TF}_o, m^{\TD}_o$ is the number of obeyed triplets from $\TF, \TD$ respectively.
\end{lemma}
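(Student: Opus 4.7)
The plan is to mirror the proof of Lemma~\ref{lemma:cut} for quartets: verify the claim by case analysis on each triplet's status under the cut, tabulating the contribution of its three associated edges to $w(S,\bar{S})$. Since each input constraint contributes its own (disjoint) set of three edges, linearity of the cut function lets us sum the per-triplet contributions independently, and we only need to show that each contribution equals the coefficient promised by~\eqref{eq:Tcut}.

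First I would dispose of the easy case: an unaffected triplet $t$, with all three endpoints on the same side of $(S,\bar{S})$, contributes $0$ because none of its three edges cross the cut. Next I would handle a forbidden triplet $t=\{ab|c\}\in\TF$, whose edges are $+2$ on $(a,b)$ and $-1$ on each of $(c,a),(c,b)$. If $t$ is obeyed, the pair $\{a,b\}$ and the singleton $\{c\}$ lie on opposite sides of the cut, so $(a,b)$ does not cross while both negative edges do, giving a contribution of $-2$. If $t$ is disobeyed, exactly one of $a,b$ is separated from the other two leaves; in either sub-case $\{a,c\}|\{b\}$ or $\{b,c\}|\{a\}$ the edge $(a,b)$ crosses and exactly one of the two edges incident to $c$ crosses, yielding $+2-1=+1$. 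The key point here is that the symmetry between the two $-1$ edges at $c$ makes the two disobeyed sub-cases contribute identically, which is precisely what allows us to collapse them into the single counter $m^{\TF}_d(S,\bar{S})$.

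Then I would repeat the same bookkeeping for a desired triplet $t=\{ab|c\}\in\TD$, whose edges are $-2$ on $(a,b)$ and $+1$ on each of $(c,a),(c,b)$: an obeyed $t$ contributes $+1+1=+2$, and a disobeyed $t$ contributes $-2+1=-1$, again with both disobeyed sub-cases matching. Summing all per-triplet contributions over the input gives
\[
w(S,\bar{S}) \;=\; (+1)\,m^{\TF}_d(S,\bar{S}) \;+\; (-2)\,m^{\TF}_o(S,\bar{S}) \;+\; (+2)\,m^{\TD}_o(S,\bar{S}) \;+\; (-1)\,m^{\TD}_d(S,\bar{S}),
\]
which is exactly~\eqref{eq:Tcut}.

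I do not anticipate a real obstacle: this is pure bookkeeping on at most $2\times 3$ configurations. The only point requiring a line of justification is the symmetry argument collapsing the two disobeyed configurations of a triplet into one counter, and it follows immediately from the fact that our construction assigns equal weight to the two edges incident to the ``apex'' leaf $c$.
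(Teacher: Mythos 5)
Your proposal is correct and matches the paper's own argument: a case analysis on whether each triplet is unaffected, obeyed, or disobeyed by the cut, computing the per-triplet contribution ($0$, $-2$, $+1$ for forbidden and $0$, $+2$, $-1$ for desired) and summing by linearity. Your write-up is just a more explicit tally of which of the three edges cross in each configuration.
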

\begin{proof}
Note that by our choice for the edge weights, if $t=\{ab|c\}$ is unaffected by the cut $(S,\bar S$]), its contribution to $w(S,\bar S)$ is 0 regardless of $t\in \TF$ or $t\in \TD$. Now, if a forbidden $t\in \TF$ is obeyed, that counts as a mistake and it decreases the weight of the cut by $-2$, whereas if it is disobeyed, that counts as a correct choice and it increases the weight of the cut by +1. Accordingly we compute the contribution for the desired triplets $t\in \TD$ as $+2$ if obeyed and $-1$ if disobeyed. Summing over all constraints gives us the lemma.
\end{proof}

The final step is to compute the overall quartets our algorithm had success on, relative to the sample sizes $m_1,m_2$:

\begin{lemma}\label{lem:triplet_MC} If $(S,\bS)$ is the first split of \ALG, the total number of triplets decomposed correctly is: 
\[
\ALG = \TA(\ALG) + \TD(\ALG)\ge \tfrac23m_1 + \tfrac13m_2+\tfrac13w(S,\bS)
\]
\end{lemma}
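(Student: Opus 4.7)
The plan is to mirror the argument used in Lemma~\ref{lem:quartet_MC} almost verbatim, with the only real change being that triplets have three rather than four types of cut-interaction and that the baseline guarantees on the two recursive subproblems are $2/3$ (for forbidden) and $1/3$ (for desired) rather than the quartet values. For a fixed top cut $(S,\bS)$, I would first partition the constraints by status: write $m^{\TF}_d, m^{\TF}_o, m^{\TF}_u$ for disobeyed, obeyed, unaffected forbidden triplets (so $m^{\TF}_d+m^{\TF}_o+m^{\TF}_u=m_1$), and analogously $m^{\TD}_d, m^{\TD}_o, m^{\TD}_u$ with $m^{\TD}_d+m^{\TD}_o+m^{\TD}_u=m_2$.

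Next I would justify the per-type contribution to $\ALG$. A forbidden triplet $ab|c$ disobeyed by the cut (e.g.\ $a,c\in S$, $b\in \bS$) is automatically avoided by every tree extending the cut, hence contributes $1$ to $\TA(\ALG)$; a forbidden triplet obeyed by the cut is necessarily satisfied by every tree extending it, hence contributes $0$; a forbidden triplet unaffected by the cut is entirely contained in one side, and on that side we apply the baseline (random/greedy) algorithm from~\cite{he2006inferring}, which in expectation (or deterministically via greedy) avoids a $2/3$ fraction. Symmetrically, a desired triplet obeyed by the cut contributes $1$ to $\TD(\ALG)$, a disobeyed one contributes $0$, and an unaffected one contributes $1/3$ in expectation from recursion. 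Summing gives
\begin{equation*}
\ALG \;\ge\; m^{\TF}_d + \tfrac{2}{3}m^{\TF}_u + m^{\TD}_o + \tfrac{1}{3}m^{\TD}_u.
\end{equation*}

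The final step is a one-line algebraic manipulation: substitute $m^{\TF}_u = m_1 - m^{\TF}_d - m^{\TF}_o$ and $m^{\TD}_u = m_2 - m^{\TD}_o - m^{\TD}_d$, collect terms, and factor $\tfrac{1}{3}$ from the residual, obtaining
\begin{equation*}
\ALG \;\ge\; \tfrac{2}{3}m_1 + \tfrac{1}{3}m_2 + \tfrac{1}{3}\bigl(m^{\TF}_d - 2m^{\TF}_o + 2m^{\TD}_o - m^{\TD}_d\bigr).
\end{equation*}
By Lemma~\ref{lemma:Tcut}, the parenthesized expression is precisely $w(S,\bS)$, which completes the bound.

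I do not expect any serious obstacle: the derivation is purely bookkeeping once the cut-status dichotomy is set up, and the $\tfrac{1}{3}$ coefficient multiplying $w(S,\bS)$ matches the ratio between the weights placed on edges in the triplet graph construction ($+2/-1$) and the unit per-triplet contributions being tracked by $\ALG$. The one point to state carefully is that the baseline bounds of $2/3$ and $1/3$ apply to each recursive subproblem independently, which is legitimate because an unaffected triplet by definition has all three of its items on the same side of $(S,\bS)$ and therefore falls entirely inside a single recursive call.
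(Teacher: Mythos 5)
Your proposal is correct and follows essentially the same route as the paper's own proof: the same decomposition of triplets by cut status, the same per-type contributions $m^{\TF}_d + \tfrac23 m^{\TF}_u + m^{\TD}_o + \tfrac13 m^{\TD}_u$, and the same substitution and regrouping to recover $\tfrac13 w(S,\bS)$ via Lemma~\ref{lemma:Tcut}. Your added justification that disobeyed forbidden (resp.\ obeyed desired) triplets are settled correctly by any tree extending the cut, and that unaffected triplets fall wholly inside one recursive call where the $2/3$ and $1/3$ baselines apply, is exactly the reasoning the paper leaves implicit.
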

\begin{proof}
Let $m^{\TF}_u$ denote the number of  unaffected by the cut forbidden triplets, and $m^{\TD}_u$ denote the number of unaffected by the cut desired triplets. Our algorithm first uses an approximation to \MC\ and then proceeds greedily (or randomly) to achieve the baseline guarantees by building a tree on $S$ and on $\bS$:
\[
\ALG \ge m^{\TF}_d(S,\bar S) + \tfrac23 m^{\TF}_u(S,\bar S) + m^{\TD}_o(S,\bar S)+\tfrac13m^{\TD}_u(S,\bar S)
\]
For notation purposes, from now on we drop the parentheses $(S,\bS)$ from the terms since we always refer to the $(S,\bS)$ cut.  Observe that $m_1=m^{\TF}_d+m^{\TF}_o+m^{\TF}_u$ and similarly $m_2=m^{\TD}_d+m^{\TD}_o+m^{\TD}_u$. By substituting the terms for the unaffected triplets we get:
\[\ALG\ge \]
\[
m^{\TF}_d  + \tfrac23 (m_1-m^{\TF}_d - m^{\TF}_o) + m^{\TD}_o+\tfrac13(m_2 -m^{\TD}_d - m^{\TD}_o) 
\]
\[
=\tfrac23m_1+\tfrac13m^{\TF}_d-\tfrac23m^{\TF}_o+\tfrac13m_2+\tfrac23m^{\TD}_o-\tfrac13m^{\TD}_d
\]
\[
=\tfrac23m_1+\tfrac13m_2+\tfrac13(m^{\TF}_d - 2 m^{\TF}_o + 2 m^{\TD}_o - m^{\TD}_d) 
\]
From equation~(\ref{eq:Tcut}), the last term is equal to the weight of the $(S,\bS)$ cut and this finishes the proof.
\end{proof}

Now we can use again Fact~\ref{fact} to give a lower bound on the optimal cut. The cut $(S,\bS)$ is produced in the same manner as in the standard Goemans-Williamson algorithm via random hyperplane rounding on their semidefinite relaxation for \MC. We will use the fact to prove the following:

\begin{lemma}\label{lem:triplet_LB} The weight of the top split relative to the sizes of the triplet constraints is:
\[
w(S,\bS)\ge (0.3413-1.756\eps_1)m_1+(0.9266-1.756\eps_2)m_2 
\]
\end{lemma}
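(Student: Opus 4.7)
The plan is to mirror the structure of the proof of Lemma~\ref{lem:quartet_LB}, with the combinatorics of triples replacing that of quadruples. First I would tally the total negative weight in the constructed graph $G$: each forbidden triplet contributes two edges of weight $-1$ and each desired triplet contributes one edge of weight $-2$, so $W^- = 2m_1 + 2m_2$. Next, invoking Assumption~\ref{ass:triplets}, I would pick a balanced internal split $(L, R)$ of the ground-truth tree $T^*$, with $|L| = cn$, $|R| = (1-c)n$ and $\tfrac{1}{3} \le c \le \tfrac{2}{3}$, and use its weight as a lower bound on $w(S^*, \bS^*)$.

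The key combinatorial observation is that, for a triple $\{a,b,c\}$ chosen uniformly at random, (i) it is split by $(L,R)$ with probability $3c^2(1-c) + 3c(1-c)^2 = 3c(1-c) \ge \tfrac{2}{3}$, and (ii) whenever the triple is split, the resulting bipartition $\{x,y\}\mid\{z\}$ coincides with the ground-truth triplet structure on $\{a,b,c\}$, because the LCA of $x,y$ lies strictly below the balanced internal node, while the LCA of $\{x,y,z\}$ lies at or above it. Combining (ii) with the stochastic model: a consistent forbidden triplet $ab\mid c$ (probability $1-\eps_1$) is disobeyed by $(L,R)$ and thus contributes $+1$, an erroneous forbidden (probability $\eps_1$) is obeyed and contributes $-2$; dually, a consistent desired (probability $1-\eps_2$) contributes $+2$ and an erroneous desired (probability $\eps_2$) contributes $-1$.

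Putting these per-constraint contributions together via Lemma~\ref{lemma:Tcut}, the expected weight of the balanced cut satisfies
\[
\EE[w(L,R)] \;\ge\; 3c(1-c)\Bigl[(1-3\eps_1)m_1 + (2-3\eps_2)m_2\Bigr] \;\ge\; \tfrac{2}{3}\Bigl[(1-3\eps_1)m_1 + (2-3\eps_2)m_2\Bigr],
\]
where the factors $1-3\eps_1 = (1-\eps_1) - 2\eps_1$ and $2-3\eps_2 = 2(1-\eps_2) - \eps_2$ pool the consistent and erroneous contributions for each type. Since $w(S^*, \bS^*) \ge w(L,R)$, feeding this bound together with $W^- = 2(m_1+m_2)$ into Fact~\ref{fact} yields
\[
w(S, \bS) \;\ge\; 0.878 \cdot \tfrac{2}{3}\bigl[(1-3\eps_1)m_1 + (2-3\eps_2)m_2\bigr] \;-\; 0.122 \cdot 2(m_1+m_2),
\]
and the claimed inequality follows by arithmetic: $0.878 \cdot \tfrac{2}{3} \approx 0.5853$ and $0.122 \cdot 2 = 0.244$, giving coefficients $0.5853 - 0.244 = 0.3413$ on $m_1$ and $2 \cdot 0.5853 - 0.244 = 0.9266$ on $m_2$.

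The main obstacle is really just observation (ii): verifying that the ``lone item'' across any balanced edge of $T^*$ is precisely the item that the ground truth separates first among any triple straddling that edge. Once this structural fact is in hand, the rest of the argument follows the same skeleton as the quartets proof, with a minor additional verification that $3c(1-c) \ge \tfrac{2}{3}$ holds uniformly on $c \in [\tfrac{1}{3}, \tfrac{2}{3}]$ (the minimum being attained at the endpoints $c = \tfrac{1}{3}$ and $c = \tfrac{2}{3}$).
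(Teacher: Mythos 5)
Your proposal follows the paper's proof essentially verbatim: same accounting of the negative weight $W^-=2(m_1+m_2)$, same lower bound on the optimum cut via a balanced ground-truth split whose success/failure probabilities scale with $3c(1-c)\ge \tfrac23$ under Assumption~\ref{ass:triplets}, and the same application of Fact~\ref{fact} with identical arithmetic yielding the coefficients $0.3413-1.756\eps_1$ and $0.9266-1.756\eps_2$. The only (minor) differences are cosmetic: you pool the consistent and erroneous contributions into $1-3\eps_1$ and $2-3\eps_2$ before minimizing over $c$, which is if anything slightly tidier than the paper's term-by-term bound, and the structural fact you flag in observation (ii) is exactly what the paper itself takes for granted when it asserts the expected values of $m^{\TF}_d$, $m^{\TF}_o$, $m^{\TD}_o$, $m^{\TD}_d$ for the balanced cut.
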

%Perhaps with exact 3-cut and negative cancelation we can get an improvement for when m=n^2 logn and further if m is dense then we can get a PTAS for forbidden quartets. There are extensions from the Snir paper that we don't focus here.
\begin{proof}
Observe that in the constructed graph, the total negative weight is $W^-= 2m_1+2m_2$ as each triplet adds a total negative weight of $-2$. In order to use Fact~\ref{fact}, we require a lower bound on the optimum value $w(S^*,\bS^*)$.

Here is the first time where we require Assumption~\ref{ass:triplets} about the balancedness of the ground truth tree. From our stochastic model, recall that the number of triplet constraints the cut $(L,R)$ succeeds at is:
\[
\mathbb{E}(m^{\TF}_{d})=(3c^2(1-c)+3c(1-c)^2)(1-\eps_1)m_1\]\[ \mathbb{E}(m^{\TD}_{o})=(3c^2(1-c)+ 3c(1-c)^2)(1-\eps_2)m_2
\] and the number of constraints the cut $(L,R)$ fails at, due to the erroneous constraints is:
\[
\mathbb{E}(m^{\TF}_{o})=(3c^2(1-c)+3c(1-c)^2)\eps_1m_1\]\[ \mathbb{E}(m^{\TD}_{d})=(3c^2(1-c)+3c(1-c)^2)\eps_2m_2
\]
The quantity $c^2(1-c)+c(1-c)^2$ with $\tfrac13\le c\le\tfrac23$ attains a minimum value of $\tfrac29$ when $c=\tfrac13$; hence, from Lemma~\ref{lemma:Tcut}, the expected weight of the cut $(L,R)$ on the constructed 
graph is:
\[w(L,R)\ge\]
\[
 m^{\TF}_d(L,R) - 2m^{\TF}_o(L,R) + 2 m^{\TD}_o(L,R) - m^{\TD}_d(L,R) 
\]
\begin{equation}\label{eq:opt}
  \ge \tfrac{2}{3}(1-\eps_1)m_1 - \tfrac{4}{3}\eps_1m_1+\tfrac{4}{3}(1-\eps_2)m_2 - \tfrac{2}{3}\eps_2m_2  
\end{equation}

Of course the optimum cut has even larger weight than the specific balanced $(L,R)$ cut so: $w(S^*,\bS^*)\ge w(L,R)$. Substituting equation~(\ref{eq:opt}) in Fact~\ref{fact} yields the lemma.
%Using the fact that there is a cut $(L,R)$ that is balanced, we show why our uniform sampling technique yields many quartets obeyed by the cut
\end{proof}

\begin{proof}[Proof of Theorem~\ref{th:triplets}]
From Lemma~\ref{lem:triplet_MC}, we have a lower bound on our algorithm's performance via the approximate max cut. Substituting the quantity $w(S,\bar{S})$ based on Lemma~\ref{lem:triplet_LB}, yields the theorem.
\end{proof}

From the above, notice that we beat the trivial baselines as we avoid $\approx 78\% >\tfrac23$ of the forbidden triplets and we satisfy $\approx 64\% > \tfrac13$ of the desired triplets.

\subsection{Rankings from Noisy constraints}
Here we will show how to beat the approximability thresholds for $3$ problems: \MAS, \BTW\ and \nBTW, even though our techniques can be extended to handle many other ordering problems and combinations of desired or forbidden ordering constraints.

\paragraph{Non-BTW:} The goal here is to beat the threshold of $\tfrac23$-approximation and as we will see a $0.84$-approximation is possible. The main difference again is on the way we construct the graph based on the generated triplet constraints. For a query $\{ab|c\}$ indicating that $c$ should not be between $a,b$ in the final ordering we add the following 3 undirected edges:
\[
+1 \text{ for pairs }(c,a),(c,b)\text{ and } -2\text{ for the pair }(a,b)
\]
The graph is as always constructed by inserting all these edges for each of the triplet constraints. We describe below the necessary changes for each of the steps of the template.

\begin{itemize}
    \item Contrary to previous ordering problems, here a cut into two pieces can either satisfy, postpone or leave unaffected the status of a triplet $\{ab|c\}$. The weight of the cut is:
    \[
w(S,\bS)=2m_s(S,\bS)-m_p(S,\bS)
\]
as a satisfied triplet contributes $+2$ in the objective ($(c,a), (c,b)$ are cut) while a postponed triplet contributes a total of $-1$ (labels $a$ and $b$ are separated). 

\item Our algorithm \ALG, starting with the $(S,\bS)$ cut and continuing randomly after that, scores a total objective (we drop the $(S,\bS)$ notation):
\[
\ALG= m_s+ \tfrac23m_u+\tfrac12m_p
\]
since even for postponed constraints there is still a $\tfrac12$ probability of correctly placing $c$ either first or last among the three labels. Substituting $m=m_s+m_p+m_u$ which is true for any cut:
\[\ALG= \]
\[
=m_s + \tfrac23(m-m_s-m_p)+\tfrac12m_p=\]\[=\tfrac23 m+\tfrac13 m_s-\tfrac16m_p=
\]
\begin{equation}\label{eq:nb1}
    =\tfrac23m+\tfrac16(2m_s-m_d)=\tfrac23m+\tfrac16w(S,\bS)
\end{equation}

\item The graph's total negative weight is $W^-=2m$ so the Goemans-Williamson guarantee is:

\begin{equation}\label{eq:nb2}
\EE(w(S,\bS)) = 0.878w(\OPT)-0.122\cdot 2m    
\end{equation}

We lower bound the weight $w(\OPT)$ by the weight of the \textit{median} cut: consider the median element $q$ in the unknown optimum permutation and then let one part of the split be the elements that precede $q$. Generally, in permutation problems, ensuring that a balanced cut with large cut value exists, is easier than problems on trees, as the median cut guarantees a 50-50 split. Since the labels for the constraints were chosen at random, a simple counting argument implies that in expectation $\tfrac34m$ (i.e., $3c^2(1-c)m+3(1-c)^2cm$ with $c=\tfrac12$) constraints are satisfied by the \OPT\ cut, so $w(\OPT)\ge 2\cdot\tfrac34(1-\eps)m -\tfrac34\eps m$ and we get $(0.845-0.329\eps)$-approximation by substituting in equation~(\ref{eq:nb2}) and then to (\ref{eq:nb1}). For example, even when $\approx 10\%$ are erroneous, we still get a $0.81$-approximation.
\end{itemize}

%\begin{remark}
%A less accurate type of constraints for this problem could answer $\{bc|a\}$ for a triplet that is (relatively) ordered as $abqc$ in the ordering, where $q$ is the median element. Notice that for this query, the cut \emph{postpones} its status so $w(\OPT)=2m_s(\OPT)-m_p(\OPT)$. Still, the same approach would yield a 0.81-approximation.
%\end{remark}

\paragraph{BTW:} The goal here is to beat the $\tfrac13$-approximation which is the current best for inconsistent instances of \BTW. We will get a $0.402$-approximation. If the instance is promised to be consistent, Makarychev~\cite{makarychev2012simple} gave an algorithm achieving $\tfrac12$-approximation. It is a divide and conquer algorithm that is simple and runs in linear time. A significantly slower algorithm based on semidefinite program with the same approximation guarantee was previously proposed by Chor and Sudan~\cite{chor1998geometric}. 

For a triplet $\{a|b|c\}$ indicating that $b$ should be between $a$ and $c$ in the ordering we construct a graph with undirected edges:
\[
+2 \text{ for the pair }(a,c)\text{ and } -1\text{ for the pairs }(b,a),(b,c)
\]
The edges try to capture that a cut violates the constraint if it separates $b$ from $a,c$. We give our main steps:
\begin{itemize}
    \item Contrary to \nBTW, a cut into two pieces here can either violate, postpone or leave unaffected the status of the triplet $\{a|b|c\}$. The weight of a cut is:
    \[
    w(S,\bS)=m_p(S,\bS)-2m_v(S,\bS)
    \]
    as violated triplets contribute $-2$ and postponed triplets $+1$.
    \item Crucially, a postponed by the cut triplet, can still be satisfied with probability $\tfrac12$ and this gives us the advantage:
    \[\ALG = \tfrac13m_u+\tfrac12m_p=\]
    \[ =\tfrac13(m-m_p-m_v)+\tfrac12m_p=\]
    \begin{equation}\label{eq:b1}
        =\tfrac13m+\tfrac16(m_p-2m_v)=\tfrac13m+\tfrac16w(S,\bS)
    \end{equation}

    \item Again the graph's total negative weight is $W^-=2m$ so the Goemans-Williamson guarantee is:

\begin{equation}\label{eq:b2}
\EE(w(S,\bS)) = 0.878w(\OPT)-0.122\cdot 2m    
\end{equation}

As before, we lower bound the weight $w(\OPT)$ by the weight of the \textit{median} cut. Since the labels for the constraints were chosen at random, a simple counting argument implies that in expectation $\tfrac34m$ (i.e., $3c^2(1-c)m+3(1-c)^2cm$ with $c=\tfrac12$) constraints are postponed by the \OPT\ cut, so $w(\OPT)\ge \tfrac34(1-\eps)m-2\cdot\tfrac34\eps m$ and we get a $(0.402- 0.329\eps)$-approximation by substituting in equation~(\ref{eq:b2}) and then to (\ref{eq:b1}). For an error rate of $\approx 10\%$ we still get $\ge 0.369$-approximation, which is better than $\tfrac13$. 
\end{itemize}

\paragraph{MAS:} The goal here is to beat the trivial $\tfrac12$-approximation achieved by an arbitrary or its reversed (or a random) ordering. We will indeed be able to achieve a 0.642-approximation:

\begin{theorem}\label{th:maxDICUT}
Given $m$ constraints generated according to our stochastic model on $n$ items, \MCS\ satisfies at least $(0.642-0.4285\eps)m$ on average, where $\eps$ is the fraction of erroneous comparisons. If moreover $m\ge \Omega(\log n)$, the result holds w.h.p.
\end{theorem}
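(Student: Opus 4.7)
The plan is to follow the five-step template already sketched in the main body and turn each step into a quantitative bound. First I would build the directed weighted multigraph $G$ as described: for every sampled constraint $a<b$, add an arc $a\to b$ of weight $+1$ and an arc $b\to a$ of weight $-1$. Define the weight of a directed cut $(S,\bS)$ to be the sum of arc weights going from $S$ to $\bS$ (ignoring backward arcs). A direct case analysis on whether $a,b$ lie on the same side yields the identity $w(S,\bS)=m_s(S,\bS)-m_v(S,\bS)$, where $m_s,m_v$ count the pairwise constraints satisfied and violated by placing $S$ before $\bS$. Combining this with the fact that constraints with both endpoints on one side are resolved correctly with probability $1/2$ by a random permutation of that side, and applying the identity $m=m_s+m_v+m_u$, gives $\ALG=\tfrac12 m+\tfrac12 w(S,\bS)$, so the theorem reduces to showing $w(S,\bS)\ge (0.2855-0.857\eps)m$ in expectation.

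Next I would establish a lower bound on $w(\OPT)$ by analyzing the \emph{median} directed cut, i.e.\ the cut that places the first $n/2$ elements of $\pi^*$ in $S$ and the rest in $\bS$. Because the pairs of items were sampled uniformly at random, a simple counting argument shows that for a random pair $(x,y)$ the probability that they straddle the median is $1/2$; conditioned on this, the correct constraint contributes $+1$ and the erroneous (flipped) constraint contributes $-1$ to the cut weight. Taking expectation over the $\eps$-noisy model yields $w(\text{median})\ge \tfrac{1}{2}(1-2\eps)m$, and hence $w(\OPT)\ge \tfrac{1}{2}(1-2\eps)m$ as well.

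The main obstacle is step three: proving the approximate max cut guarantee
\[
\EE[w(S,\bS)]\ \ge\ 0.857\,w(\OPT)-0.143\,W^-
\]
for \emph{directed} graphs with both positive and negative weights, a setting not covered by the Goemans--Williamson analysis. The plan is to write a standard vector-relaxation where each vertex $i$ is associated with a unit vector $v_i$, a distinguished vector $v_0$ is used to orient the cut, and each arc $(i,j)$ of weight $w_{ij}$ contributes $w_{ij}\,\tfrac14(1-v_0\cdot v_i)(1+v_0\cdot v_j)\cdot 2$ (or an analogous lift) to the objective, then round by a random hyperplane through the origin. The rounding analysis must handle the positive and negative contributions asymmetrically: positive arcs get the usual $0.878$ (or $0.857$) factor against the SDP, while negative arcs in the worst case are fully lost, accounting for the $-0.143\,W^-$ term. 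In our graph $W^-=m$, so plugging the median-cut lower bound into this guarantee gives $\EE[w(S,\bS)]\ge 0.857\cdot\tfrac{1}{2}(1-2\eps)m-0.143m=(0.2855-0.857\eps)m$, and substituting into $\ALG=\tfrac12 m+\tfrac12 w(S,\bS)$ yields the claimed $(0.642-0.4285\eps)m$.

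Finally, to promote the expectation bound to a high-probability bound when $m\ge \Omega(\log n)$, I would observe that the only randomness in the lower bound for $w(\OPT)$ is the sampling of the constraints, and that the indicator for each constraint crossing the median cut is an independent bounded random variable; a Chernoff/Hoeffding concentration inequality then shows $w(\text{median})\ge \tfrac12(1-2\eps)m-O(\sqrt{m\log n})$ with high probability, which is absorbed into the stated bound for $m\ge \Omega(\log n)$. The SDP rounding step contributes no further randomness beyond the hyperplane, whose expected performance can be similarly concentrated by standard arguments on the rounded cut value.
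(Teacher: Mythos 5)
Your overall template matches the paper's: same graph construction, the identity $w(S,\bS)=m_s-m_v$, the reduction $\ALG=\tfrac12 m+\tfrac12 w(S,\bS)$, the median-cut bound $w(\OPT)\ge\tfrac12(1-2\eps)m$, and a Chernoff argument for the w.h.p.\ statement. The gap is in the one genuinely new ingredient, the guarantee $\EE[w(S,\bS)]\ge 0.857\,w(\OPT)-0.143\,W^-$ for directed \MC\ with negative weights, and your sketch of it does not work as stated. First, the accounting is wrong: if negative arcs are ``in the worst case fully lost,'' the penalty is $-W^-$, not $-0.143\,W^-$; with $W^-=m$ and $w(\OPT)\approx\tfrac12 m$ that bound is vacuous. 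The $-0.143\,W^-$ term does not come from writing off negative arcs; it comes from a shifted-objective analysis. One adds $W^-$ to both the SDP value and the rounded cut value so that every arc's shifted contribution is nonnegative (a negative arc $(i,j)$ contributes $|w_{ij}|\bigl(1-\Pr[i,j,0\ \text{on the same side}]\bigr)$ after rounding, and $|w_{ij}|\tfrac14(3-\cos\theta_{0i}-\cos\theta_{0j}-\cos\theta_{ij})$ in the shifted SDP), and then one proves \emph{two simultaneous} trigonometric inequalities, one for positive arcs and one for these complementary negative-arc terms; the final bound $\EE[w(S,\bS)]+W^-\ge 0.857\,(w(\OPT)+W^-)$ is the minimum of the two coefficients and rearranges to the stated guarantee. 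So you need an upper bound on the probability that a negative arc is cut in terms of its SDP term, which your sketch never supplies.

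Second, even with the correct shifted analysis, plain random-hyperplane rounding only yields coefficients $0.796$ (positive arcs) and $0.878$ (negative-arc complements), hence $\EE[w(S,\bS)]\ge 0.796\,w(\OPT)-0.204\,W^-$ and a weaker final constant around $0.597$, not $0.642$. The paper reaches $0.857$ (and $0.9249$ for the complementary terms) by using the Feige--Goemans rounding for \textsc{Max-2-Sat}-type relaxations, which first rotates each $v_i$ toward/away from $v_0$ via the function $f_{1/2}(\theta)=\tfrac12\theta+\tfrac12\cdot\tfrac{\pi}{2}(1-\cos\theta)$ before applying the hyperplane; your parenthetical ``(or $0.857$)'' has no justification without this. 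A minor further point: the relaxation term you propose, proportional to $(1-v_0\cdot v_i)(1+v_0\cdot v_j)$, is not linear in the Gram matrix and so is not a valid SDP objective as written; the standard directed relaxation uses $\tfrac14\,w_{ij}(1+v_0\cdot v_i-v_0\cdot v_j-v_i\cdot v_j)$, which is what the paper's analysis is built on.
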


 The constraints here are on pairs of labels, e.g., $a<b$. Contrary to \BTW\ and \nBTW\ where the constructed graph and cuts were undirected, \MAS\ is \textit{orientated} in the sense that it matters which side of the cut the labels end up at. This introduces the first challenge since we have to solve approximate \MC\ in directed graphs with negative weights. For a query $a<b$ indicating that $a$ should precede $b$ in the ranking, we add two directed edges:
\[
+1 \text{ directed from }a\to b \text{ and another  arc with}\]\[ \text{\ negative weight } -1 \text{ directed from }b\to a
\]

Here the weight of a directed cut $(S,\bS)$ is the sum of all (positively or negatively) weighted arcs going from $S$ to $\bS$ (and we ignore the arcs going from $\bS$ to $S$). Here a cut can either satisfy, violate or leave unaffected the status of a query and there are no postponed constraints as they only involve two labels. We describe our steps:
\begin{itemize}
    \item It is easy to see that the weight of any directed $(S,\bS)$ cut is:
    \[
    w(S,\bS)= m_s(S,\bS) - m_v(S,\bS)
    \]
    as satisfied pairs contribute $+1$ and violated pairs contribute $-1$.
    \item Again we can compute the value of \ALG (dropping the notation with $(S,\bS)$):
    \[\ALG = m_s+\tfrac12 m_u =\]
    \begin{equation}\label{eq:MAS1}
        = m_s+\tfrac12(m-m_s-m_v)=\tfrac12m+\tfrac12(w(S,\bS))
    \end{equation}
    
    \item Again the graph's total negative weight is $W^-=m$. However now that the graph is directed and with negative weights, we cannot use the Goemans-Williamson guarantee. 
    A new ingredient in our proof is an SDP relaxation and rounding scheme that achieves:
    \begin{equation}\label{eq:DICUT}
    \EE(w(S,\bS)) = 0.857w(\OPT)-0.143\cdot W^-  
    \end{equation}

    \item Continuing as before, we will lower bound the weight $w(\OPT)$ by the weight of the \textit{median} directed cut (as noted in the main body, this cut simply separates  the first half of the items in the optimal ordering from the last half). Since the labels for the constraints were chosen at random, a simple counting argument implies that in expectation $\tfrac12m$ (i.e., $2c(1-c)m$ with $c=\tfrac12$) constraints are satisfied by the \OPT\ cut, so $w(\OPT)\ge \tfrac12m(1-\eps)-\tfrac12\eps m$ due to errors in $\eps$ fraction of the constraints. 
 \end{itemize}

    \begin{proof}[Proof of Theorem~\ref{th:maxDICUT}]
    Given the above observations, in order to get a $(0.642-0.4285\eps)$-approximation, we first substitute in equation~(\ref{eq:DICUT}) the lower bound we got for w(\OPT), and then we substitute $w(S,\bS)$ to (\ref{eq:MAS1}).
    \end{proof}
    
     For example, if $ 10\% $ of the constraints are erroneous we still satisfy $ \approx 60\%$ of all constraints, beating the worst-case inapproximability results of~\cite{guruswami2011beating}.

\subsubsection{Directed \MC\ with negative weights}
Here we proceed by proving an important ingredient in our proof relating to finding directed cuts in graphs with negative weights.

In the seminal paper by~\cite{goemans1995improved}, they show how directed \MC\ can be solved approximately on directed graphs with non-negative weights. They  used the following semidefinite programming relaxation where $A$ denotes the arcs of the graph and $V$ the vertices ($|V|=n$):
\[
\text{maximize } \tfrac14\sum_{(i,j)\in A} w_{ij}(1+v_0v_i-v_0v_j-v_iv_j)
\]
\[
\text{subject to: } ||v_i||^2=1, v_i \in \mathbb{R}^{n+1}, \forall i\in V\cup0 
\]
Notice the special role of the vector $v_0$, which is used to break the symmetry indicating that we want to maximize edges going from left to right where left is the side in which $v_0$ belongs to. Observe that in an integral $\{\pm 1\}$ solution if vertex $i$ is on the same side with $v_0$ and $j$ is on the other side then $(1+v_0v_i-v_0v_j-v_iv_j)=4$ that's why we chose the coefficient $\tfrac14$ in front of the summation. Also note that due to the symmetry if instead of $v_j$ we set $-v_j$ the relaxation won't change so we can instead think of:
\[
\text{maximize } \tfrac14\sum_{(i,j)\in A} w_{ij}(1+v_0v_i+v_0v_j+v_iv_j)
\]
This will just simplify some trigonometric expressions later.

In this subsection we will prove a bound on the weight of the cut for directed graphs with positive and negative edge weights. The bound we will be able to show is:
\begin{equation}
    \EE(w(S,\bS)) = 0.857w(\OPT)-0.143\cdot W^-  
\end{equation}
where $W^-$ denotes the total weight in absolute value of all negative edges. Notice that if no negative weights are present ($W^-=0$ then we almost recover the Goemans-Williamson $0.878$ coefficient. The above bound follows from the following theorem by rearranging terms:

\begin{theorem}
Let $W^-=\sum_{(i,j)\in A}|w^-_{ij}|$ where $x^-=\min (0,x)$. Then we can efficiently find a cut $(L,R)$ such that:
\[
\EE(w(L,R))+W^-\ge 0.857\left( w(\OPT)+W^- \right)
\]
where \OPT\ denotes the optimum directed cut in the graph.
\end{theorem}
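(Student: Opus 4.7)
The plan is to give a semidefinite programming relaxation for the shifted quantity $w(L,R)+W^-$, round it with a single random hyperplane, and bound the ratio of expected rounded value to SDP value on a term-by-term basis. I would introduce an auxiliary anchor vector $v_0$ alongside $v_1,\ldots,v_n$, all unit vectors in $\mathbb{R}^{n+1}$, and write the SDP objective so that each positively weighted arc $(i,j)$ contributes the Goemans-Williamson directed-cut indicator $w_{ij}\cdot\tfrac14(1+v_0v_i-v_0v_j-v_iv_j)$, which in the integral $\pm 1$ case equals $1$ exactly when $i\in L$ and $j\in R$. For each negatively weighted arc $(i,j)$ with weight $-|w_{ij}|$, I would use the shifted expression $|w_{ij}|\cdot\tfrac14(3-v_0v_i+v_0v_j+v_iv_j)$. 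The correctness of the latter relies on the identity
\[
|w_{ij}|\bigl(1-\mathbf{1}[i\in L,\,j\in R]\bigr)\;=\;|w_{ij}|\cdot\mathbf{1}[j\in L,\, i\in R]\;+\;|w_{ij}|\cdot\mathbf{1}[v_i=v_j],
\]
whose right-hand side, after substituting $\mathbf{1}[v_i=v_j]=\tfrac{1+v_iv_j}{2}$ and the reversed-arc DICUT indicator $\tfrac14(1+v_0v_j-v_0v_i-v_iv_j)$, is exactly $|w_{ij}|\cdot\tfrac14(3-v_0v_i+v_0v_j+v_iv_j)$. Hence any integer cut (with $v_0=1$ and $v_i=\pm 1$) achieves objective value $w(L,R)+W^-$, so the optimum SDP value upper-bounds $w(\OPT)+W^-$.

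Next I would apply random hyperplane rounding: sample a uniform unit vector $r\in\mathbb{R}^{n+1}$, set $x_i=\sgn(r\cdot v_i)$, and place $i$ in $L$ iff $x_i=x_0$. Writing $\theta_{ab}=\arccos(v_a\cdot v_b)$ and using the classical identity $\EE[x_ax_b]=1-2\theta_{ab}/\pi$, the expected contribution of a positive arc becomes $\tfrac{w_{ij}}{2\pi}(\theta_{0j}+\theta_{ij}-\theta_{0i})$, which is non-negative by the spherical triangle inequality on $v_0,v_i,v_j$, and the expected shifted contribution of a negative arc becomes $|w_{ij}|\bigl(1+\tfrac{1}{2\pi}(\theta_{0i}-\theta_{0j}-\theta_{ij})\bigr)$. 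Summed over all arcs, this expected rounded objective equals $\EE[w(L,R)]+W^-$ by construction.

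The heart of the argument is then to establish the per-term ratio
\[
\frac{\EE[\text{rounded value}]}{\text{SDP value}}\;\ge\;0.857
\]
uniformly over all angle triples $(\theta_{0i},\theta_{0j},\theta_{ij})$ in the region cut out by the spherical triangle inequality, both for the positive-arc ratio $\tfrac{(\theta_{0j}+\theta_{ij}-\theta_{0i})/(2\pi)}{(1+\cos\theta_{0i}-\cos\theta_{0j}-\cos\theta_{ij})/4}$ and for the analogous shifted negative-arc ratio. This is the analogue, in the directed-plus-negative setting, of the Goemans-Williamson worst-case analysis that certifies $0.878$ for undirected MaxCut: the extremal angle configuration is found by parameterizing the constraint surface and reducing to a low-dimensional calculus problem, and $0.857$ is the worst-case value that falls out (an analogue of the Feige-Goemans analysis of MAX-DICUT). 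Verifying this bound, including the boundary cases where an SDP term value approaches zero, is the main obstacle; once it is carried out, multiplying through by $w_{ij}$ or $|w_{ij}|$ and summing over arcs immediately yields $\EE[w(L,R)]+W^-\ge 0.857\bigl(w(\OPT)+W^-\bigr)$.
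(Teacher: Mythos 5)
Your setup coincides with the paper's: the same anchored SDP relaxation, the same observation that shifting each negative arc by $|w_{ij}|$ turns its contribution into $|w_{ij}|\cdot\tfrac14\bigl(3-v_0v_i+v_0v_j+v_iv_j\bigr)$ so that every integral cut has objective $w(L,R)+W^-$, and the same plan of comparing, arc by arc, the expected rounded contribution against the SDP contribution. Your algebra for the shifted negative-arc term and the hyperplane probabilities (e.g.\ $\Pr[i\in L,\,j\in R]=\tfrac{1}{2\pi}(\theta_{0j}+\theta_{ij}-\theta_{0i})$) is correct.

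The gap is in the rounding step. With the rounding you actually specify --- a single uniformly random hyperplane applied to the original vectors $v_i$ --- the worst-case per-term ratio for the positive (directed-cut) arcs is not $0.857$; it is the Goemans--Williamson \textsc{Max-DiCut} constant $\approx 0.796$, and no amount of parameterizing the angle region will push the uniform-hyperplane bound past that (the paper makes exactly this point: plain rounding certifies only $0.796$ for the positive terms, versus $0.878$ for the shifted negative terms). The constant $0.857$ requires a genuinely different rounding, not a sharper analysis of the same one: following Feige--Goemans, each $v_i$ must first be replaced by a vector coplanar with $v_0$, on the same side as $v_i$, at angle $f_{1/2}(\theta_{0i})=\tfrac12\theta_{0i}+\tfrac12\cdot\tfrac{\pi}{2}(1-\cos\theta_{0i})$ from $v_0$, and only then is the random hyperplane applied. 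One must then verify the two per-term inequalities \emph{simultaneously} for this rotated rounding --- the rotation that boosts the directed-cut terms to $0.857$ could in principle degrade the complementary $1-\Pr[\cdot]$ terms, and the paper checks that those in fact improve to $0.9249$, so the minimum is $0.857$. Your proposal treats Feige--Goemans as an "analysis" of uniform hyperplane rounding rather than a modified rounding scheme, and it does not address the simultaneity of the two bounds; as written, the argument only supports a $0.796$ coefficient in the theorem.
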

\begin{proof}
Let \SDP\ denote the optimal \SDP\ value which is larger than $w(\OPT)$ since we relaxed the problem. We will show the above bound where  $w(\OPT)$ is replaced by \SDP. We need to rewrite the \SDP\ relaxation to incorporate the $W^-$ term and then we need to compute the probabilities an edge $(i,j)\in A$ participates or does not participate in the cut and how it compares to the contribution in the SDP relaxation. The probability an edge does not participate in the cut is needed here because negatively weighted edges exist, which could potentially decrease the value of the cut. Separating the positive and negative weights $(A=A^+\cup A^-)$ and rewriting the \SDP\ ($\theta_{ij}$ denotes the angle between $v_i,v_j$):
\[
\tfrac14\sum_{(i,j)\in A} w_{ij}(1+v_0v_i+v_0v_j+v_iv_j) + W^- =
\]
\[
=\tfrac14\sum_{(i,j)\in A^+} w_{ij}(1+v_0v_i+v_0v_j+v_iv_j) +\]\[ +\tfrac14\sum_{(i,j)\in A^-} |w_{ij}|\left(4-(1+v_0v_i+v_0v_j+v_iv_j)\right)=   
\]
\[
=\tfrac14\sum_{(i,j)\in A^+} w_{ij}(1+\cos\theta_{0i}+\cos\theta_{0j}+\cos\theta_{ij}) + \]\[+\tfrac14\sum_{(i,j)\in A^-} |w_{ij}|\left(3-\cos\theta_{0i}-\cos\theta_{0j}-\cos\theta_{ij}\right)  
\]

For the rounding algorithm we can use the standard Goemans Williamson rounding although this will only guarantee a sub-optimal coefficient of $0.796$ instead of $0.857$ in Equation~(\ref{eq:DICUT}). We will show later how a non-standard but better rounding scheme by~\cite{feige1995approximating} gives us the desired $0.857$ factor.

Let $r$ be a vector drawn uniformly from the unit sphere. Let's evaluate the contribution of a positive arc $(i,j)\in A^+$ to the quantity $\EE(w(L,R))+W^-$:
\[
\tfrac14 w_{ij}\left(4\cdot \mathbf{Pr}[\sgn(v_ir) = \sgn(v_jr) = \sgn(v_0 r)]\right)=\]\[=w_{ij}\mathbf{Pr}[\sgn(v_ir) = \sgn(v_jr) = \sgn(v_0 r)]
\]

For a negative arc $(i,j)\in A^-$, the contribution to the quantity $\EE(w(L,R))+W^-$ is:
\[
-\tfrac14 |w_{ij}|\left(4 \mathbf{Pr}[\sgn(v_ir) = \sgn(v_jr) = \sgn(v_0 r)]\right)+|w_{ij}|=
\]
\[
=|w_{ij}|(1-\mathbf{Pr}[\sgn(v_ir) = \sgn(v_jr) = \sgn(v_0 r)])
\]

Finally, if we can manage to lower bound $\mathbf{Pr}[\sgn(v_ir) = \sgn(v_jr) = \sgn(v_0 r)]$ by $(1+\cos\theta_{0i}+\cos\theta_{0j}+\cos\theta_{ij})$ and \emph{simultaneously} lower bound $(1-\mathbf{Pr}[\sgn(v_ir) = \sgn(v_jr) = \sgn(v_0 r)])$ by $(3-\cos\theta_{0i}-\cos\theta_{0j}-\cos\theta_{ij})$ we will have finished as the final result will follow by linearity of expectations. This can indeed be done using some trigonometric facts and the symmetry of spherical geometry:
\begin{fact} Let $r$ be chosen uniformly at random from the unit sphere. Then for any three vectors $v_i,v_j,v_0$ in the unit sphere:
\[
\mathbf{Pr}[\sgn(v_ir) = \sgn(v_jr) = \sgn(v_0 r)]=\]\[=1-\tfrac{1}{2\pi}(\theta_{ij}+\theta_{j0}+\theta_{i0})\ge\]
\[\ge 0.796\cdot\tfrac14(1+\cos\theta_{0i}+\cos\theta_{0j}+\cos\theta_{ij})
\]
and also:
\[
1-\mathbf{Pr}[\sgn(v_ir) = \sgn(v_jr) = \sgn(v_0 r)]=\]\[=\tfrac{1}{2\pi}(\theta_{ij}+\theta_{j0}+\theta_{i0})\ge\]
\[\ge 0.878\cdot\tfrac14(3-\cos\theta_{0i}-\cos\theta_{0j}-\cos\theta_{ij})
\]
\end{fact}

Putting it all together and using linearity of expectations we have shown:
\[
\EE(w(L,R))+W^-\ge 0.796\left( \SDP+W^- \right)\ge\]\[\ge 0.796\left( w(\OPT)+W^- \right)
\]

As we shall see next the first inequality is the one that determines the approximation coefficient. The above proves so far that 0.796 is possible. However there exists a more complicated rounding scheme which does not choose $r$ uniformly at random. It was developed in the context of \textsc{Max-2-SAT} problem by Feige and Goemans and their main idea behind their improvement is to take advantage of
the special role of $v_0$. They crucially use $v_0$: they map each $v_i$ to another vector $w_i$ that depends both on $v_i$ and on $v_0$, and only then they proceed with the Goemans-Williamson rounding algorithm. Specifically,$w_i$ is coplanar with $v_0$, on the
same side of $v_0$ as $v_i$ is, and forms an angle
with $v_0$ equal to $f(\theta_{i0})$. By choosing the function $f$ to be:
\[
f_{1/2}(\theta)=\tfrac12\theta+\tfrac12(\tfrac{\pi}{2}(1-\cos\theta))
\]
they report that they get a coefficient $0.857$ for the first inequality above (instead of $0.796$) and simultaneously a coefficient $0.9249$ for the second inequality (instead of $0.878$). Using again linearity of expectation, this implies our theorem:
\[
\EE(w(L,R))+W^-\ge 0.857\left( w(\OPT)+W^- \right)
\]
\end{proof}

\subsection{Correlation Clustering from Noisy Constraints}

The last of the proofs for the positive results will be for the Correlation Clustering problem, following the same ideas as in the proofs above. In correlation clustering, the information comes as \textsc{Must-Link} ($ab$) or \textsc{Cannot-Link}  ($a|b$) constraints indicating if two labels should be in the same or in different parts of an optimal partition. The current best algorithm is a $0.7666$-approximation by~\cite{swamy2004correlation} and here we improve under our stochastic model for the input constraints. We achieve a $(0.8226-0.775\eps)$-approximation.

\[
\ALG = m_s + 0.766m_u=m_s+0.766(m-m_s-m_v)=\]\[=0.766m+0.234(m_s-3.2735m_v)
\]

We construct an undirected graph where for every \CL\ constraint $ab$ we add a $+1$ edge between $a,b$ and for every \ML\ constraint $ab$ we add an edge $a,b$ now with negative weight $-3.2735$. 

\[
w(S,\bS)= m_s(S,\bS) - 3.2735m_v(S,\bS)
\]
Hence:
\[
\ALG = 0.766m+0.234w(S,\bS)
\]
Assuming that the largest cluster in the optimum partition has size at most $\tfrac n2$, our stochastic model will generate at least $\tfrac m2$ \CL\ constraints by a simple counting argument. This is in expectation, but of course using a standard large deviation Chernoff bound, all our claims in this paper can be made to hold with high probability. This also implies that the total number of \ML\ constraints is at most $\tfrac m2$. Thus, once again using \MC\ for the first split:
\[
\EE(w(S,\bS)) = 0.878w(\OPT)-0.122\cdot 3.2735\cdot \tfrac m2    
\]
An easy lower bound for the value of the \OPT\ cut is: $w(\OPT)\ge \tfrac m2$ hence we obtain a 0.8226-approximation.

\section{Hardness via Ordering CSPs}
\label{app:hardness}
In this part of the Appendix, we present our hardness of approximation results for the constraint satisfaction problems on trees, extending in some cases the inapproximability results of~\cite{guruswami2011beating,austrin2013np} from linear orderings to trees.

\subsection{Hardness for Rooted Triplets Consistency} 
We prove that under the \UGC, it is hard to approximate the  Desired Triplets Consistency problem better than a factor of $\tfrac23$, even in the unweighted case. Notice that the current best approximation is $\tfrac13$ achieved by a random tree (or a simple greedy algorithm). In fact our result is slightly stronger: it is hard to distinguish between two instances one of which is almost perfect (e.g., $99\%$ of constraints are consistent) and the other is far from perfect (e.g., $67\%$ of constraints are consistent). We base our hardness result on the following theorem by~\cite{austrin2013np} about the \textit{Non-Betweeness} problem and its $\tfrac23$-inapproximability:
\begin{fact}
Let $K$ be the total number of triplet constraints in an instance of \nBTW. For any $\epsilon>0$, it is NP-hard to distinguish between \nBTW\ instances of the following two cases:\\
\textbf{YES}: $val(\pi^*)\ge (1-\epsilon)K$, i.e. the optimal permutation satisfies almost all constraints. \\
\textbf{NO}: $val(\pi^*) \le (\tfrac23 +\epsilon)K$, i.e. the optimal permutation does not satisfy more than 2/3 fraction of the constraints.
\end{fact}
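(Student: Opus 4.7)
My plan is to prove the Fact via the dictatorship-test $\to$ Label-Cover framework that has become standard for approximation-resistant ordering CSPs, but in its NP-hard (rather than UG-hard) variant developed by Austrin, Manokaran and Wenner. Concretely, I would start from an NP-hard gap instance of a suitable Label Cover with smoothness (equivalently, of a perfectly-completeness Max-3-LIN over a prime power) and build a reduction to \nBTW\ whose YES/NO gap is $(1-\epsilon)$ versus $(\tfrac23+\epsilon)$.

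The first ingredient is a dictatorship test for \nBTW. For each vertex $v$ of the Label Cover instance I attach a ``long code'' $f_v:[q]^R\to[n]$ which is supposed to encode a single coordinate (the label). The test picks a random edge $e=(u,v)$ of the Label Cover with projection $\pi_e$, samples a triple $(x,y,z)\in([q]^R)^3$ from a carefully chosen distribution $\mu$, and checks the \nBTW\ constraint ``$f_u(y\circ\pi_e)$ is not between $f_v(x)$ and $f_v(z)$''. The distribution $\mu$ is chosen so that its induced distribution on each coordinate in $[q]^3$ is pairwise uniform and supported on triples for which any dictator trivially satisfies the \nBTW\ predicate. Honest dictators then pass almost always, yielding completeness $1-\epsilon$; by Mossel's invariance principle, any function with no influential coordinate behaves like a Gaussian triple with pairwise-uniform covariance, and by symmetry the \nBTW\ predicate is satisfied by such a triple with probability exactly $\tfrac23+o(1)$, giving soundness $\tfrac23+\epsilon$.

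Assembling the overall reduction is then routine: an \nBTW\ permutation of value above $(\tfrac23+\epsilon)K$ forces many long codes to have a coordinate of non-negligible low-degree influence, and the standard influence-decoding procedure extracts a labeling of the Label Cover instance of value above its NP-soundness threshold, a contradiction. The main technical obstacle is twofold: (i) constructing the triple distribution $\mu$ that is simultaneously pairwise uniform \emph{and} fully supported on predicate-satisfying triples, and (ii) turning the \emph{order-statistic} structure of \nBTW\ (the predicate depends on the ranks of $f(x),f(y),f(z)$, not on multilinear combinations of them) into a form to which the invariance principle can be applied. Problem (i) is a finite-dimensional feasibility question that can be settled by a symmetry argument on $\mathbb{Z}/q\mathbb{Z}$; problem (ii) is the genuine bottleneck and the place where the Austrin--Manokaran--Wenner smoothing of the indicator ``$f(y)$ is between $f(x),f(z)$'' into a Lipschitz function of real-valued proxies has to be imported verbatim. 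Once (i) and (ii) are in place, the completeness, soundness, and gap-amplification steps follow the by-now standard template of Guruswami--H\aa stad--Manokaran--Raghavendra--Charikar for ordering CSPs.
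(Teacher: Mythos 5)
You should first note that the paper does not prove this statement at all: it is stated as a \emph{Fact} and imported wholesale from \cite{austrin2013np} (Austrin--Manokaran--Wenner), whose main theorem is precisely the NP-hard $(1-\epsilon)$ vs.\ $(\tfrac23+\epsilon)$ gap for \nBTW. Your proposal is, in effect, a summary of that paper's proof strategy rather than an independent argument, and the two items you flag as ``technical obstacles'' --- (i) a suitable distribution on $[q]^3$ supported on predicate-satisfying triples and (ii) the transfer from the order-statistic predicate to a setting where invariance applies --- are not side issues to be imported ``verbatim''; they \emph{are} the proof. As written, your text establishes nothing that is not already assumed from the cited work, so as a blind proof it has a genuine gap: the dictatorship test is never actually constructed and its completeness/soundness are asserted, not derived.

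Two concrete inaccuracies are worth fixing even at the sketch level. First, the parenthetical claiming the starting point is ``equivalently, a perfectly-completeness Max-3-LIN over a prime power'' is wrong: satisfiable linear systems are decidable by Gaussian elimination, so there is no NP-hard gap with perfect completeness there, and smooth Label Cover is not equivalent to any such statement; the correct starting point in \cite{austrin2013np} is a smooth Label Cover instance (with near-perfect completeness), and the role of the group $\mathbb{Z}_q$ is that the discretized \nBTW\ predicate contains a \emph{linearly structured}, balanced pairwise-independent set of satisfying assignments --- the extra algebraic structure is exactly what lets one avoid the Unique Games Conjecture. Second, your soundness paragraph (``no influential coordinate $\Rightarrow$ Gaussian behaviour $\Rightarrow$ value $\tfrac23+o(1)$, then standard influence decoding'') is the Austrin--Mossel/UGC-style argument; pairwise independence plus low influence is \emph{not} known to yield NP-hardness on its own. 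Making the decoding work against many-to-one Label Cover projections is where smoothness, folding, and the degree-bounded influence analysis of \cite{austrin2013np} are essential, so these steps cannot be described as ``routine'' without reproducing that machinery. In short, the approach points at the right literature, but the proposal neither supplies the key constructions nor states the reduction accurately enough to stand on its own.
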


Given the above fact, we prove our $\tfrac23$-inapproximability result for Triplets Consistency:
%\footnote{To get the tight 1/3 inapproximability result, one way would be to look at the OCSPs paper and check that in their YES instances, there is also a corresponding tree such that it satisfies almost all of the given constraints.}
\begin{theorem}
Let $K$ be the total number of the triplet constraints in an instance of Desired Triplets Consistency. For any $\delta>0$, it is NP-hard to distinguish between instances of the following two cases:\\
\textbf{YES}: $val(T^*)\ge (\tfrac12-\delta)K$, i.e. the optimal tree satisfies almost half of all the triplet constraints. \\
\textbf{NO}: $val(T^*)\le (\tfrac13 +\delta)K$, i.e. the optimal tree does not satisfy more than $\tfrac13$ fraction of the triplet constraints.
\end{theorem}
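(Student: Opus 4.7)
The plan is to reduce from \nBTW, which by the stated fact is $\tfrac23$-approximation resistant assuming only P$\neq$NP, and to view each \nBTW\ constraint ``$a|bc$'' as a desired triplet constraint ``$bc|a$''. As in the Forbidden Triplets proof, the reduction rests on two maps between permutations and rooted trees, but with the roles swapped: a permutation yields a caterpillar tree on the completeness side, while random child-swapping plus an in-order projection yields a permutation from a tree on the soundness side.

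For the YES direction, I start from a YES instance of \nBTW\ with $val(\pi^*)\ge (1-\eps)K$. I form the caterpillar $T_{\rightarrow}$ whose leaves, read from top to bottom, are $\pi^*$, and the caterpillar $T_{\leftarrow}$ whose leaves are $\pi^*$ reversed. For a \nBTW\ constraint $a|bc$, the four orderings on $\{a,b,c\}$ that satisfy it are $\{abc,acb,bca,cba\}$. The tree $T_{\rightarrow}$ obeys the desired triplet $bc|a$ exactly when $a$ is first among the three in $\pi^*$ (cases $abc, acb$), and $T_{\leftarrow}$ obeys it exactly when $a$ is last (cases $bca, cba$). These two cases partition the \nBTW-satisfying orderings, so $val(T_{\rightarrow})+val(T_{\leftarrow})\ge (1-\eps)K$, and the better of the two gives $val(T^*)\ge \tfrac{1-\eps}{2}K \ge (\tfrac12-\delta)K$ by choosing $\eps\le 2\delta$.

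For the NO direction, I argue by contradiction and assume some tree $T$ satisfies $val(T) > (\tfrac13+2\delta)K$ desired triplets. From $T$ I form a random permutation $\pi$ by flipping the left and right children of every internal node independently with probability $\tfrac12$ and then reading off the in-order traversal. If $T$ obeys $bc|a$, then $a$ lies on the opposite subtree of $\mathrm{LCA}(a,b,c)$ from both $b$ and $c$, so the projection places $a$ either strictly before or strictly after $\{b,c\}$; either way the \nBTW\ constraint $a|bc$ is satisfied with probability $1$. If $T$ obeys $ab|c$ or (symmetrically) $ac|b$, then a case analysis of the four equiprobable projected orderings on $\{a,b,c\}$ shows $a|bc$ is satisfied with probability exactly $\tfrac12$. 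Linearity of expectation gives $\mathbb{E}[val(\pi)] = val(T) + \tfrac12(K-val(T)) > (\tfrac23+\delta)K$, which in particular produces a permutation contradicting the NO side of the \nBTW\ fact after choosing $\eps<\delta$.

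The main obstacle I anticipate is that the completeness side is intrinsically lossy: a single caterpillar can realize only two of the four \nBTW-satisfying orderings for any given triple, so the YES guarantee degrades from near $(1-\eps)K$ to only about $\tfrac12 K$, which explains why the completeness is stated as $(\tfrac12-\delta)K$ rather than $(1-\delta)K$. The soundness side, by contrast, is already tight at a factor-of-two gain between obeyed and disobeyed triplets, so the two sides combine cleanly to a $\tfrac23$ gap, and choosing $\eps=\min(2\delta,\delta/2)$ yields the claimed inapproximability for every $\delta>0$.
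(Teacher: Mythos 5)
Your proposal is correct and follows essentially the same route as the paper's proof: a reduction from \nBTW\ in which the YES side uses the caterpillar tree built from $\pi^*$ (or its reverse), incurring the unavoidable factor-$\tfrac12$ loss, and the NO side converts a good tree into a good permutation by independently swapping children at every internal node and projecting, gaining probability $\tfrac12$ on each disobeyed triplet. The only differences are presentational (you make the YES-side averaging over the two caterpillars explicit, and your constants $2\delta$ vs.\ $\delta$ are bookkept slightly differently), which do not change the argument.
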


Then, our $\tfrac23$-inapproximability result follows directly from the gap of these instances: $\tfrac13/\tfrac12 = \tfrac23$.

\begin{proof}
Start with a YES instance of the \nBTW\ problem with optimal permutation $\pi^*$ and $val(\pi^*)\ge (1-\epsilon)K$. Viewing each \nBTW\ constraint as a desired triplet, we show how to construct a tree $T$ such that $val(T)\ge (\tfrac12-\delta(\eps))K$. In fact, the construction is straightforward: simply assign the $n$ labels, either in the order they appear in $\pi^*$ or reversed, as the leaves of a caterpillar tree (every internal node has at least one child that is a leaf). Observe that this tree satisfies:
\[
val(T)\ge (1-\epsilon)K/2
\]
This is because if a \nBTW\ constraint $ab|c$ was obeyed by $\pi^*$, it will also be obeyed by one of the two caterpillar trees above: if $c$ appears first in the permutation then the former caterpillar will obey $ab|c$ as $c$ gets separated first, otherwise if $c$ appears last, then the reversed caterpillar tree will obey $ab|c$. Here the $\tfrac12$ factor is tight, since for example, the two \nBTW\ constraints $ab|c$ and $bc|a$ are both satisfied by the ordering $abc$, but when viewed as desired triplets, they cannot both be satisfied by a tree. 

The NO instance is slightly more challenging. 
Start with a NO instance of the \nBTW\ problem with optimal $\pi^*$ of value $val(\pi^*)\le (\tfrac23+\eps)K$. Viewing the \nBTW\ constraints as desired triplets, we show that the optimum tree $T^*$  cannot achieve better than $>(1/3+2\epsilon)K$, because this would imply that $val(\pi^*)> (\tfrac23+\eps)K$, which is a contradiction.

For this, assume that some tree $T$ scored a value $val(T)>(1/3+2\epsilon)K$. We will construct a permutation $\pi$ from the tree $T$ with value $val(\pi)>(2/3 +\epsilon)K$. Observe that directly projecting the leaves of $T$ onto a line (just outputting the $n$ leaves from left to right as they appear in the tree) would already satisfy $>(1/3+2\epsilon)K$, since every desired triplet $ab|c$ obeyed by the tree, will also be obeyed (as a \nBTW\ constraint) by $\pi$ as $c$ will either be first or last among the three labels $a,b,c$. 

Moreover, there are potentially desired triplet constraints that are disobeyed by the tree $T$, yet obeyed by the permutation.  We know that the number of remaining constraints is $K-(1/3+2\epsilon)K = (2/3 - 2\epsilon)K$. By randomly swapping each left and right child in the tree $T$ before we do the projection to the permutation $\pi$, will actually lead to an excess of $1/2\cdot(2/3 - 2\epsilon)K=(1/3-\epsilon)K$ number of \nBTW\ constraints. To see this notice that for every triplet that is disobeyed in the tree, there is a $\tfrac12$ probability that it becomes obeyed in the permutation. Summing up, we get $val(\pi) > (1/3+2\epsilon)K+(1/3-\epsilon)K > (2/3+\epsilon)K \implies val(\pi^*)\ge val(\pi)> (2/3+\epsilon)K$, a contradiction.
\end{proof}

\subsection{Hardness for Forbidden Triplets: Random is Optimal}

We prove that under the \UGC, it is hard to approximate the  Forbidden Triplets Consistency problem better than a factor of $\tfrac23$, even in the unweighted case. Notice that the current best approximation is in fact $\tfrac23$ achieved by a random tree (or a simple greedy algorithm), hence we settle the computational complexity of the problem. Our result is slightly stronger: it is hard to distinguish between two instances one of which is almost perfect (e.g., $99\%$ of constraints are consistent) and the other is far from perfect (e.g., $67\%$ of constraints are consistent). We base our hardness result on the following theorem by~\cite{guruswami2011beating} about the \BTW\ problem and its $\tfrac13$-inapproximability:
\begin{fact}
Let $K$ be the total number of triplet constraints in an instance of \BTW. For any $\epsilon>0$, it is UGC-hard to distinguish between \BTW\ instances of the following two cases:\\
\textbf{YES}: $val(\pi^*)\ge (1-\epsilon)K$, i.e. the optimal permutation satisfies almost all constraints. \\
\textbf{NO}: $val(\pi^*) \le (\tfrac13 +\epsilon)K$, i.e. the optimal permutation does not satisfy more than 1/3 fraction of the constraints.
\end{fact}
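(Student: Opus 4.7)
The plan is to derive this Fact as a consequence of the long-code reduction from Unique Games, following the Guruswami-Hastad-Manokaran-Raghavendra-Charikar ``beating the random ordering is hard'' framework. At the heart of the argument is a local dictatorship test built around the \BTW\ predicate: the test samples three elements from correlated permutation long-codes and emits a \BTW\ triplet $a|b|c$ that is satisfied with probability close to $1$ by dictator orderings (those that rank items according to a single coordinate) but with probability only $\tfrac13 + o(1)$ by orderings with no influential coordinate. Standard machinery then lifts any such dictatorship test to a full UG-hardness reduction with the desired $(1-\epsilon)$ vs. $(\tfrac13+\epsilon)$ gap.

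Concretely, given a Unique Games instance with alphabet $[R]$ and label-extended graph on $N$ variables, for each UG variable $u$ I would introduce a block $B_u$ of $M$ auxiliary elements. A global permutation $\pi$ of the reduced \BTW\ instance restricts to an ordering $\pi_u$ of $B_u$, which is the ``permutation long-code'' to be decoded. For each UG edge $(u,v)$ with bijection $\sigma_{uv}$, the test draws positions in $B_u$ and correlated positions in $B_v$ through $\sigma_{uv}$ (plus a small noise rate), then emits a \BTW\ triplet on three elements chosen from these positions, arranged so that dictator orderings coming from a satisfying UG assignment jointly satisfy the triplet. Completeness is then immediate: a good UG labeling yields dictator orderings satisfying a $(1-\epsilon)$ fraction of emitted triplets.

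Soundness is the technical core. I would write the expected fraction of satisfied triplets as a multilinear functional of the pairwise-comparison indicators $f_u^{x,y}(\pi) := \mathbf{1}[x \text{ precedes } y \text{ in } \pi_u]$ and then invoke a Mossel-O'Donnell-Oleszkiewicz-style invariance principle: if every $f_u^{x,y}$ has negligible low-degree influence on each coordinate, the discrete test value is within $\epsilon$ of the corresponding Gaussian functional, which for the \BTW\ predicate evaluates (via a short integral computation on three correlated Gaussian orderings) to exactly $\tfrac13$. The contrapositive says that any assignment strictly beating $(\tfrac13 + \epsilon)K$ must have some $f_u^{x,y}$ with non-negligible influence on some coordinate $i$; collecting these influential coordinates across $u$ and randomly decoding them produces a UG labeling of constant value, refuting \UGC\ and completing the reduction.

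The main obstacle is adapting boolean invariance to the \emph{ordering} setting. Classical invariance handles $\{\pm 1\}^n$-valued functions, while here the relevant combinatorial object is a permutation, so the workaround of representing each ordering by its pairwise-comparison indicators $\{f_u^{x,y}\}$ introduces strong coupling (transitivity, antisymmetry) across the coordinates that must be controlled before Fourier analysis and influence decoding apply. Managing this coupling, choosing $M$ large enough for the long-code to be meaningful yet small enough for influence decoding to succeed, and verifying that the Gaussian \BTW\ value really is exactly $\tfrac13$ are the three places where care is required; once this bridge is set up, the rest is standard UG-hardness bookkeeping and the $\tfrac13$ threshold emerges because a uniformly random ordering already attains this value.
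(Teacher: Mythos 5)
You did not need to prove this statement at all: in the paper it is a \emph{Fact} imported verbatim from~\cite{guruswami2011beating} (the ``beating the random ordering is hard'' line of work) and used as a black box in the reduction to Forbidden Triplets; the paper contains no proof of it. So the right comparison is between your sketch and the published proof you are implicitly reconstructing, and there your proposal has a genuine gap exactly at its technical core. Your completeness argument and the overall architecture (permutation long codes per UG variable, a three-query dictatorship test emitting \BTW\ constraints, influence decoding in the soundness case) match the known framework, but the soundness step is only gestured at. You propose to encode an ordering by its pairwise-comparison indicators $f_u^{x,y}$ and to apply a Mossel--O'Donnell--Oleszkiewicz invariance principle to this family, and you yourself flag the transitivity/antisymmetry coupling among these indicators as ``the main obstacle'' --- but you give no mechanism for controlling it, and this is precisely where a naive adaptation fails: the invariance principle does not apply coordinate-wise to a heavily constrained family of $\{0,1\}$-valued functions, and the claim that the limiting ``Gaussian \BTW\ value'' is exactly $\tfrac13$ is asserted rather than derived.

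The published argument avoids this issue rather than solving it head-on: instead of analyzing pairwise indicators, it coarsens the ordering within each block into a constant number $q$ of buckets, so that the object to be decoded is a function into a finite domain $[q]$; soundness is then reduced to approximation resistance of an associated (non-ordering) CSP over the alphabet $[q]$, where standard invariance and influence-decoding machinery is available, and the $\tfrac13$ threshold is recovered in the limit $q\to\infty$ after accounting for constraints whose variables collide in the same bucket. If you intend your proposal as a proof, you must either supply this bucketing reduction (or an equivalent device) or show how to push invariance through the coupled indicator representation; as written, the step from ``no influential coordinate'' to ``test value $\le \tfrac13+o(1)$'' is unsupported. For the purposes of this paper, the cleaner course is simply to cite~\cite{guruswami2011beating} as the source of the Fact, which is what the authors do.
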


Given the above fact, we prove our $\tfrac23$-inapproximability result for Forbidden Triplets Consistency:
%\footnote{To get the tight 1/3 inapproximability result, one way would be to look at the OCSPs paper and check that in their YES instances, there is also a corresponding tree such that it satisfies almost all of the given constraints.}
\begin{theorem}
Let $K$ be the total number of the triplet constraints in an instance of Forbidden Triplets Consistency. For any $\delta>0$, it is UGC-hard to distinguish between instances of the following two cases:\\
\textbf{YES}: $val(T^*)\ge (1-\delta)K$, i.e. the optimal tree satisfies almost half of all the triplet constraints. \\
\textbf{NO}: $val(T^*)\le (\tfrac23 +\delta)K$, i.e. the optimal tree does not satisfy more than $\tfrac23$ fraction of the triplet constraints.
\end{theorem}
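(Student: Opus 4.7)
The plan is to reduce from the Betweenness (\BTW) problem, using the $\tfrac13$-inapproximability of \BTW\ from \cite{guruswami2011beating} stated as the fact just above the theorem. The natural correspondence is: a \BTW\ constraint $a|b|c$ (``$b$ is between $a$ and $c$'') becomes a forbidden triplet $ac|b$ (``the tree should separate $b$ before isolating $a$ from $c$''). Any permutation $\pi$ that places $b$ strictly between $a$ and $c$ corresponds, via a caterpillar tree following $\pi$, to a tree that first peels off either $a$ or $c$ before grouping $b$ with the remaining endpoint—hence avoids $ac|b$. So the reduction maps instances of \BTW\ to instances of Forbidden Triplets with identical constraint counts.

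For the \textbf{YES} case, I would take a near-optimal permutation $\pi^*$ with $\mathrm{val}(\pi^*)\ge (1-\eps)K$, build the caterpillar tree $T$ whose leaves, read from the root's ``peel-off'' order, match $\pi^*$, and verify directly that every \BTW\ constraint satisfied by $\pi^*$ becomes a forbidden triplet avoided by $T$. This gives $\mathrm{val}(T)\ge (1-\eps)K$, which is $\ge (1-\delta)K$ for appropriate $\eps$.

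For the \textbf{NO} case, I would argue by contrapositive: suppose some tree $T$ avoids more than $(\tfrac23+2\eps)K$ forbidden triplets. I want to extract a permutation $\pi$ from $T$ satisfying more than $(\tfrac13+\eps)K$ \BTW\ constraints, contradicting the \BTW\ NO guarantee. The natural construction is to root $T$ (if not already rooted) and then read leaves left-to-right, but crucially after independently, uniformly swapping the left and right children at every internal node. The point is that a forbidden triplet $ac|b$ avoided by $T$ means the tree first separates either $a$ or $b$ from $\{b,c\}$ or from $\{a,c\}$; among the resulting $2$ (out of $6$) leaf orderings obtainable from the random swaps, exactly half place $b$ between $a$ and $c$, so each avoided forbidden triplet contributes a $\tfrac12$-probability indicator for satisfying its \BTW\ counterpart. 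Linearity of expectation then gives $\EE[\mathrm{val}(\pi)] > \tfrac12(\tfrac23+2\eps)K = (\tfrac13+\eps)K$, so some realization of the random swap achieves this bound, contradicting the NO assumption on \BTW.

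The main obstacle I anticipate is the bookkeeping in the NO direction: one must verify that the randomized leaf projection delivers precisely factor $\tfrac12$, neither more nor less, for every avoided forbidden triplet, regardless of where $a,b,c$ sit in the tree. This requires a careful case analysis of the lowest common ancestor structure of $\{a,b,c\}$ in $T$—in particular distinguishing the case where $T$ avoids $ac|b$ by realizing $ab|c$ versus $bc|a$—and checking that under the independent Bernoulli swaps, each realization deterministically produces a projected order that either places $b$ between $a$ and $c$ or does not, with probability exactly $\tfrac12$ over the relevant swap bits. Once this lemma is in hand, chaining the two directions and choosing $\delta$ in terms of $\eps$ completes the gap reduction with the ratio $\tfrac{2/3+\delta}{1-\delta}\to\tfrac23$.
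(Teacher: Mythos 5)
Your proposal matches the paper's proof essentially step for step: the same \BTW-to-forbidden-triplet correspondence $a|b|c \mapsto ac|b$, the same caterpillar construction for the YES case, and the same NO-case argument of randomly swapping children at every internal node before projecting the leaves, with each avoided triplet yielding its \BTW\ counterpart with probability $\tfrac12$ and linearity of expectation giving the contradiction. The case analysis you flag as the remaining obstacle (distinguishing whether $T$ realizes $ab|c$ or $bc|a$, with the two relevant independent swap bits giving exactly probability $\tfrac12$) is precisely the short verification the paper supplies, so your argument is correct and complete in the same way.
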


Then, our $\tfrac23$-inapproximability result follows directly from the gap of these instances: $\tfrac23/1 = \tfrac23$. 

\begin{proof}
Start with a YES instance of the \BTW\ problem with optimal permutation $\pi^*$ and $val(\pi^*)\ge (1-\epsilon)K$. Viewing each \BTW\ constraint $a|b|c$ as a forbidden triplet $ac|b$, we show how to construct a tree $T$ such that $val(T)\ge (\tfrac1-\delta(\eps))K$. In fact, the construction is straightforward: simply assign the $n$ labels, in the order they appear in $\pi^*$, as the leaves of a caterpillar tree (every internal node has its left child being a leaf). Observe that this caterpillar tree satisfies:
\[
val(T)\ge (1-\epsilon)K
\]
This is because if a \BTW\ constraint $a|b|c$ was obeyed by $\pi^*$, it will also be avoided (viewed as a forbidden triplet $ac|b$) by the caterpillar tree above: if $a$ appears first in the permutation then the caterpillar will avoid $ac|b$ as $a$ gets separated first, otherwise if $c$ appears first, then again the caterpillar tree will avoid $ac|b$ as $c$ gets separated first. 

The NO instance is slightly more challenging. 
Start with a NO instance of the \BTW\ problem with optimal $\pi^*$ of value $val(\pi^*)\le (\tfrac13+\eps)K$. Viewing the \BTW\ constraints as forbidden triplets, we show that the optimum tree $T^*$  cannot achieve better than $>(2/3+2\epsilon)K$, because this would imply that $val(\pi^*)> (\tfrac13+\eps)K$, which is a contradiction.

For this, assume that some tree $T$ scored a value $val(T)>(2/3+2\epsilon)K$. We will construct a permutation $\pi$ from the tree $T$ with value $val(\pi)>(1/3 +\epsilon)K$, a contradiction. Notice that there are forbidden triplets that may be avoided by the tree, yet obeyed by the permutation: for example for a forbidden triplet  $t=ac|b$, the tree $R$ that first removes $a$ and then splits $b,c$ will successfully avoid $t$, however the  permutation $acb$ can come from $R$ by projection, however $acb$ do not obey the \BTW\ constraint $a|b|c$.

Hence directly projecting the leaves of $T$ onto a line may not satisfy $>(1/3+2\epsilon)K$, since every forbidden triplet $ac|b$ avoided by $T$, can be ordered by this projected permutation in a way that would not obey the corresponding \BTW\ constraint $a|b|c$.  

However, just by randomly swapping each left and right child for every internal node in the tree before we do the projection to the permutation, would satisfy $1/2\cdot(2/3 +2\epsilon)K=(1/3+\epsilon)K$ number of constraints. To see this, note that with probability $\tfrac12$ a forbidden $ac|b$ avoided by $T$ will be mapped to the desired $abc$ (and not $acb$) or $cba$ (and not $cab$) ordering.

Finally, we get $val(\pi) > (1/3+\epsilon)K \implies val(\pi^*)\ge val(\pi)> (1/3+\epsilon)K$, a contradiction that we were given a NO instance.
\end{proof}

\subsection{Hardness for Desired Quartets Consistency}

The main result in this section is that for the desired quartets problem, one cannot do better than $\tfrac23$-approximation. Notice that a random unrooted tree achieve $\tfrac13$-approximation which is currently the best known algorithm.

To prove our results, we make use of a consequence from the results in~\cite{guruswami2011beating} for orderings CSPs of arity 4. Specifically, we define the following problem, which we call \textsc{4-Separatedness}.

\begin{definition}
For an ordering problem, a \textsc{4-Separatedness} constraint $\{ab|cd\}$ specifies that both elements $a,b$ should precede $c,d$ or that both $c,d$ should precede $a,b$ in the output ordering (e.g., $badc$, but not $acbd$). No constraints are placed on the relative ordering between $a,b$ or on the ordering between $c,d$.
\end{definition}

\begin{fact}
Given \SEP\ constraints, no polynomial time algorithm can beat the performance of a random permutation, which  achieves a $\tfrac13$-approximation, assuming \UGC. In fact, if $K$ is the total number of constraints, for any $\epsilon>0$, it is UGC-hard to distinguish between the two cases:\\
\textbf{YES}: $val(\pi^*)\ge (1-\epsilon)K$, i.e. the optimal permutation satisfies almost all constraints. \\
\textbf{NO}: $val(\pi^*) \le (\tfrac13 +\epsilon)K$, i.e. the optimal permutation does not satisfy more than 1/3 fraction of the constraints.
\end{fact}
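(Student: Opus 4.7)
The plan is to derive this fact as a direct corollary of the general approximation resistance theorem for ordering CSPs established by Guruswami, H\aa stad, Manokaran, Raghavendra, and Charikar~\cite{guruswami2011beating}. Their result, roughly speaking, says that for any predicate $P$ on orderings of $k$ distinguished elements, if $\rho(P)$ denotes the fraction of the $k!$ permutations of those elements that satisfy $P$, then the corresponding ordering CSP admits no polynomial-time algorithm beating the random baseline $\rho(P)$ under the UGC; more precisely, it is UGC-hard to distinguish instances with value $\ge (1-\epsilon)K$ from instances with value $\le (\rho(P)+\epsilon)K$. The task is therefore to cast \SEP\ into this framework and to compute its random baseline.

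The first step is immediate: each \SEP\ constraint $\{ab|cd\}$ is a boolean predicate depending only on the relative order of the four items $a,b,c,d$, so a \SEP\ instance is an ordering CSP of arity $4$. For the second step, I would enumerate the satisfying orderings of $\{a,b,c,d\}$. A linear order satisfies $\{ab|cd\}$ iff $\{a,b\}$ occupies positions $\{1,2\}$ and $\{c,d\}$ occupies positions $\{3,4\}$, contributing $2!\cdot 2! = 4$ orderings, or symmetrically with $\{c,d\}$ first, contributing another $4$ orderings. Thus exactly $8$ of the $4! = 24$ orderings satisfy the predicate, giving $\rho(\SEP) = 1/3$. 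This matches the claimed threshold and confirms that a uniformly random permutation satisfies any fixed \SEP\ constraint with probability $1/3$, so by linearity of expectation a random ordering attains value $K/3$ on any instance. Plugging $\rho(\SEP) = 1/3$ into the Guruswami et al.\ dichotomy gives exactly the YES/NO gap stated in the fact.

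The genuinely hard work sits inside~\cite{guruswami2011beating} rather than in the application above, and that is where I would expect the main obstacle if one attempted a self-contained proof: their reduction from Unique Games constructs a dictatorship-testing gadget whose completeness uses the ``canonical'' ordering extracted from a good Unique Games labeling, and whose soundness hinges on an invariance-style analysis for low-influence functions defined on the symmetric group, typically routed through the embedding of orderings into pairwise-comparison bits. Specializing this entire pipeline to \SEP\ would not require qualitatively new ingredients, since the meta-theorem handles arbitrary $k$-ary ordering predicates uniformly; the only predicate-specific input is the counting computation above. I would therefore not reprove the reduction and instead invoke the theorem of~\cite{guruswami2011beating} as a black box, with the one-line observation $\rho(\SEP) = 1/3$ completing the argument.
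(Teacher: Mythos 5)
Your proposal matches the paper's treatment: the paper also states this fact as a direct consequence of the approximation-resistance theorem for ordering CSPs of~\cite{guruswami2011beating}, with the only predicate-specific ingredient being the same counting observation that exactly $8$ of the $4!=24$ orderings of $a,b,c,d$ satisfy a \SEP\ constraint, so the random baseline is $\tfrac13$. Your computation of $\rho(\SEP)=\tfrac13$ and the black-box invocation of the meta-theorem are both correct, so there is nothing to add.
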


Observe that from the $4!=24$ permutations on $a,b,c,d$ only $8$ of them obey the \SEP\ constraint, that's why random achieves $\tfrac13$. 

\begin{theorem}
Let $K$ be the total number of the quartet constraints in an instance of Desired Quartets Consistency. For any $\delta>0$, it is UGC-hard to distinguish between instances of the following two cases:\\
\textbf{YES}: $val(T^*)\ge (1-\delta)K$, i.e. the optimal tree satisfies almost  all the quartet constraints. \\
\textbf{NO}: $val(T^*)\le (\tfrac23 +\delta)K$, i.e. the optimal tree does not satisfy more than a $\tfrac23$ fraction of the quartet constraints.
\end{theorem}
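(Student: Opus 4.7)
The plan is to reduce from \SEP, using the fact just stated that it is UGC-hard to distinguish \SEP\ instances with $val(\pi^*)\ge(1-\eps)K$ from those with $val(\pi^*)\le(\tfrac13+\eps)K$. I reinterpret each \SEP\ constraint $\{ab|cd\}$ as a desired quartet $ab|cd$ on the same ground set, and argue that (i) any permutation $\pi^*$ satisfying $(1-\eps)K$ \SEP\ constraints yields a tree $T$ with $val(T)\ge(1-\eps)K$, and (ii) any tree obeying strictly more than $(\tfrac23+2\eps)K$ desired quartets can be decoded into a permutation satisfying strictly more than $(\tfrac13+\eps)K$ \SEP\ constraints. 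Setting $\eps=\delta/2$ then produces the claimed $(1-\delta)$ vs $(\tfrac23+\delta)$ gap.

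For (i), I build an unrooted caterpillar by listing the leaves along the backbone in the order of $\pi^*$, placing $\pi^*(1),\pi^*(2)$ at one terminal backbone node, $\pi^*(n-1),\pi^*(n)$ at the other, and $\pi^*(i)$ on the $i$-th internal backbone node for $3\le i\le n-2$. For any \SEP\ constraint $ab|cd$ satisfied by $\pi^*$, the four positions partition into disjoint intervals $\{\pi^*(a),\pi^*(b)\}<\{\pi^*(c),\pi^*(d)\}$ (or the reverse); hence the backbone segments spanned by the paths $a\leftrightarrow b$ and $c\leftrightarrow d$ are disjoint, so the quartet is obeyed.

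For (ii), given $T$ I subdivide an arbitrary edge to form a rooted binary tree, independently pick a uniformly random order of the two children at every internal node, and output the leaves in DFS order as $\pi$. I will show that each desired quartet $ab|cd$ obeyed by $T$ is satisfied as a \SEP\ constraint by $\pi$ with probability at least $\tfrac12$. Fix such a quartet; a short LCA case analysis at LCA$(a,b,c,d)$ shows that an obeyed quartet must admit some node $x$ whose subtree contains $\{a,b\}$ and excludes $\{c,d\}$ (the alternative ``crossed'' configuration would force the two paths to share their common LCA). In the DFS the leaves of this subtree form a contiguous block $I\supseteq\{a,b\}$ disjoint from $\{c,d\}$, and \SEP\ is satisfied exactly when $c$ and $d$ land on the same side of $I$. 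Each of $c,d$ lies in the unique sibling subtree of some ancestor of $x$ on the root-to-$x$ path; if the two ancestors coincide, the two leaves are forced into the same sibling subtree and hence the same side, while if the ancestors differ the two governing random bits are independent and agree with probability $\tfrac12$. By linearity of expectation, $\EE[val(\pi)]\ge\tfrac12\cdot val(T)>(\tfrac13+\eps)K$, so some realization of $\pi$ violates the NO promise of the \SEP\ instance, a contradiction.

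The main obstacle is the probabilistic argument in (ii): unlike the triplet case, a direct projection of $T$ may split $\{c,d\}$ across the block containing $\{a,b\}$, so the random child-ordering is essential. The cleanest way to verify the $\tfrac12$ bound is to insist on a binary rooting (by subdividing an edge), so that at every internal node each sibling subtree is visited as a single consecutive DFS block and the random orderings at distinct ancestors are genuinely independent.
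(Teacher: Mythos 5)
Your proposal is correct, and it follows the same reduction skeleton as the paper: reduce from \SEP, convert the YES permutation into an unrooted caterpillar (disjoint intervals give disjoint paths), and in the NO direction decode a good tree into a good permutation by randomly ordering children and projecting, with each obeyed quartet surviving as a satisfied \SEP\ constraint with probability at least $\tfrac12$. Where you differ is in how that $\tfrac12$ bound is established. The paper roots the ternary tree at an internal node of degree $3$ and runs an explicit case analysis over how the four leaves distribute among the three root subtrees, obtaining per-case probabilities $\tfrac23$, $1$, $\tfrac12$, and reducing the all-in-one-subtree case to the others. You instead subdivide an arbitrary edge to get a binary rooting and give a single uniform argument: an obeyed quartet admits a node $x$ whose subtree contains one pair and excludes the other (this is exactly the separating-edge characterization of the quartet topology; note it may be the pair $\{c,d\}$ rather than $\{a,b\}$, which is harmless by the symmetry of the \SEP\ predicate), the excluded pair's leaves hang off ancestors of $x$, and their sides relative to the contiguous DFS block of $x$ are governed by one fair bit each---identical bits force agreement, distinct bits are independent, so agreement holds with probability at least $\tfrac12$ in all cases at once. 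Your version buys a cleaner, case-free proof of the key probability lemma (and would transfer more easily to variants), while the paper's case analysis yields slightly finer per-case probabilities ($\tfrac23$ or $1$ in some configurations), which it does not exploit anyway; the resulting gap, $(1-\delta)$ versus $(\tfrac23+\delta)$, is the same.
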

\begin{proof}
We will make a reduction from the \SEP\ problem. Start from a \textbf{YES} instance and consider the optimum permutation $\pi^*$. Construct an unrooted caterpillar tree $T$ with leaves the labels of $\pi^*$ as they appear in the permutation. It is easy to see that if a \SEP\ constraint $ab|cd$ was obeyed by the permutation, then the corresponding quartet constraint $ab|cd$ was also obeyed in the caterpillar tree $T$. For that, we can assume  w.l.o.g. that the elements appear with relative order $abcd$ in $\pi^*$ and observe that the paths $a\to b$ and $c\to d$ in $T$ are disjoint, so the quartet is obeyed.

The harder case is the  \textbf{NO} instance. For that we will show how from a tree $T$ with high value, we can construct a permutation $\pi$ with high value. Specifically, we will show that if $val(T)> (\tfrac23 +2\epsilon)K$ then we can find $\pi$ with $val(\pi)> \tfrac12 (\tfrac23 +2\epsilon)K = (\tfrac13 +\epsilon)K$, a contradiction since we started from a  \textbf{NO} instance.

The tree $T$ is an unrooted tree on $n\ge 4$ leaves, whose internal nodes have degree exactly 3. We can make $T$ rooted by selecting an arbitrary internal node $r$ and making it the root of a binary tree whose internal nodes have exactly 2 children and one parent. The only exception is the root $r$ that has 3 children and no parent. Call this tree $T_r$. Let $A,B,C$ denote the leftmost, middle and rightmost child of $r$ respectively, which are themselves rooted binary trees. Assume w.l.o.g. that $A$ contains the largest number of leaves among $A,B,C$, so $|A|\ge2$, where $|A|$ denotes the number of leaves contained in the subtree rooted at $A$. 

From this rooted tree $T_r$, we generate a permutation $\pi$ by randomly swapping every left and right child on each internal node of $T_r$ and also randomly swapping $A,B,C$ at the root $r$; then we simply project the leaves onto a line to get $\pi$. We show that each quartet $q_1q_2|q_3q_4$ obeyed by $T$ will be obeyed in $\pi$ with probability $p\ge \tfrac12$. We have several cases depending on the labels $q_1,q_2,q_3,q_4$:
\begin{itemize}
    \item If $q_1,q_2\in A$ and $q_3\in B$ and $q_4\in C$: Notice that the status of the quartet is decided by the random choices at the root $r$ since after the final projection, labels from $A$ will be consecutive in $\pi$ and similarly for $B$ and $C$. Here, $\pi$ will actually obey the quartet with probability $\tfrac23$, as there are 3 equally likely outcomes $ABC$, $BCA$ and $CAB$ and the first two $ABC$ and $BCA$ obey the quartet, irrespectively of how labels from $A$, $B$, $C$ are ordered.
    
    \item If $q_1,q_2\in A$ and $q_3,q_4\in B$: This is the easiest case as every quartet of this form will be obeyed in $\pi$ with probability 1. This follows as labels from $A$ will be consecutive in $\pi$ and similarly for $B$.
    \item If $q_1,q_2,q_3\in A$ and $q_4\in B$: The status of this quartet only depends on how the elements $q_1,q_2,q_3$ are placed. Specifically, depending on the random choices at the root $r$, $q_4$ can appear either first (if $BA$ was chosen) or last (if $AB$ was chosen) among the 4 elements in $\pi$. If the former is true, then $q_3$ should appear second and we get $q_4q_3|\cdot\cdot$ otherwise $q_3$ should appear third and we get $\cdot\cdot|q_3q_4$. We need to compute the probability for each of these events. Notice that the lowest common ancestor both for $q_3,q_1$ and for $q_3,q_2$ is $A$. Hence, the status of the quartet is determined at $A$ and with probability $\tfrac12$, $q_3$ is correctly placed on the same side as $B$ (and $q_4$).
    \item If $q_1,q_2,q_3,q_4\in A$: This case essentially reduces to the analyses of the previous two cases. Just find the lowest common ancestor $A_1$ of all 4 labels $q_1,q_2,q_3,q_4$ in $T_r$. If two of the labels belong to one child and the remaining to the other child, then the quartet will be obeyed with probability $1$, irrespectively of the random choices at $A_1$ (similar to the second case above). Moreover, if one child contains three of the 4 elements, then the analysis is the same as the previous case yielding a probability of $\tfrac12$.
\end{itemize}
The other cases are symmetric for $B,C$. This proves that if a quartet is obeyed by the tree then with probability $\tfrac12$ will be obeyed in $\pi$ which means that $val(\pi)> \tfrac12 (\tfrac23 +2\epsilon)K = (\tfrac13 +\epsilon)K$ by linearity of expectation. This contradicts the fact that we were given a \textbf{NO} instance.
\end{proof}

\subsection{Hardness for Forbidden Quartets Consistency}
The proof proceeds in the same way as the previous paragraph, where we now account for the forbidden quartets and we make use of the complement problem to \SEP, which we call \NSEP:

\begin{definition}
For an ordering problem, a \textsc{4-Non-Separatedness} constraint $\{ab|cd\}$ specifies that either $a$ or $b$ should be between $c,d$ or that either $c$ or $d$ should be between $a,b$ in the output ordering (e.g., $adcb$, but not $abcd$). No constraints are placed on the relative ordering between $a,b$ or on the ordering between $c,d$.
\end{definition}

\begin{fact}
Given \NSEP\ constraints, no polynomial time algorithm can beat the performance of a random permutation, which  achieves a $\tfrac23$-approximation, assuming \UGC. In fact, if $K$ is the total number of constraints, for any $\epsilon>0$, it is UGC-hard to distinguish between the two cases:\\
\textbf{YES}: $val(\pi^*)\ge (1-\epsilon)K$, i.e. the optimal permutation satisfies almost all constraints. \\
\textbf{NO}: $val(\pi^*) \le (\tfrac23 +\epsilon)K$, i.e. the optimal permutation does not satisfy more than 2/3 fraction of the constraints.
\end{fact}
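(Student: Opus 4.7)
The plan is to derive this Fact as an instance of the general approximation-resistance theorem for ordering CSPs of Guruswami--H{\aa}stad--Manokaran--Raghavendra--Charikar (the same black-box result that underlies the $\SEP$ Fact the paper has already cited). That theorem says, informally, that any ordering predicate on $k$ positions whose set of accepting orderings induces a \emph{balanced pairwise-independent} distribution is approximation resistant under $\UGC$, with a $(1-\eps)$ versus (random value)$+\eps$ gap between YES and NO instances. Since $\NSEP$ is a $4$-ary ordering predicate with $16$ accepting orderings out of $24$, so random achieves exactly $\tfrac{2}{3}$, what remains is to verify it fits the framework and then run the framework's Unique-Games-to-CSP reduction.

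The first step is to check the pairwise-independence condition: for each of the six unordered pairs $\{x,y\}\subseteq\{a,b,c,d\}$, the uniform distribution over $\NSEP$-accepting orderings must place exactly half of its mass on ``$x$ before $y$'' and half on ``$y$ before $x$''. This follows quickly from the obvious symmetries of $\NSEP\{ab|cd\}$, namely swapping $a\leftrightarrow b$, swapping $c\leftrightarrow d$, and swapping the unordered pair $\{a,b\}$ with $\{c,d\}$, which map accepting orderings to accepting orderings and act transitively on the six pairs up to the distinction between ``intra-pair'' and ``inter-pair,'' leaving only two short counts to verify. The second step is to plug this verified predicate into the long-code reduction: starting from a $(1,\delta)$ Unique Games instance, build a distribution over $\NSEP$ constraints on long-code tables so that a dictator labeling satisfies at least a $(1-\eps)$ fraction (YES case), while any labeling with only low-influence coordinates cannot beat $\tfrac{2}{3}+\eps$ (NO case), via the standard invariance-style noise-stability bounds that compare the real predicate to its Gaussian surrogate.

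The main obstacle is the soundness analysis, i.e., showing that a labeling beating $\tfrac{2}{3}+\eps$ must have a high-influence coordinate that decodes into a good Unique Games labeling. This is the technical core of every such reduction, and it is precisely where balanced pairwise-independence is used; once the predicate-level check is done, one is appealing to an established black box. A tempting shortcut would be to reduce from the $\SEP$ Fact by reinterpreting the same constraints under the complementary predicate, but this does not work: the optimal permutations for $\SEP$ and $\NSEP$ on a fixed constraint set are different, so ``$K$ minus $\SEP$-value'' is not the $\NSEP$-optimum, and the $(1-\eps,\tfrac{1}{3}+\eps)$ $\SEP$ gap does not convert into a $(1-\eps,\tfrac{2}{3}+\eps)$ $\NSEP$ gap. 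For that reason I would go directly through the $\UGC$ framework rather than try to reduce from $\SEP$.
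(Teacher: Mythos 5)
The Fact you are asked to prove is obtained in the paper by direct appeal to \cite{guruswami2011beating}, and the crucial point you have missed is that their theorem for \emph{ordering} CSPs carries no predicate-level hypothesis: under \UGC, \emph{every} bounded-arity ordering CSP is approximation resistant, in the strong form that $(1-\eps)$-satisfiable instances are indistinguishable from instances where no permutation beats the random value by $\eps$. For \NSEP\ the only computation needed is that a random permutation satisfies $16/24=\tfrac23$ of the orderings, and the Fact follows verbatim, exactly as the paper does for \BTW\ and \SEP. The ``balanced pairwise-independent distribution'' criterion you plan to verify is the Austrin--Mossel condition for ordinary CSPs over a finite domain; importing it as the hypothesis of the ordering-CSP theorem is a misattribution, and it creates two concrete problems in your write-up. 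First, the check you actually describe (each of the six pair marginals is $\tfrac12$--$\tfrac12$) is only a balance condition --- it follows immediately from reversal symmetry and is far weaker than pairwise independence. Second, genuine pairwise independence of the pair-order indicators in fact \emph{fails} for \NSEP: over the $16$ accepting orderings one has $\Pr[a \text{ before } c \text{ and } b \text{ before } d]=\tfrac{2}{16}=\tfrac18\neq\tfrac14=\Pr[a \text{ before } c]\cdot\Pr[b \text{ before } d]$. So if the black box really had the hypothesis you state, your verification would not certify it and the predicate would not even satisfy it; the argument only goes through because the actual theorem of \cite{guruswami2011beating} needs no such hypothesis. Re-running the long-code reduction and the invariance-based soundness analysis, as you sketch, is then simply re-proving their theorem rather than using it.

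Two smaller remarks. Your observation that one cannot get the \NSEP\ gap by complementing the \SEP\ Fact on the same instances is correct and worth keeping: maximizing the complement predicate is a different optimization problem, and $(1-\eps,\tfrac13+\eps)$ for \SEP\ does not transfer to $(1-\eps,\tfrac23+\eps)$ for \NSEP. And once you drop the spurious hypothesis, your proposal collapses to the paper's own one-line derivation: cite the general approximation-resistance theorem for ordering CSPs and record that the random value of \NSEP\ is $\tfrac23$.
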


Observe that from the $4!=24$ permutations on $a,b,c,d$, $16$ of them obey the \NSEP\ constraint, that's why random achieves $\tfrac23$. 

\begin{theorem}
Let $K$ be the total number of the quartet constraints in an instance of Forbidden Quartets Consistency. For any $\delta>0$, it is UGC-hard to distinguish between instances of the following two cases:\\
\textbf{YES}: $val(T^*)\ge (1-\delta)K$, i.e. the optimal tree satisfies almost  all the quartet constraints. \\
\textbf{NO}: $val(T^*)\le (\tfrac89 +\delta)K$, i.e. the optimal tree does not satisfy more than a $\tfrac89$ fraction of the quartet constraints.
\end{theorem}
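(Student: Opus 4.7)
The plan is to mirror the hardness argument for Desired Quartets but reduce from \NSEP\ (whose UGC threshold is $\tfrac23$) and calibrate the tree-to-permutation map so that each avoided forbidden quartet survives as an obeyed \NSEP\ constraint with probability at least $\tfrac34 = \tfrac{2/3}{8/9}$, matching the target gap. Given a YES instance of \NSEP\ with optimal permutation $\pi^*$ satisfying $val(\pi^*)\ge (1-\eps)K$, I would build the caterpillar tree $T$ whose leaves, read left to right, follow the order of $\pi^*$. The relevant property of a caterpillar is that for any four leaves the \emph{induced} quartet is $xy\mid zw$ with $x,y$ the two earliest in $\pi^*$; hence the $8$ orderings of $\{a,b,c,d\}$ that induce $ab|cd$ are exactly the \SEP\ orderings. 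Every \NSEP\ constraint obeyed by $\pi^*$ is non-\SEP, so the caterpillar avoids the corresponding forbidden quartet, giving $val(T^*)\ge val(T)\ge val(\pi^*)\ge(1-\eps)K$.

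For the NO side, argue by contradiction. Starting from a tree $T$ with $val(T)>(\tfrac89+2\eps)K$, root $T$ at an arbitrary internal node $r$ and generate a random permutation $\pi$ by independently and uniformly permuting the three children of $r$, independently swapping the two children at every descendant internal node, and reading off the leaves left to right. The technical core is:

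\emph{Key Lemma.} If $T$ avoids the forbidden quartet $ab|cd$, then $\pi$ obeys the \NSEP\ constraint $ab|cd$ with probability at least $\tfrac34$.

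Granting the lemma, linearity of expectation yields $\EE[val(\pi)]\ge \tfrac34\,val(T)>\tfrac34\bigl(\tfrac89+2\eps\bigr)K=\bigl(\tfrac23+\tfrac32\eps\bigr)K$, so some permutation attains more than $(\tfrac23+\eps)K$ obeyed \NSEP\ constraints, contradicting the NO guarantee of the starting \NSEP\ instance.

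The main obstacle is proving the Key Lemma, which reduces to a case analysis on how the four leaves $\{a,b,c,d\}$ distribute across the three subtrees $A,B,C$ rooted at $r$, recursing when all four land inside a single child. The cases are: (i)~a $2$-$2$-$0$ split into a \emph{cross} pair (e.g.\ $\{a,c\}\in A$, $\{b,d\}\in B$), where all $8$ resulting relative orderings of the four labels turn out non-\SEP, giving probability $1$; (ii)~a $2$-$1$-$1$ split with a cross doubleton (e.g.\ $\{a,c\}\in A$, $b\in B$, $d\in C$), where direct enumeration of the $12$ equiprobable relative orderings yields exactly two \SEP\ violations (namely $bacd$ and $dcab$ in this example), giving probability $\tfrac{10}{12}=\tfrac56$; (iii)~a $3$-$1$ split at the LCA of the four leaves whose $3$-side further splits as cross-pair-versus-singleton, where enumeration of the $8$ relative orderings produces exactly two \SEP\ violations, giving probability $\tfrac68=\tfrac34$. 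The configurations in which the doubleton is $\{a,b\}$ or $\{c,d\}$ force the induced quartet to be $ab|cd$ itself, so $T$ does not avoid it and these do not enter the sum. The minimum over avoiding configurations is $\tfrac34$, attained by case (iii), which matches the gap exactly; the delicate bookkeeping lies in verifying the case enumeration exhausts all possibilities and that no intermediate recursive case drops below $\tfrac34$.
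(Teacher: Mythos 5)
Your proposal is correct and follows essentially the same route as the paper's proof: a reduction from \NSEP\ with the caterpillar construction for the YES case, and for the NO case a randomly reordered rooted projection together with the key claim that every avoided forbidden quartet yields an obeyed \NSEP\ constraint with probability at least $\tfrac34$, established by the same case analysis (cross $2$-$2$: probability $1$; cross $2$-$1$-$1$: $\tfrac56$; $3$-$1$ with cross-pair structure: $\tfrac34$; recursion when all four leaves fall in one subtree). The only cosmetic difference is that you randomize the root's three children over all $3!$ orders while the paper uses the three cyclic shifts, which does not affect the bounds.
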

\begin{proof}
We will make a reduction from the \NSEP\ problem. Start from a \textbf{YES} instance and consider the optimum permutation $\pi^*$. Construct an unrooted caterpillar tree $T$ with leaves the labels of $\pi^*$ as they appear in the permutation. It is easy to see that if a \NSEP\ constraint $ab|cd$ was disobeyed (hence successfully avoided) by the permutation, then the corresponding quartet constraint $ab|cd$ was also disobeyed (i.e., avoided) in the caterpillar tree $T$. For that, we can assume  w.l.o.g. that the elements appear with relative order $acbd$ in $\pi^*$ and observe that the paths from $a\to c$ and from $b\to d$ in $T$ are disjoint, so the quartet is disobeyed as we wanted.

The harder case is the  \textbf{NO} instance. For that we will show how from a tree $T$ with high value, we can construct a permutation $\pi$ with high value. Specifically, we will show that if $val(T)> (\tfrac89 +\tfrac43\epsilon)K$ then we can find $\pi$ with $val(\pi)> \tfrac34 (\tfrac89 +\tfrac43\epsilon)K = (\tfrac23 +\epsilon)K$, a contradiction since we started from a  \textbf{NO} instance.

The tree $T$ is an unrooted tree on $n\ge 4$ leaves, whose internal nodes have degree exactly 3. We follow the same algorithm to generate the rooted $T_r$ and the final permutation $\pi$ as above. the notation for $A,B,C$ is the same as previously. We show that each quartet $q=q_1q_2|q_3q_4$ disobeyed by $T$ will be disobeyed in $\pi$ with probability $p\ge \tfrac34$. We have several cases depending on the labels $q_1,q_2,q_3,q_4$:
\begin{itemize}
    \item If $q_1,q_3\in A$ and $q_2\in B$ and $q_4\in C$: First notice that indeed quartet $q=q_1q_2|q_3q_4$ is disobeyed by the unrooted tree since it instead obeys $q_1q_3|q_2q_4$. We show that the probability that $\pi$ disobeys $q$ is $\tfrac56$. If the random choices at the root $r$ produce $ABC$ or $BCA$, then with probability $1$ the quartet $q$ is disobeyed after the projection. For example, if the realization is $ABC$ notice that either $q_3$ will be between $q_1,q_2$ or $q_1$ will be between $q_3,q_4$, thus disobeying the corresponding \NSEP\ constraint. Symmetrically, we handle the scenario where the realization was $BCA$. However, with probability $\tfrac13$ the realization at the root was $CAB$ and now the status of the quartet is determined by the random choice at the lowest common ancestor of $q_1,q_3$. With probability $\tfrac12$ label $q_1$ precedes $q_3$, thus giving the ordering $q_4q_1|q_3q_2$ disobeying $q$. In total $q$ is avoided with probability $\tfrac23+\tfrac16=\tfrac56$.
    \item If $q_1,q_3\in A$ and $q_2,q_4\in B$: This is the easiest case as every quartet of this form will be disobeyed in $\pi$ with probability 1. This follows as labels from $A$ will be consecutive in $\pi$ and similarly for $B$.

    \item If $q_1,q_2,q_3\in A$ and $q_4\in B$: The status of this quartet only depends on how the elements $q_1,q_2,q_3$ are placed. Specifically, depending on the random choices at the root $r$, $q_4$ can appear either first (if $BA$ was chosen) or last (if $AB$ was chosen) among the 4 elements in $\pi$. If the former is true, then $q_3$ should appear third or fourth  and we get $q_4\cdot|q_3\cdot$ or $q_4\cdot| \cdot q_3$, otherwise $q_3$ should appear first or second and we get $q_3\cdot|\cdot q_4$ or $\cdot q_3| \cdot q_4$. We need to compute the probability for each of these events. By the fact that the tree $T$ disobeys $q$, we can assume w.l.o.g. that label $q_1$ is the closest to $q_3$, otherwise we just rename $q_2$ as $q_1$ and vice versa. We get that the lowest common ancestor $A_{13}$ of $q_1,q_3$ in $A$ is strictly lower than the lowest common ancestor $A_{12}$ of $q_1,q_2$ in $A$ (in terminology of triplets consistency we have $q_1q_3|q_2$). W.l.o.g. assume that $BA$ was chosen at the root $r$, so $q_4$ will appear first. By the random choice in our algorithm, $A_{12}$ placed $q_2$ at the left child (hence second among the 4 elements) with probability $\tfrac12$ and the quartet $q$ is disobeyed. If instead our algorithm placed $q_2$ at the right child (and hence fourth in the ordering), there is still $\tfrac14$ probability of placing $q_3$ at the right child of $A_{13}$. This means that with probability $\tfrac12+\tfrac14=\tfrac34$, the projected $\pi$ disobeys $q$ as promised by the theorem.

    \item If $q_1,q_2,q_3,q_4\in A$: This case essentially reduces to the analyses of the previous two cases. Just find the lowest common ancestor $A_1$ of all 4 labels $q_1,q_2,q_3,q_4$ in $T_r$. If two of the labels belong to one child and the remaining to the other child, then the quartet will be disobeyed with probability $1$, irrespectively of the random choices at $A_1$ (similar to the second case above). Moreover, if one child contains three of the 4 elements, then the analysis is the same as the previous case yielding a probability of $\tfrac34$.
\end{itemize}
The other cases are symmetric for $B,C$. This proves that if a quartet is disobeyed by the tree then with probability $\tfrac34$ will be disobeyed in $\pi$ which means that $val(\pi)> \tfrac34 (\tfrac89 +\tfrac43\epsilon)K = (\tfrac23 +\epsilon)K$ by linearity of expectation. This contradicts the fact that we were given a \textbf{NO} instance.
\end{proof}

\end{document}